\documentclass[preprint,12pt]{elsarticle}
\biboptions{sort&compress}
\usepackage[margin=1in]{geometry}

\usepackage{amssymb}
\usepackage{amsmath}
\usepackage{amsthm}
\usepackage{algorithm}
\usepackage{algpseudocode}
\usepackage{placeins}
\usepackage{xcolor}
\newtheorem{theorem}{Theorem}

\theoremstyle{definition}
\newtheorem{definition}{Definition}
\theoremstyle{remark}
\newtheorem{remark}{Remark}

\usepackage{lineno}

\usepackage{hyperref}
\hypersetup{
  pdfauthor={Juan Diego Toscano; Sascha Ranftl; Zhiwei Gao; Li-Lian Wang; George Em Karniadakis}
}

\journal{Journal of Computational Physics}

\begin{document}

\begin{frontmatter}



\title{Tractable Inference with Kolmogorov-Arnold Networks}


\author[inst1]{Juan Diego Toscano\fnref{eq}}
\author[inst1]{Sascha Ranftl\corref{cor1}\fnref{eq,current-address}}
\author[inst1]{Zhiwei Gao}
\author[inst3]{Li-Lian Wang}
\author[inst1]{George Em Karniadakis\corref{cor1}}

\affiliation[inst1]{organization={Division of Applied Mathematics, Brown University},
            city={Providence},
            postcode={02912},
            state={RI},
            country={USA}}
\affiliation[inst3]{organization={Division of Mathematical Sciences, Nanyang Technological University},
            postcode={637371},
            country={Singapore}}

\cortext[cor1]{Corresponding authors: sranftl@purdue.edu, george\_karniadakis@brown.edu}
\fntext[eq]{These authors contributed equally.}
\fntext[current-address]{Current address: Purdue University, 585 Purdue Mall, West Lafayette, IN 47907, USA}

\begin{abstract}
Machine-learning surrogates in computational science often require not only accurate predictions but also expectations, variances, conserved integrals, and normalization constants. Conventional neural networks, including Kolmogorov--Arnold networks (KANs), generally evaluate these quantities through Monte Carlo sampling or multidimensional quadrature, making repeated inference costly, particularly in high dimensions. We introduce KANs with Tractable Operators for Uncertainty and Constraints (TOUCKANs), whose finite-degree polynomial outer functions yield a decomposable sum--product representation. Under factorizing input measures, expectations and higher moments then reduce algebraically to deterministic recursions over one-dimensional leaf integrals. Independent and shared inner-block variants form a parameter-efficiency hierarchy, and a universal-approximation proof for its most restricted member extends to the complete class. The same deterministic operators allow conserved integrals and normalization to be imposed throughout training. Across uniform and Gaussian benchmarks through 50 dimensions, deterministic mean and variance errors remained below the median fixed-budget Monte Carlo errors in every case. TOUCKANs also conserved Cahn--Hilliard mass to machine precision and imposed normalization and orthogonality through deterministic moments, enabling joint computation of the lowest four Schr\"odinger eigenpairs in two, three, and five dimensions. TOUCKANs therefore connect KAN approximation with tractable probabilistic circuits, making statistical operators and integral constraints directly available within an expressive neural surrogate.
\end{abstract}

%

\begin{keyword}
Tractable inference \sep Kolmogorov-Arnold networks \sep uncertainty quantification \sep probabilistic circuits \sep sum-product networks \sep scientific machine learning


\end{keyword}

\end{frontmatter}



\section{Introduction}
\label{sec:introduction}
Kolmogorov--Arnold Networks (KANs)~\cite{liu2025kan} are neural architectures motivated by the Kolmogorov--Arnold representation theorem (KART)~\cite{kolmogorov1963representation}, which represents continuous multivariate functions through compositions of univariate functions and addition. Rather than assigning fixed nonlinearities to nodes, KANs place learnable univariate functions on the edges of the computational graph. Since their introduction, KANs have expanded into architectures based on alternative basis functions, deep compositions, and low-rank parameterizations~\cite{guilhoto2025deep,chao2026rbf,bozorgasl2024wav,Gao2025}, while recent studies have established broader approximation conditions and surveyed the growing family~\cite{ismailov2026necessary,faroughi2026kolmogorov,noorizadegan2026practitioner}. These developments have supported applications in physics-informed machine learning~\cite{liu2025kan,shukla2024comprehensive,jacob2025spikans}, operator learning~\cite{ABUEIDDA2025,lee2025kano}, and uncertainty quantification~\cite{ZhiweiGao2025,giroux2025uncertainty,mollaali2025conformalizedkans}.

At the same time, many tasks in computational science and engineering require not only accurate function approximation but also efficient evaluation of statistical or integral quantities of the learned surrogate. These quantities include expectations, moments, variances, sensitivity indices, conserved integrals, and normalization constants~\cite{smith2013,saltelli2008,sobol2001global,sudret2008Global}. Conventional neural networks, including KANs, generally evaluate them through Monte Carlo sampling or multidimensional quadrature~\cite{gal2016dropout,lakshminarayanan2017simple,wilson2020bayesian,abdar2021review,psaros2022Uncertainty}. The required evaluations become costly when the dimension is high or when the same functional must be recomputed throughout training or inference.

Low-dimensional factorization provides one route around this bottleneck. Separable and tensorized neural networks combine low-dimensional factors, while tractable probabilistic circuits arrange them through recursive sums and decomposable products~\cite{wang2022tensor,cho2024separable,choi2020Probabilistic}. Recent KAN-based separable models use related structure to reduce physics-informed training cost~\cite{jacob2025spikans}, whereas low-rank KAN variants target parameter-efficient adaptation~\cite{Gao2025}. These methods address valuable but distinct objectives and do not make statistical operators tractable through the KAN architecture itself.

To address this gap, we introduce KANs with Tractable Operators for Uncertainty and Constraints (TOUCKANs). The key idea is to restrict the KAN representation so that it acquires the decomposable multiplicative--additive structure of tractable probabilistic circuits, including Deep Polynomial Chaos Expansions~\cite{exenberger2025deep} and Sum--Product Networks~\cite{poon2012SumProduct,peharz2015theoretical}. Specifically, applying finite-degree polynomials to sums of univariate functions allows the multinomial theorem to rewrite the network as a decomposable sum--product representation. Under factorizing input measures, high-dimensional moments then reduce to one-dimensional leaf moments assembled through deterministic recursions. The same quantities can be incorporated into the architecture to enforce conserved integrals and normalization. Tractable inference therefore follows from the architecture of the learned surrogate rather than from a separate sampling procedure.

We realize this structure within the two-block K\r{u}rkov\'a--Kolmogorov--Arnold network (KKAN) formulation~\cite{toscano2025kurkovakans}. The resulting inner-block hierarchy ranges from independent per-dimension networks to shared formulations motivated by the Lorentz and Laczkovich representation theorems~\cite{lorentz1962metric,laczkovich2021superposition}. The shared formulations reduce the parameter footprint without changing the moment recursions, while preserving universal approximation throughout the hierarchy. Their smaller footprint also makes strong second-order optimizers such as SSBroyden computationally feasible~\cite{urban2025unveiling,jnini2026curvature}. Such optimizers have enabled physics-informed networks to attain near machine-precision accuracy~\cite{toscano2025athena,toscano2026graft}, but their computational cost generally restricts them to models with relatively few parameters.

The main contributions of this work are as follows:
\begin{enumerate}
    \item We show that finite-degree polynomial outer functions induce decomposable sum--product representations in KANs and derive deterministic recursions for expectations and variances from one-dimensional leaf moments. The corresponding theorem extends the construction to arbitrary depth and polynomial observables.
\smallskip
    \item We construct independent, full-head, diagonal, and rank-one inner parameterizations with progressively smaller parameter footprints. By proving density of the most restricted rank-one class, we establish universal approximation for the complete hierarchy.
\smallskip
    \item We use the deterministic integral evaluations to impose conserved linear functionals and normalization by construction. These constraints hold for every parameter value under the quadrature-discretized functional and remain differentiable during training.
\smallskip
    \item Across uniform and Gaussian benchmarks through $D=50$, deterministic TOUCKAN moment errors remain below the median fixed-budget Monte Carlo errors. The same framework conserves mass to machine precision for the Cahn--Hilliard equation and imposes normalization and orthogonality through deterministic moments, enabling joint computation of the lowest four Schr\"odinger eigenpairs in $D=2$, $3$, and $5$.
\end{enumerate}

Together, these results connect KAN approximation with tractable probabilistic circuits, making statistical operators and integral constraints directly available during inference and training.

\section{From Kolmogorov--Arnold superposition to tractable inference}
\label{sec:tractability-principle}

The Kolmogorov--Arnold representation theorem (KART) states that for any continuous function $f : [0,1]^D \rightarrow \mathbb{R}$, there exist continuous univariate functions $\phi_{q,d} : \mathbb{R} \rightarrow \mathbb{R}$ and $\Phi_q : \mathbb{R} \rightarrow \mathbb{R}$ such that~\cite{kolmogorov1963representation}
\begin{align} \label{eq:def-KAST}
f(\mathbf{x})
=
\sum_{q=1}^{2D+1}
\Phi_q
\left(
\sum_{d=1}^{D}
\phi_{q,d}(x_d)
\right).
\end{align}
Here, $\phi_{q,d}$ are the univariate inner functions, and $\Phi_q$ are the outer functions. Although KART provides an exact representation, the inner functions arising from its constructive proofs can be highly non-smooth and are therefore unsuitable as trainable components~\cite{Girosi1989}. To address this problem, K\r{u}rkov\'a reformulated the theorem as an approximation result in which the inner and outer functions are represented by finite sums of sigmoidal units, while preserving the two-block composition~\cite{kurkova1992kolmogorov}. KKANs generalized this construction by allowing the inner and outer univariate functions to be approximated using broader learnable function classes, including neural networks and basis expansions~\cite{toscano2025kurkovakans}. KART also inspired modern multilayer architectures such as KANs and their variants~\cite{liu2025kan,guilhoto2025deep,chao2026rbf,bozorgasl2024wav,jacob2025spikans,Gao2025,ABUEIDDA2025,lee2025kano,ZhiweiGao2025,giroux2025uncertainty,mollaali2025conformalizedkans,faroughi2026kolmogorov,noorizadegan2026practitioner}; \ref{app:background-kan} details their relation to the two-block KKAN construction used here.

These approximation results address representation accuracy, but scientific applications often also require expectations, moments, variances, and integral constraints. These quantities can be expressed through observables of the form
\begin{align}
\mathbb{E}\big[ h(f(\mathbf{x})) \big]
=
\int h(f(\mathbf{x}))\, \rho(\mathbf{x}) \, \mathrm{d}V_{\mathbf{x}},
\label{eq:marginalization}
\end{align}
where $\rho(\mathbf{x})$ is the input density and $h$ specifies the quantity of interest; for example, $h(z)=z$ gives the first moment and $h(z)=z^2$ gives the second moment. General neural representations typically evaluate such quantities through sampling or multidimensional quadrature. The cost becomes substantial when the dimension is high or the functional must be recomputed during training, inference, or uncertainty propagation.

One established route to reducing this cost is to construct the representation from low-dimensional factors. Separable and tensorized neural networks combine products of univariate or low-dimensional functions through summation~\cite{wang2022tensor,cho2024separable},
\begin{align} \label{eq:separable-form}
f(\mathbf{x})
=
\sum_{r} \prod_{d=1}^{D} g_{r,d}(x_d).
\end{align}
Tractable probabilistic circuits generalize this organization by arranging low-dimensional leaves through recursive sums and products~\cite{poon2012SumProduct,choi2020Probabilistic}. Their central structural condition is decomposability: the children of every product node depend on pairwise disjoint variable scopes. Under a factorizing input density, each decomposable product integrates as a product of lower-dimensional integrals, reducing the full integral to low-dimensional computations~\cite{peharz2015theoretical}. The corresponding circuit definitions and structural conditions are provided in \ref{app:background-circuits}.

The same reduction can be extended to K\r{u}rkov\'a and KKAN variants. Since their inner approximants remain univariate, decomposability can be enforced through the outer functions. Consider the quadratic choice $\Phi_q(t)=t^2$. Substitution into \eqref{eq:def-KAST} gives
\begin{align}
f(\mathbf{x})
&=
\sum_{q=1}^{2D+1}
\left(
\sum_{d=1}^{D}\phi_{q,d}(x_d)
\right)^2 \nonumber\\
&=
\sum_{q=1}^{2D+1}
\left[
\sum_{d=1}^{D}\phi_{q,d}^{2}(x_d)
+
2\sum_{d<d'}
\phi_{q,d}(x_d)\phi_{q,d'}(x_{d'})
\right].
\label{eq:expansion-intuition}
\end{align}
The first sum contains only univariate squares, while each cross term is a product of functions defined on distinct coordinates. The expanded representation is therefore decomposable.

Assume now that the input density factorizes as $\rho(\mathbf{x})=\prod_{d=1}^{D}\rho_d(x_d)$, and define the one-dimensional leaf moments
\begin{align}
\mu_{q,d}^{(k)}
=
\int
\phi_{q,d}^{k}(x_d)\rho_d(x_d)\,\mathrm{d}x_d.
\end{align}
Taking the expectation of \eqref{eq:expansion-intuition} and applying the factorization of $\rho$ yields
\begin{align}
\mathbb{E}_{\rho}[f(\mathbf{x})]
=
\sum_{q=1}^{2D+1}
\left[
\sum_{d=1}^{D}\mu_{q,d}^{(2)}
+
2\sum_{d<d'}
\mu_{q,d}^{(1)}\mu_{q,d'}^{(1)}
\right].
\label{eq:quadratic-moment-factorization}
\end{align}
Hence, the $D$-dimensional expectation depends only on univariate leaf moments of orders one and two. This reduction is algebraically exact; any remaining approximation arises only from the quadrature used to evaluate those moments.

For finite-degree polynomial outer functions, the multinomial theorem produces the same decomposable structure: every expanded monomial is a product of powers of univariate inner functions, and its expectation separates under a factorizing density. Recursive expansion extends the argument to deeper polynomial compositions and polynomial observables $h$, so higher moments reduce to the corresponding leaf moments whenever they exist; in particular, the first and second moments determine the variance. \ref{app:tractability-details} gives the complete higher-order and deep-layer derivations, the circuit implementation, and the complexity bounds for the explicitly expanded representation, while Theorem~\ref{thm:tkan-tractability} establishes the result for arbitrary depth and polynomial observables. 

\section{Method}
\label{sec:method}

\subsection{A tractable KAN architecture}
\label{sec:tractable-architecture}

The preceding section identifies two structural properties that are sufficient to obtain a decomposable sum--product representation: (i) the outer nonlinearities must be finite-degree polynomials, so that the multinomial theorem applies; and (ii) the argument of the innermost outer nonlinearity must be a sum of univariate functions, so that the factors in every product generated by the expansion have pairwise disjoint scopes. We incorporate these properties into a tractable modification of the two-block KKAN architecture, which we call TOUCKAN~\cite{toscano2025kurkovakans}.

The resulting TOUCKAN architecture comprises an inner block, a combination layer, and an outer block (Figure~\ref{fig:tkan-architecture}),
\begin{align} \label{eq:tkan-architecture}
f(\mathbf{x})
=
\sum_{q=1}^{Q}\Phi_q(\xi_q),
\qquad
\xi_q
=
\frac{1}{D}\sum_{d=1}^{D}\Psi_{d,q}(x_d).
\end{align}
Here, $Q$ denotes the number of branches, $\Psi_{d,q}$ denotes a trainable univariate inner branch, and $\Phi_q$ denotes a trainable outer function. We use $\Psi_{d,q}$ to distinguish these trainable branches from the exact KART inner functions $\phi_{q,d}$ in \eqref{eq:def-KAST}. The normalized sums $\xi_q$ define the combination variables. Because every $\Psi_{d,q}$ depends on a single coordinate, each $\xi_q$ remains a sum of univariate functions, satisfying condition (ii) by construction.

\begin{figure}[!ht]
    \centering
    \includegraphics[width=0.99\linewidth]{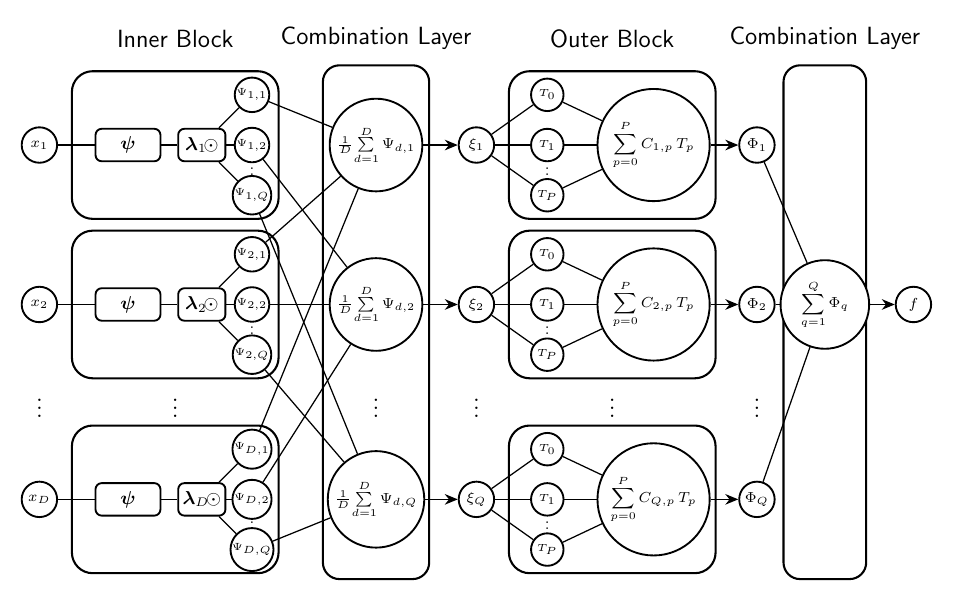}
    \caption{TOUCKAN architecture \eqref{eq:tkan-architecture}, shown in the parameter-efficient diagonal form \eqref{eq:diag-specialization}. Every input dimension $x_d$ is processed by the \emph{same} shared network $\boldsymbol{\psi} : \mathbb{R} \to \mathbb{R}^{Q}$ and specialized by the per-dimension diagonal scaling $\boldsymbol{\lambda}_d$, producing the bounded univariate branches $\Psi_{d,q} = \tanh\!\big(\lambda_{d,q}\,\psi_q(x_d)\big)$. The first combination layer forms the normalized sums $\xi_q = \frac{1}{D}\sum_{d=1}^{D}\Psi_{d,q} \in (-1,1)$, each outer block evaluates a Chebyshev expansion $\Phi_q(\xi_q) = \sum_{p=0}^{P} C_{q,p}\,T_p(\xi_q)$, and the final combination layer sums all branches to produce $f$.}
    \label{fig:tkan-architecture}
\end{figure}

The branches remain free modeling choices provided the required univariate leaf moments exist. In the most direct instantiation, each dimension carries its own independent inner network, and the $Q$ output channels of that network are the branches $\Psi_{d,1},\dots,\Psi_{d,Q}$. Denoting the independent network associated with dimension $d$ by $\boldsymbol{\psi}_d=(\psi_{d,1},\dots,\psi_{d,Q})$, the inner branches and outer functions are
\begin{align}
\Psi_{d,q}(x_d)
&=
\tanh\!\big(\psi_{d,q}(x_d)\big),
\\
\Phi_q(\xi_q)
&=
\sum_{p=0}^{P} C_{q,p}\,T_p(\xi_q).
\label{eq:tkan-outer}
\end{align}
The finite Chebyshev expansion makes each $\Phi_q$ a polynomial, satisfying condition (i). To retain the natural bounded domain of the Chebyshev basis, the combination variable $\xi_q$ must remain in $(-1,1)$. Conventional Chebyshev-based KAN formulations impose this bound by applying $\tanh$ after the sum over dimensions~\cite{toscano2025kurkovakans}. That placement introduces a nonpolynomial function of a multivariate sum into the outer map, compromising decomposability. TOUCKAN instead applies $\tanh$ within each univariate branch. Consequently, every $\Psi_{d,q}$ remains univariate and bounded, while the normalized average in \eqref{eq:tkan-architecture} confines $\xi_q$ to $(-1,1)$. The construction therefore satisfies condition (ii) while providing the appropriate input domain for the outer Chebyshev expansion. Although this direct construction satisfies both requirements, it assigns a complete inner network to every input dimension, causing the inner-block parameter count to grow proportionally with $D$.

To address this problem, we turn to Lorentz's variant of KART~\cite{lorentz1962metric}, which states that the inner functions can be chosen independently of the input dimension up to per-dimension constants,
\begin{align} \label{eq:lorentz-form}
f(\mathbf{x})
=
\sum_{q=1}^{Q}
g\Big(
\sum_{d=1}^{D} \lambda_d\,\psi_q(x_d)
\Big).
\end{align}
In this representation, the same inner functions $\psi_q$ are used for every input dimension, while the constants $\lambda_d$ distinguish their contributions. Laczkovich~\cite{laczkovich2021superposition} extended this result by allowing the constants to depend on both the dimension and the branch through $\lambda_{d,q}$. Following this construction, we replace the $D$ independent inner networks with a single trainable network $\boldsymbol{\psi}=(\psi_1,\dots,\psi_Q):\mathbb{R}\to\mathbb{R}^{Q}$ that is evaluated on every input dimension (see Figure~\ref{fig:tkan-architecture}). The outer block remains unchanged, with one branch-specific function $\Phi_q$ for each branch as defined in \eqref{eq:tkan-outer}.

In the resulting formulation, each dimension $d$ carries one trainable vector $\boldsymbol{\lambda}_d\in\mathbb{R}^{Q}$, and its branches are obtained through a diagonal specialization of the shared functions,
\begin{align} \label{eq:diag-specialization}
\Psi_{d,q}(x_d)
&=
\tanh\!\big(\lambda_{d,q}\,\psi_q(x_d)\big),
\\
\boldsymbol{\Psi}_{d}(x_d)
&=
\tanh\!\big(\mathrm{diag}(\boldsymbol{\lambda}_d)\,\boldsymbol{\psi}(x_d)\big).
\end{align}
Here, $\mathrm{diag}(\boldsymbol{\lambda}_d)$ denotes the diagonal matrix with entries $\lambda_{d,1},\dots,\lambda_{d,Q}$, and $\tanh$ acts elementwise. Every dimension therefore evaluates the same $Q$ functions and rescales their outputs through $\lambda_{d,q}$. These coefficients are the trainable counterparts of Laczkovich's constants. Since each branch remains a univariate function of $x_d$, condition (ii) is preserved.

Sharing the inner network also reduces the parameter count. Let $N_{\boldsymbol{\psi}}$ denote the number of trainable parameters in one complete inner network, and let $H$ denote the width of its trunk. The independent specialization assigns one complete network to every input dimension and requires $D\,N_{\boldsymbol{\psi}}$ parameters. The full-head specialization shares the trunk but retains one dimension-specific linear head $W_d\in\mathbb{R}^{Q\times H}$, requiring $N_{\boldsymbol{\psi}}+D\,Q\,H$ parameters. The diagonal specialization \eqref{eq:diag-specialization} replaces each full head with $Q$ scalars and requires $N_{\boldsymbol{\psi}}+D\,Q$ parameters. Finally, the rank-one specialization imposes $\lambda_{d,q}=\lambda_d$ and requires $N_{\boldsymbol{\psi}}+D$ parameters, recovering the sharing pattern of the Lorentz--Laczkovich form in \eqref{eq:lorentz-form}.

Together, these choices define a hierarchy from independent per-dimension networks to the compact rank-one formulation. The inner parameterization can therefore be selected according to the required flexibility and available parameter budget. All four parameterizations retain bounded univariate branches and the polynomial outer functions in \eqref{eq:tkan-outer}, so conditions (i) and (ii) hold throughout the hierarchy. The rank-one class is contained in every richer parameterization. Its universal approximation property therefore extends to the complete hierarchy (Theorem~\ref{thm:tkan-uat} and Remark~\ref{rem:touckan-hierarchy-uat}). Section~\ref{sec:param-efficiency} compares their finite-parameter accuracy and cost quantitatively.

\subsection{Exact expectation via binomial moment recursion}
\label{sec:moment-recursion}

We assume a factorizing input density, $\rho(\mathbf{x})=\prod_{d=1}^{D}\rho_d(x_d)$. The expectation of every TOUCKAN parameterization in Section~\ref{sec:tractable-architecture} then reduces exactly to one-dimensional leaf moments. The factorization makes the coordinates $x_1,\dots,x_D$ mutually independent, enabling the dimensionwise recursion below. The complete derivation, numerical-stability analysis, and pseudocode (Algorithm~\ref{alg:tkan-expectation}) are provided in \ref{app:algorithms}.

Each Chebyshev polynomial admits the monomial form $T_p(\xi) = \sum_{j=0}^{p} a_{p,j}\,\xi^{j}$ with fixed integer coefficients $a_{p,j}$. Substituting it into the outer expansion \eqref{eq:tkan-outer} and applying linearity of expectation to \eqref{eq:tkan-architecture} gives
\begin{align} \label{eq:tkan-mean-reduction}
\mathbb{E}\big[f(\mathbf{x})\big]
=
\sum_{q=1}^{Q}\sum_{p=0}^{P} C_{q,p}
\sum_{j=0}^{p} a_{p,j}\,
\mathbb{E}\big[\xi_q^{\,j}\big],
\end{align}
so the entire computation reduces to the moments $m_{q,p} = \mathbb{E}[\xi_q^{\,p}]$ of each combination variable for $p = 0,\dots,P$. Since expanding the power $\xi_q^{\,p}$ directly by the multinomial theorem would generate $\binom{D+p-1}{p}$ cross terms, the moments $m_{q,p}$ are assembled one dimension at a time. To simplify the notation, we absorb the normalization of the combination layer into the branches by defining $\widetilde{\Psi}_{d,q}=\Psi_{d,q}/D$, which gives $\xi_q=\sum_{d=1}^{D}\widetilde{\Psi}_{d,q}(x_d)$. The moments of the partial sums $S_{j,q}$ then obey the binomial recursion
\begin{align}
\mathbb{E}\big[S_{j,q}^{\,p}\big]
=
\sum_{k=0}^{p}\binom{p}{k}\,
\mathbb{E}\big[S_{j-1,q}^{\,p-k}\big]\,
\mu_{j,q}^{(k)},
\qquad
S_{j,q}
=
\sum_{d=1}^{j}\widetilde{\Psi}_{d,q}(x_d),
\qquad
\mu_{d,q}^{(k)}
=
\mathbb{E}\big[\widetilde{\Psi}_{d,q}^{\,k}(x_d)\big].
\label{eq:tkan-binomial-recursion}
\end{align}
Notice that we can apply this recursion since $S_{j-1,q}$ and $\widetilde{\Psi}_{j,q}(x_j)$ are independent under a factorizing density. It is initialized by $\mathbb{E}[S_{1,q}^{\,p}]=\mu_{1,q}^{(p)}$ and terminates at $S_{D,q}=\xi_q$, yielding the moments $m_{q,p}$. The implementation uses the equivalent centered recursion derived in \ref{app:alg-mean} to keep the intermediate sums bounded.

Each leaf moment $\mu_{d,q}^{(k)}$ is a univariate integral evaluated with an $M$-node Gaussian quadrature rule matched to the marginal $\rho_d$. We use Gauss--Legendre quadrature for uniform marginals and Gauss--Hermite quadrature for Gaussian marginals. These leaf integrals are the only quantities discretized in the moment calculation. The recursion \eqref{eq:tkan-binomial-recursion} and reduction \eqref{eq:tkan-mean-reduction} are algebraically exact in the resulting leaf moments. The computed expectation is therefore exact for the quadrature-induced discrete product measure. Its continuum error is the corresponding one-dimensional quadrature error, which decays spectrally in $M$ for the smooth branches used here.

The quadrature rule controls only the error in evaluating the trained surrogate. For an underlying target $F$ approximated by $f$, the error of the computed value $\widehat{\mathbb{E}}_{M}[f]$ decomposes into a surrogate approximation term and a quadrature term,
\begin{align} \label{eq:tkan-error-decomposition}
\big|\mathbb{E}[F] - \widehat{\mathbb{E}}_{M}[f]\big|
\le
\mathbb{E}\big[|F - f|\big]
+
\big|\mathbb{E}[f] - \widehat{\mathbb{E}}_{M}[f]\big|.
\end{align}
The first term measures the error of the fitted surrogate, whereas the second is the one-dimensional quadrature error. Because the quadrature term decays spectrally with $M$, it becomes negligible once the leaf integrals are sufficiently resolved, leaving the fitted-surrogate error as the dominant contribution. Further quadrature refinement cannot reduce this error. Improving the estimate instead requires a more accurate surrogate fit, which motivates the strong optimization procedures used in our experiments. \ref{app:sensitivity} measures both contributions against exact and independent high-resolution references.

Substituting the moments $m_{q,p}$ into \eqref{eq:tkan-mean-reduction} yields the expectation of the surrogate without Monte Carlo sampling or multidimensional quadrature. The calculation requires $DM$ inner-block evaluations and $\mathcal{O}(QDPM+QDP^{2})$ arithmetic operations. For fixed $P$ and $M$, the arithmetic cost is proportional to $DQ$. The resulting recursion realizes the general leaf-replacement identity \eqref{eq:composition-operator-commutator}, established in \ref{app:tractability-details}, for the TOUCKAN architecture.

\subsection{Exact variance via cross-moment recursion}
\label{sec:variance-recursion}

The same moment framework also gives the variance, $\mathrm{Var}[f]=\mathbb{E}[f^{2}]-\big(\mathbb{E}[f]\big)^{2}$. Computing the second moment $\mathbb{E}[f^{2}]$ extends the recursion of Section~\ref{sec:moment-recursion} from individual combination variables to pairs. \ref{app:algorithms} provides the complete derivation and pseudocode (Algorithm~\ref{alg:tkan-variance}).

Collecting powers of the Chebyshev expansion \eqref{eq:tkan-outer} writes each outer block in monomial form, $\Phi_q(\xi_q) = \sum_{j=0}^{P}\beta_{q,j}\,\xi_q^{\,j}$. Squaring $f = \sum_{q}\Phi_q(\xi_q)$ and taking the expectation termwise gives
\begin{align} \label{eq:tkan-second-moment-reduction}
\mathbb{E}\big[f^{2}(\mathbf{x})\big]
=
\sum_{q,q'=1}^{Q}
\sum_{j,j'=0}^{P}
\beta_{q,j}\,\beta_{q',j'}\;
m_{q,q'}^{(j,j')},
\qquad
\beta_{q,j}
=
\sum_{p=j}^{P} C_{q,p}\, a_{p,j},
\qquad
m_{q,q'}^{(j,j')}
=
\mathbb{E}\big[\xi_q^{\,j}\,\xi_{q'}^{\,j'}\big].
\end{align}
We call $m_{q,q'}^{(j,j')}$ the cross-moments between pairs of combination variables. Although the input coordinates are independent, $\xi_q$ and $\xi_{q'}$ depend on the same coordinates and are therefore generally dependent. Their cross-moments cannot in general be assembled from the individual moments computed in Section~\ref{sec:moment-recursion}, so the binomial recursion must be extended to pairs.

The joint moments of the partial-sum pairs obey the bivariate recursion
\begin{align}
\mathbb{E}\big[S_{j,q}^{\,p}\, S_{j,q'}^{\,p'}\big]
=
\sum_{a=0}^{p}\sum_{b=0}^{p'}
\binom{p}{a}\binom{p'}{b}\,
\mathbb{E}\big[S_{j-1,q}^{\,p-a}\, S_{j-1,q'}^{\,p'-b}\big]\,
\nu_{j,q,q'}^{(a,b)},
\qquad
\nu_{d,q,q'}^{(k,k')}
=
\mathbb{E}\big[\widetilde{\Psi}_{d,q}^{\,k}(x_d)\,
\widetilde{\Psi}_{d,q'}^{\,k'}(x_d)\big].
\label{eq:tkan-cross-recursion}
\end{align}
This is the bivariate analogue of \eqref{eq:tkan-binomial-recursion}, which is recovered by setting $p'=0$. The recursion is initialized by $\mathbb{E}[S_{1,q}^{\,p}S_{1,q'}^{\,p'}]=\nu_{1,q,q'}^{(p,p')}$ and delivers all cross-moments $m_{q,q'}^{(j,j')}$ at the final coordinate.

Despite the bivariate recursion, each joint leaf moment $\nu_{d,q,q'}^{(k,k')}$ remains a univariate integral because both factors depend only on the scalar coordinate $x_d$. These moments are evaluated with the same measure-matched quadrature used for the marginal leaf moments in Section~\ref{sec:moment-recursion}.

Substituting the cross-moments into \eqref{eq:tkan-second-moment-reduction} and subtracting the squared expectation computed in Section~\ref{sec:moment-recursion} yields the variance,
\begin{align} \label{eq:tkan-exact-variance}
\mathrm{Var}\big[f(\mathbf{x})\big]
=
\sum_{q,q'=1}^{Q}
\sum_{j,j'=0}^{P}
\beta_{q,j}\,\beta_{q',j'}\,
m_{q,q'}^{(j,j')}
-
\Big(\mathbb{E}\big[f(\mathbf{x})\big]\Big)^{2}.
\end{align}
As in Section~\ref{sec:moment-recursion}, the cross-moment recursion is algebraically exact in the joint leaf moments. The computed variance is therefore exact for the quadrature-induced discrete product measure, while its continuum integration error is inherited from the same one-dimensional quadrature rules. The calculation involves no Monte Carlo sampling or multidimensional integration and requires $\mathcal{O}(Q^{2}DP^{2}M+Q^{2}DP^{4})$ arithmetic operations. For fixed $P$ and $M$, this cost is proportional to $DQ^{2}$. The construction extends to the $r$-th moment of $f$ by propagating joint moments of $r$-tuples of combination variables. For fixed $r$, the resulting cost remains polynomial in $D$, $Q$, and $P$. The first- and second-moment recursions provide the algebraic machinery used below to enforce conserved integrals and normalization.

\subsection{Exact enforcement of conserved integrals}
\label{sec:exact-mass}

The expectation recursion also makes it possible to enforce linear integral constraints. Consider a function $u(t,\mathbf{x})$ on a box domain $\Omega=\prod_{d=1}^{D}[a_d,b_d]$ whose weighted spatial integral must remain constant in time,
\begin{align} \label{eq:conserved-functional}
\mathcal{Q}_w[u](t)
=
\int_{\Omega} w(\mathbf{x})\, u(t,\mathbf{x})\, d\mathbf{x}
=
\mathcal{Q}_0
\qquad \text{for all } t,
\end{align}
where the weight factorizes as $w(\mathbf{x})=\prod_{d=1}^{D}w_d(x_d)$. The choice $w\equiv1$ gives the total mass, while $w(\mathbf{x})=x_e$ gives the first spatial moment in coordinate $e$.
For a TOUCKAN surrogate, the same dimensionwise recursion evaluates $\mathcal{Q}_w[f](t)$. Let $f(t,\mathbf{x})$ follow \eqref{eq:tkan-architecture} with input vector $(t,x_1,\dots,x_D)$, so that the complete input dimension is $D+1$ and time has index $d=0$. Each combination variable then contains one time branch and one branch for every spatial coordinate,
\begin{align} \label{eq:tkan-spacetime-xi}
\xi_q
=
\frac{1}{D+1}\Big(\Psi_{0,q}(t) + \sum_{d=1}^{D}\Psi_{d,q}(x_d)\Big).
\end{align}
As in Section~\ref{sec:moment-recursion}, we absorb the normalization into every branch by defining $\widetilde{\Psi}_{d,q}=\Psi_{d,q}/(D+1)$ for $d=0,\dots,D$. At fixed $t$, the time branch is a scalar, whereas each spatial branch depends on only one coordinate. Expanding the outer Chebyshev polynomials therefore reduces $\mathcal{Q}_w[f](t)$ to weighted integrals of powers of $\xi_q$, which the binomial recursion assembles from one-dimensional leaf integrals. Relative to Section~\ref{sec:moment-recursion}, the marginal density $\rho_d$ is replaced by the weight $w_d$, while the time branch contributes the known powers $\widetilde{\Psi}_{0,q}(t)^k$ in place of moments. The resulting weighted leaf moments and combination-variable moments are
\begin{align} \label{eq:tkan-weighted-moments}
\mu_{d,q}^{(k)}(w)
=
\int \widetilde{\Psi}_{d,q}(x_d)^{k}\, w_d(x_d)\, \mathrm{d}x_d,
\qquad
m_{q,j}(t)
=
\int_{\Omega} \xi_q^{\,j}\, w(\mathbf{x})\, \mathrm{d}\mathbf{x}.
\end{align}
Here, $m_{q,j}(t)$ is the weighted counterpart of the moment $m_{q,j}=\mathbb{E}[\xi_q^{\,j}]$ from Section~\ref{sec:moment-recursion}. The recursion requires the weight to factorize but not to be positive, so signed factors such as $w_d=x_d$ are admissible. Substituting these weighted moments into \eqref{eq:tkan-mean-reduction} gives
\begin{align} \label{eq:tkan-conserved-closed-form}
\mathcal{Q}_w[f](t)
=
\sum_{q=1}^{Q}\sum_{p=0}^{P} C_{q,p}
\sum_{j=0}^{p} a_{p,j}\, m_{q,j}(t).
\end{align}
This expression is differentiable with respect to the network parameters and time. Its weighted leaf moments are evaluated by one-dimensional quadrature, and the complete calculation has the same $\mathcal{O}(QDPM+QDP^{2})$ arithmetic cost as the expectation recursion in Section~\ref{sec:moment-recursion}.

The deterministic evaluation of $\mathcal{Q}_w[f](t)$ allows the conservation law \eqref{eq:conserved-functional} to be imposed by a final output reprojection. At each time, the drift is $\mathcal{Q}_w[f](t)-\mathcal{Q}_0$. We remove it by subtracting a multiple of a fixed correction profile $g(\mathbf{x})$, whose spatial form determines how the correction is distributed. The coefficient is chosen so that the correction carries exactly the drift and requires only $\mathcal{Q}_w[g]\neq0$. For total mass, $g\equiv1$ gives $\mathcal{Q}_w[g]=|\Omega|$. The constrained surrogate is
\begin{align} \label{eq:tkan-mass-projection}
\widetilde{f}(t,\mathbf{x})
=
f(t,\mathbf{x})
-
\frac{\mathcal{Q}_w[f](t) - \mathcal{Q}_0}{\mathcal{Q}_w[g]}\;
g(\mathbf{x}),
\qquad\text{so that}\qquad
\mathcal{Q}_w[\widetilde{f}\,](t) = \mathcal{Q}_0
\quad \text{for all } t.
\end{align}
Equation~\eqref{eq:tkan-mass-projection} satisfies the prescribed functional for every value of the trainable parameters. Because the weighted leaf moments are evaluated by one-dimensional quadrature, the reprojection is exact for the quadrature-discretized functional. Conservation of the continuum integral \eqref{eq:conserved-functional} inherits the accuracy of those quadrature rules. The correction is differentiable with respect to the network parameters and can therefore be treated as part of the network during training.

\subsection{Exact density estimation: normalization as a hard constraint}
\label{sec:exact-density}

The conserved functional in Section~\ref{sec:exact-mass} is linear and can be enforced by an additive reprojection. A normalized density must instead remain nonnegative and integrate to one. Existing model classes handle this normalization in different ways. Energy-based models \cite{liu2021density,rippel2013high} generally require sampling when estimating an intractable partition function. Normalizing flows \cite{papamakarios2021normalizing} avoid direct integration by restricting the architecture to invertible maps with tractable Jacobians, while penalty formulations enforce unit integral only approximately. For a TOUCKAN surrogate, the normalizer is the second moment computed by the cross-moment recursion of Section~\ref{sec:variance-recursion}. Normalization can therefore be enforced multiplicatively by division.

Let $\rho(\mathbf{x}) = \prod_{d=1}^{D}\rho_d(x_d)$ be a factorizing base density and let $f$ be a TOUCKAN \eqref{eq:tkan-architecture}. Define the normalized model
\begin{align} \label{eq:tkan-born-density}
p(\mathbf{x})
=
\frac{f^{2}(\mathbf{x})\,\rho(\mathbf{x})}{\mathbb{E}_{\rho}\big[f^{2}\big]},
\qquad\text{equivalently}\qquad
\psi(\mathbf{x})
=
\frac{f(\mathbf{x})\,\sqrt{\rho(\mathbf{x})}}{\sqrt{\mathbb{E}_{\rho}\big[f^{2}\big]}},
\qquad
p = \psi^{2},
\end{align}
The denominator $\mathbb{E}_{\rho}[f^{2}]$ is supplied by \eqref{eq:tkan-second-moment-reduction} and the cross-moment recursion \eqref{eq:tkan-cross-recursion}. The model is well defined whenever $\mathbb{E}_{\rho}[f^{2}]>0$, a condition that fails only when $f$ vanishes $\rho$-almost everywhere. Under this condition, squaring guarantees $p\geq0$, while division by the second moment enforces $\int p=1$ for every parameter value. Numerically, this identity is exact for the quadrature-discretized normalizer, and continuum normalization inherits the accuracy of the one-dimensional quadrature rules. Because the denominator is recomputed from the current parameters, normalization holds throughout training and does not depend on the optimizer.

The cross-moment recursion evaluates $\mathbb{E}_{\rho}[f^{2}]$ as a deterministic, differentiable function of the trainable parameters with $\mathcal{O}(Q^{2}DP^{2}M+Q^{2}DP^{4})$ arithmetic cost. For fixed $P$ and $M$, this cost is proportional to $DQ^{2}$. Gradients therefore pass directly through the normalizer during training.

For any nonzero scalar $c$, the transformation $f\mapsto cf$ leaves $p$ unchanged. Global scale is therefore removed from the represented density rather than controlled by an auxiliary constraint. In variational problems such as quantum eigenvalue problems, where $p=\psi^{2}$, normalization requires no separate loss term. The deterministic denominator also avoids the sampling floor introduced by a Monte Carlo estimate of the normalizer. This hard-normalized representation underlies the Schr\"odinger eigensolver evaluated in Section~\ref{sec:schrodinger}.

\section{Results}
\label{sec:experiments}

All experiments use the two-block TOUCKAN architecture introduced in Section~\ref{sec:tractable-architecture}, with the inner-block specialization stated for each case. The implementation uses JAX/Flax; benchmark definitions and implementation details are provided in \ref{app:problems} and~\ref{app:implementation}, respectively.

\subsection{Parameter efficiency}
\label{sec:param-efficiency}

\begin{table}[H]
\centering
\caption{Parameter efficiency of the four TOUCKAN inner-block specializations on the L\"uthen benchmark at $D=20$. The table reports trainable parameters, relative $L_2$ test error, and relative errors of the analytically propagated mean and variance against exact references. All variants share the same outer block and moment recursion. Error columns report the mean $\pm$ standard deviation over five random seeds, and bold marks the best mean in each column.}
\medskip
\resizebox{\linewidth}{!}{%
\begin{tabular}{lcccc}
\hline
Specialization and inner block $\boldsymbol{\Psi}_{d}(x_d)$ & Params & rel.\ $L_2$ & $\mathbb{E}[f]$ rel.\ error & $\mathrm{Var}[f]$ rel.\ error \\
\hline
Independent $\quad \tanh\!\big(\boldsymbol{\psi}_d(x_d)\big)$ \cite{toscano2025kurkovakans} & $41{,}796$ & $\mathbf{(8.8 \pm 5.5)\times10^{-4}}$ & $\mathbf{(7.2 \pm 6.5)\times10^{-4}}$ & $\mathbf{(1.9 \pm 2.4)\times10^{-4}}$ \\
Full-head $\quad \tanh\!\big(W_d\, h(x_d) + b_d\big)$ & $19{,}072$ & $(9.5 \pm 2.9)\times10^{-4}$ & $(8.8 \pm 3.0)\times10^{-4}$ & $(2.3 \pm 1.9)\times10^{-4}$ \\
Diagonal $\quad \tanh\!\big(\mathrm{diag}(\boldsymbol{\lambda}_d)\, \boldsymbol{\psi}(x_d)\big)$ \cite{laczkovich2021superposition} & $3{,}533$ & $(3.6 \pm 1.2)\times10^{-3}$ & $(2.9 \pm 1.4)\times10^{-3}$ & $(2.4 \pm 1.7)\times10^{-4}$ \\
Rank-one $\quad \tanh\!\big(\lambda_{d}\, \boldsymbol{\psi}(x_d)\big)$ \cite{lorentz1962metric} & $2{,}733$ & $(7.7 \pm 0.6)\times10^{-2}$ & $(1.7 \pm 1.1)\times10^{-2}$ & $(4.0 \pm 1.9)\times10^{-3}$ \\
\hline
\end{tabular}%
}
\label{tab:param-efficiency}
\end{table}

We evaluate the parameter--accuracy trade-off of the inner-block hierarchy on the L\"uthen benchmark~\cite{luthen2021Sparse} at $D=20$, whose mean and variance are available exactly. The comparison includes the independent, full-head, diagonal, and rank-one specializations defined in Section~\ref{sec:tractable-architecture}. Only the inner parameterization changes. All variants use the same outer block and SOAP training protocol~\cite{vyas2024soap} across five random seeds. The benchmark, data preparation, and complete training configuration are provided in \ref{app:luthen} and~\ref{app:impl-luthen}.

For every trained surrogate, the analytically propagated moments agree with a $2\times10^{6}$-sample Monte Carlo estimate of the same network to within one Monte Carlo standard error. This independent check confirms that the errors against the exact benchmark moments are dominated by surrogate fit rather than moment propagation. The mean and variance together evaluate in $0.02$ s, and the ordering of the four specializations is preserved in every seed.

The diagonal specialization uses $11.8$ times fewer parameters than the independent specialization while matching its variance accuracy within seed variability. At this seed count, the independent, full-head, and diagonal variants are statistically indistinguishable in variance accuracy, with mean relative errors near $2\times10^{-4}$. The rank-one specialization reduces the parameter count only modestly beyond the diagonal form but increases the relative $L_2$ error to approximately $8\%$.

This finite-budget loss does not contradict the universal-approximation result in \ref{app:universality}. It is consistent with the rank-one structure, in which the per-dimension arguments $\lambda_d\boldsymbol{\psi}(x_d)$ are collinear before the $\tanh$ and can distinguish shape differences only through saturation. On this benchmark, the diagonal coupling $\lambda_{d,q}$ is therefore the least parameterized specialization that matches the richer variants in variance accuracy while maintaining a sub-percent relative $L_2$ error. We consequently use the diagonal specialization for the high-dimensional inference experiments in Section~\ref{sec:tractable-inference}.

\subsection{Tractable inference}
\label{sec:tractable-inference}

Using the diagonal specialization selected in Section~\ref{sec:param-efficiency}, we evaluate tractable inference across $D\in\{5,12,25,50\}$ on three benchmarks. To test performance under two factorizing input densities, we use the L\"uthen~\cite{luthen2021Sparse} and Bratley~\cite{bratley1992Implementation} functions with uniform inputs and the Oakley--O'Hagan-type function~\cite{oakley2004probabilistic} with Gaussian inputs. All three admit exact means and variances, providing reference values for the propagated moments. Their definitions and exact reference calculations are given in \ref{app:luthen}--\ref{app:oakley}. The diagonal architecture remains compact at these dimensions, making second-order optimization with SSBroyden computationally feasible and thereby further reducing the optimization error of the fitted surrogate. The complete training procedure is summarized by Algorithm~\ref{alg:highd-training} in \ref{app:impl-highd}.

The recursions directly compute the mean and raw second moment $\mathbb{E}[f^2]$. At $D=50$, the deterministic estimates overlap the exact references for both quantities across all three benchmarks, whereas the Monte Carlo estimates remain visibly dispersed at $N=2\times10^{6}$ (Figure~\ref{fig:mc-convergence-rows}).

Table~\ref{tab:tractable-inference-errors} extends this comparison across all four dimensions and reports the variance derived from the two propagated moments.

\begin{table}[H]
\centering
\caption{Relative errors of the expectation and variance for the three benchmarks across $D\in\{5,12,25,50\}$. The surrogate column reports the test relative $L_2$ error. TOUCKAN columns report one deterministic propagation through the trained surrogate. MC (median) is the median error over $200$ independent runs of the true integrand, each with $N=2\times10^{5}$ samples (\ref{app:impl-highd}). TOUCKAN is more accurate than the median Monte Carlo estimate in every case. For $D\leq25$, it outperforms all $200$ Monte Carlo mean replications and all but at most one variance replication. At $D=50$, it outperforms at least $98\%$ of the mean and $89\%$ of the variance replications. The variance entries marked with $\ddagger$ are limited by floating-point roundoff in the second-moment recursion rather than by the surrogate.}
\label{tab:tractable-inference-errors}
\medskip 
\resizebox{\linewidth}{!}{%
\begin{tabular}{llc|cc|cc}
\hline
 & & Surrogate & \multicolumn{2}{c}{$\mathbb{E}[f]$ rel.\ error} & \multicolumn{2}{c}{$\mathrm{Var}[f]$ rel.\ error} \\
Benchmark & $D$ & rel.\ $L_2$ & TOUCKAN & MC (median) & TOUCKAN & MC (median) \\
\hline
L\"uthen  & 5  & $1.0\times10^{-6}$ & $5.64\times10^{-10}$ & $1.34\times10^{-3}$ & $1.13\times10^{-8}$ & $2.10\times10^{-3}$ \\
          & 12 & $2.7\times10^{-6}$ & $1.30\times10^{-9}$  & $3.07\times10^{-4}$ & $9.27\times10^{-9}$ & $2.09\times10^{-3}$ \\
          & 25 & $4.5\times10^{-6}$ & $1.77\times10^{-9}$  & $2.53\times10^{-4}$ & $3.81\times10^{-8}$ & $2.35\times10^{-3}$ \\
          & 50 & $8.9\times10^{-6}$ & $1.63\times10^{-10}$ & $1.14\times10^{-4}$ & $1.66\times10^{-8}$ & $1.92\times10^{-3}$ \\
\hline
Bratley   & 5  & $2.9\times10^{-4}$ & $7.22\times10^{-7}$  & $1.03\times10^{-3}$ & $1.80\times10^{-6}$ & $1.74\times10^{-3}$ \\
          & 12 & $1.7\times10^{-3}$ & $1.55\times10^{-7}$  & $1.04\times10^{-3}$ & $1.45\times10^{-5}$ & $1.94\times10^{-3}$ \\
          & 25 & $4.3\times10^{-3}$ & $3.54\times10^{-7}$  & $9.83\times10^{-4}$ & $9.11\times10^{-6}$ & $1.75\times10^{-3}$ \\
          & 50 & $1.1\times10^{-2}$ & $1.31\times10^{-5}$  & $1.08\times10^{-3}$ & $4.59\times10^{-4}\,^{\ddagger}$ & $1.77\times10^{-3}$ \\
\hline
Oakley & 5  & $4.7\times10^{-5}$ & $2.68\times10^{-7}$ & $2.33\times10^{-3}$ & $1.90\times10^{-7}$ & $2.25\times10^{-3}$ \\
          & 12 & $1.4\times10^{-4}$ & $1.35\times10^{-7}$  & $1.45\times10^{-3}$ & $3.45\times10^{-7}$ & $2.17\times10^{-3}$ \\
          & 25 & $2.3\times10^{-4}$ & $4.71\times10^{-7}$  & $9.24\times10^{-4}$ & $1.02\times10^{-6}$ & $2.13\times10^{-3}$ \\
          & 50 & $1.1\times10^{-3}$ & $8.50\times10^{-7}$  & $7.99\times10^{-4}$ & $2.69\times10^{-4}\,^{\ddagger}$ & $1.86\times10^{-3}$ \\
\hline
\end{tabular}%
}
\end{table}

Across dimensions, the propagated mean errors range from $10^{-10}$ to $10^{-5}$ and the variance errors from $10^{-8}$ to $10^{-4}$, remaining below the median fixed-budget Monte Carlo error in every case. The median across replications characterizes the typical error of one fixed-budget Monte Carlo run. Even after all $200$ Monte Carlo runs are averaged at a cost of $4\times10^{7}$ evaluations (Figure~\ref{fig:summary-grid}), the deterministic mean remains more accurate in every case except Bratley at $D=50$. The deterministic variance is less accurate only for Bratley at $D=12$ and $D=50$. These exceptions are confined to Bratley, which has the largest surrogate errors at the affected dimensions.

\begin{figure}[H]
    \centering
    \includegraphics[width=0.88\linewidth]{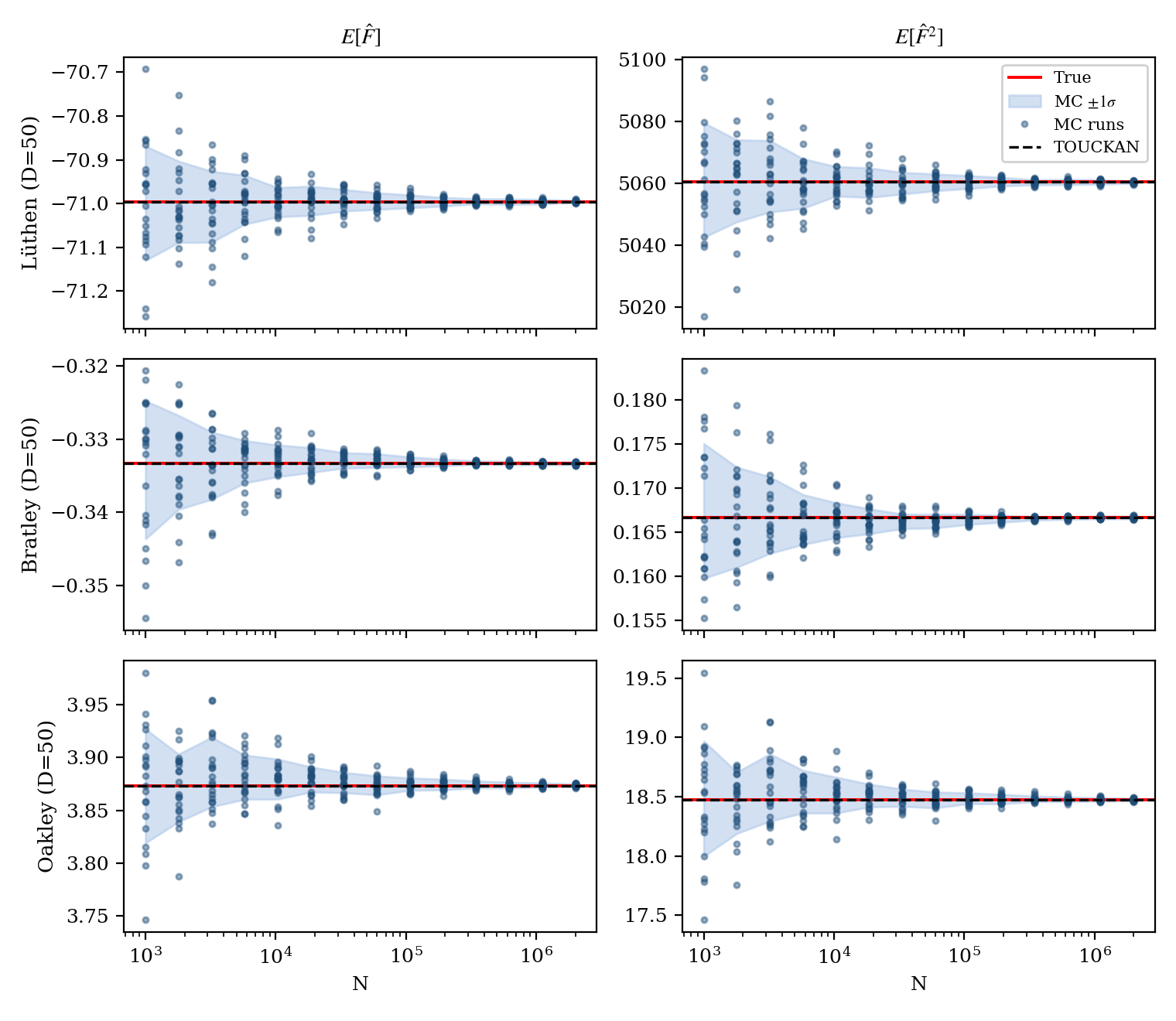}
    \caption{Analytical TOUCKAN moments and Monte Carlo convergence at $D=50$. Rows correspond to the L\"uthen, Bratley, and Oakley--O'Hagan benchmarks, while columns show the mean and raw second moment. Blue dots show $20$ independent Monte Carlo estimates of the true integrand at each sample size up to $N=2\times10^{6}$, and the blue shaded region marks their $\pm1\sigma$ envelope. The red line is the exact value, and the black dashed line is the deterministic TOUCKAN estimate. The Monte Carlo envelope contracts at the expected $N^{-1/2}$ rate but remains visibly dispersed at the largest budget. The TOUCKAN estimate overlaps the exact value in all six panels without additional sampling during moment evaluation.}
    \label{fig:mc-convergence-rows}
\end{figure}

The comparison reflects different error sources. At the fixed sampling budget, the Monte Carlo median remains near $10^{-3}$ across benchmarks and dimensions. Once the one-dimensional quadrature is resolved, the propagated errors instead follow the relative $L_2$ error of the surrogate, consistent with the decomposition in Section~\ref{sec:moment-recursion} and the sensitivity analysis in \ref{app:sensitivity}. The variance at $D=50$ is the exception. There the recursion for the second moment reaches a floating-point roundoff floor before the surrogate error, so the two marked entries of Table~\ref{tab:tractable-inference-errors} overstate the error of the surrogate itself. When roundoff is negligible, improving the deterministic estimate requires a more accurate surrogate, whereas reducing the sampling error requires additional true-function evaluations.

\begin{figure}[H]
    \centering
    \includegraphics[width=0.99\linewidth]{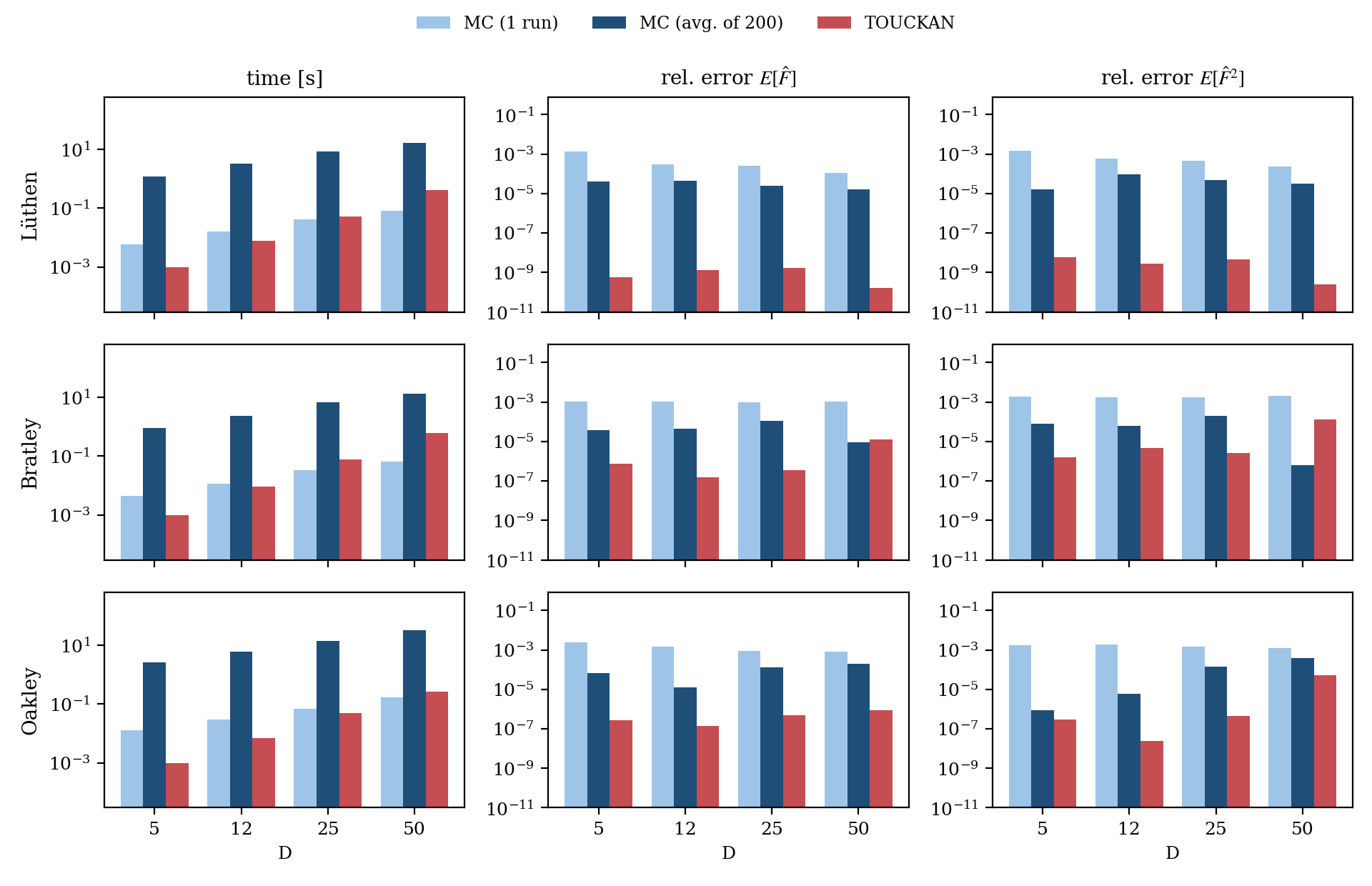}
    \caption{Cost and accuracy of TOUCKAN propagation and Monte Carlo across $D\in\{5,12,25,50\}$. Rows denote benchmarks and columns report wall time, mean error, and raw-second-moment error. MC (1 run) uses the time of one $N=2\times10^{5}$ run and the median error across $200$ independent runs. MC (avg.\ of 200) reports the total time and error obtained by averaging those runs. TOUCKAN reports one warm evaluation of the compiled deterministic recursions, excluding one-time JIT compilation. The reported timings do not include surrogate training. One TOUCKAN evaluation is up to $2{,}600$ times faster than the 200-run Monte Carlo average. For the mean and raw second moment, TOUCKAN matches or improves the averaged Monte Carlo estimator's accuracy except for Bratley at $D=50$. Timings were measured on one NVIDIA L40S GPU.}
    \label{fig:summary-grid}
\end{figure}

Figure~\ref{fig:summary-grid} reports wall time and accuracy for the mean and raw second moment, the two quantities evaluated directly by the recursions. One deterministic propagation takes approximately one millisecond at $D=5$ and less than one second at $D=50$. TOUCKAN is up to $2{,}600$ times faster than the 200-run Monte Carlo average. For the mean and raw second moment, TOUCKAN matches or improves the averaged Monte Carlo estimator's accuracy except for Bratley at $D=50$, with errors up to $10^{5}$ times smaller in the remaining cases.

These comparisons concern inference after the surrogate has been trained. Monte Carlo provides a basic sampling reference for the true integrand, whereas TOUCKAN evaluates moments of the fitted surrogate without additional queries to the original model. The same distinction applies to randomized quasi-Monte Carlo, whose comparison is reported in \ref{app:rqmc-comparison}. Both direct sampling methods require access to the true function at the chosen sampling points. The RQMC comparison uses up to $2^{18}$ evaluations per run at prescribed low-discrepancy points. Obtaining these evaluations can be prohibitive for expensive simulations, while fixed experimental observations or archived data generally do not coincide with the required locations. TOUCKAN instead reuses the trained surrogate for repeated moment evaluation.

\subsection{Exact mass conservation: Cahn--Hilliard equation}
\label{sec:cahn-hilliard}

\begin{figure}[H]
    \centering
    \includegraphics[width=0.95\linewidth]{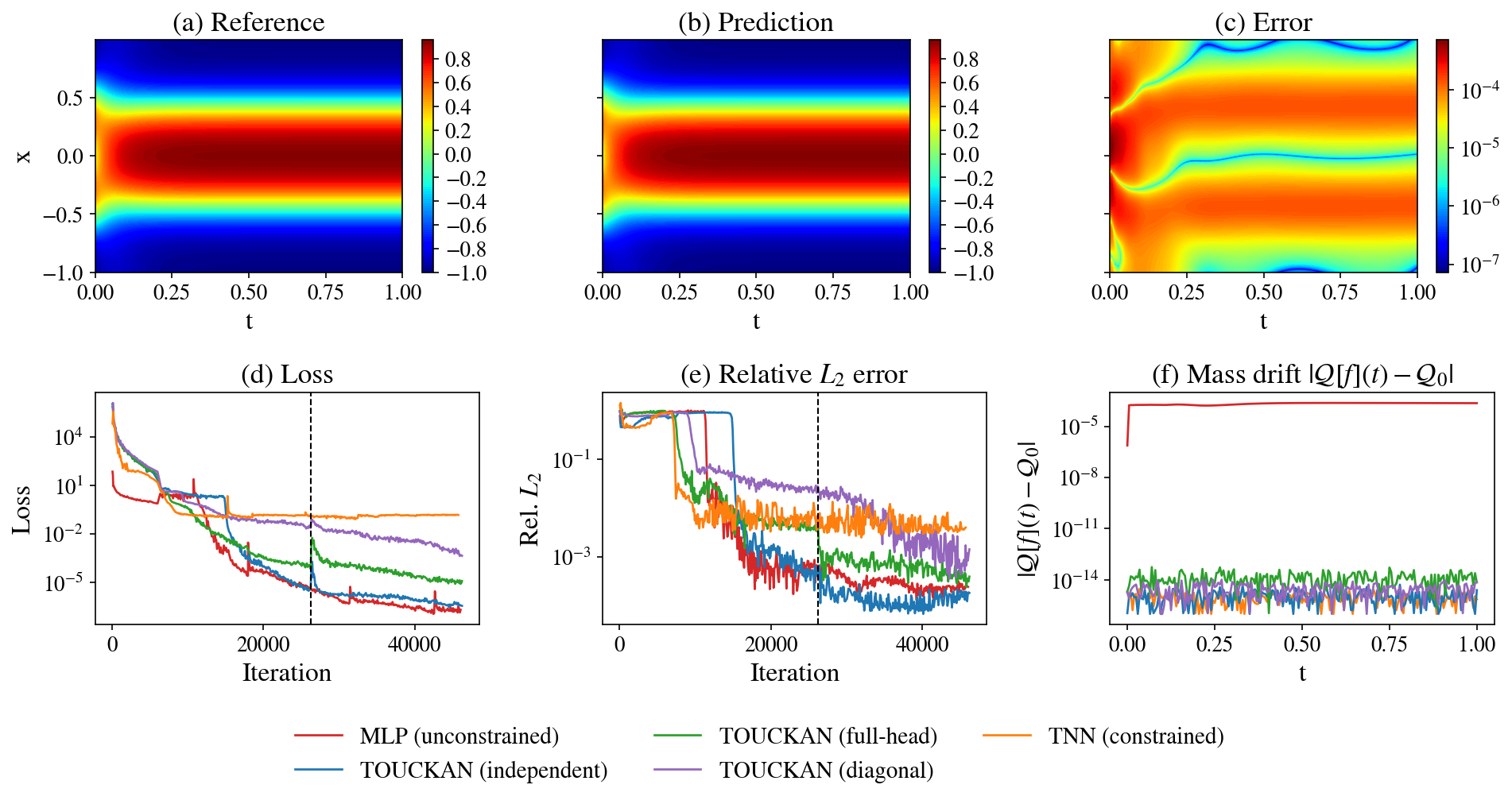}
    \caption{Cahn--Hilliard equation with the exact-mass reprojection. (a)--(c): reference solution, learned surrogate, and absolute approximation error on a logarithmic color scale for the independent TOUCKAN of Table~\ref{tab:cahn-hilliard}. (d), (e): training loss and relative $L_2$ error for the five parameter-matched models of Table~\ref{tab:cahn-hilliard}. The dashed line marks activation of the exact-mass reprojection for the constrained models. The MLP remains unconstrained throughout. (f): 512-node quadrature mass drift $|\mathcal{Q}[f](t) - \mathcal{Q}_0|$ over time. Every model with the exact-mass reprojection holds this quadrature mass at machine precision, while the unconstrained MLP baseline drifts at the level of its field error.}
    \label{fig:cahn-hilliard-solution}
\end{figure}

Beyond post-training inference, deterministic moment propagation can enforce integral constraints throughout training. We demonstrate this capability using the one-dimensional Cahn--Hilliard equation for a double-well potential,
\begin{align} \label{eq:cahn-hilliard}
\partial_t u
=
\partial_{xx}\big(u^{3} - u - \varepsilon\, \partial_{xx} u\big),
\qquad
x \in [-1,1], \quad t \in [0,1], \quad \varepsilon = 0.02.
\end{align}
With periodic boundary conditions and the initial condition $u(0,x)=\cos(\pi x)-e^{-4(\pi x)^2}$, the solution conserves the total mass $\mathcal{Q}[u](t)=\int_{-1}^{1}u(t,x)\,dx=\mathcal{Q}_0$. We impose this invariant using the exact-mass reprojection of Section~\ref{sec:exact-mass} with $w\equiv1$ and $g\equiv1$, so every projected output has total mass $\mathcal{Q}_0$ for every parameter value. The Fourier-spectral reference conserves mass to below $10^{-16}$, and the complete problem and implementation are described in \ref{app:impl-ch}.

We compare an unconstrained MLP baseline, a tensor neural network (TNN)~\cite{wang2022tensor} that we modify to enforce the same exact-mass construction, and three TOUCKAN specializations. The TOUCKANs use the independent, full-head, and diagonal inner blocks defined in Section~\ref{sec:tractable-architecture}. The TNN represents the solution as a sum of separable products in $t$ and $x$, so its mass becomes a weighted sum of one-dimensional spatial factor integrals and the same reprojection applies directly. All five models have approximately $15{,}000$ trainable parameters and receive the same total second-order optimization budget. Table~\ref{tab:cahn-hilliard} reports their field errors and mass drift, while the complete architectures, the TNN mass construction, and the training protocol are given in \ref{app:impl-ch}.

\begin{table}[h]
\centering
\caption{Field accuracy and mass conservation for the Cahn--Hilliard benchmark. Relative $L_2$ and $L_\infty$ errors are measured against the Fourier-spectral reference, while the mass column reports the worst-case drift $\max_t|\mathcal{Q}[f](t)-\mathcal{Q}_0|$ under the 512-node quadrature used by the reprojection. All models have approximately $15{,}000$ trainable parameters and receive the same total second-order optimization budget. The constrained models conserve the quadrature mass to machine precision, and for the TOUCKANs independent rules with $1024$--$8192$ nodes confirm the same precision for the continuum integral (\ref{app:sensitivity}). For comparison on the same problem, the specialized backward-compatible PINN (bc-PINN) method~\cite{mattey2022novel} reports a relative error of $1.86\times10^{-2}$. Bold marks the lowest field errors; all constrained mass drifts are at machine precision.}
\medskip
\begin{tabular}{lcccc}
\hline
Model & Params & rel.\ $L_2$ & $L_\infty$ & $\max_t |\mathcal{Q}[f](t) - \mathcal{Q}_0|$ \\
\hline
MLP (unconstrained) & $14{,}928$ & $2.4\times10^{-4}$ & $\mathbf{5.2\times10^{-4}}$ & $2.3\times10^{-4}$ \\
TNN (constrained)~\cite{wang2022tensor} & $14{,}753$ & $4.1\times10^{-3}$ & $6.7\times10^{-2}$ & $2.9\times10^{-15}$ \\
TOUCKAN (independent) & $14{,}840$ & $\mathbf{1.2\times10^{-4}}$ & $7.3\times10^{-4}$ & $5.2\times10^{-15}$ \\
TOUCKAN (full-head) & $15{,}328$ & $3.6\times10^{-4}$ & $2.2\times10^{-3}$ & $5.6\times10^{-14}$ \\
TOUCKAN (diagonal) & $14{,}912$ & $1.4\times10^{-3}$ & $4.0\times10^{-3}$ & $1.0\times10^{-14}$ \\
\hline
\end{tabular}
\label{tab:cahn-hilliard}
\end{table}

All three TOUCKAN variants conserved mass to machine precision under the exact-mass reprojection, despite their different field errors. The constrained TNN matched this behavior with a drift of $2.9\times10^{-15}$, confirming that the reprojection is not specific to TOUCKANs and applies to any architecture whose mass integral is tractable. For the TOUCKANs, independent quadrature refinements confirmed continuum conservation beyond the 512-node rule used during training, with drift remaining near $5\times10^{-15}$ (\ref{app:sensitivity}, Figure~\ref{fig:ch-continuum-mass}). By contrast, the unconstrained MLP baseline exhibited a mass drift of $2.3\times10^{-4}$, comparable to its field error.

\begin{figure}[H]
    \centering
    \includegraphics[width=0.99\linewidth]{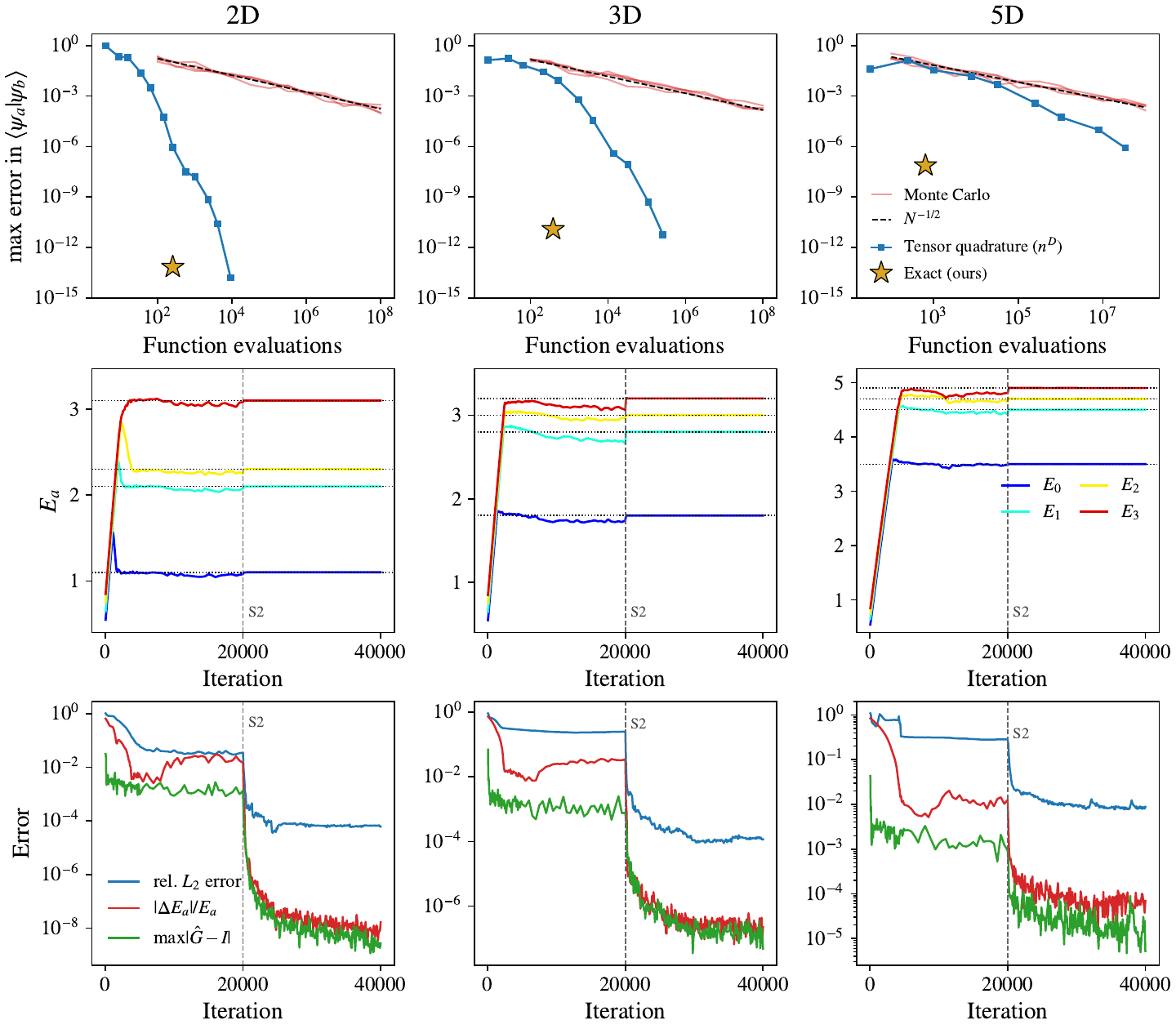}
    \caption{Schr\"odinger eigenvalue problem in $D=2$, $3$, and $5$ (columns). Top row: maximum error in the Gram entries $\langle\psi_a|\psi_b\rangle$ versus number of function evaluations, against converged tensor-product references ($200^{2}$, $120^{3}$, and $48^{5}$ nodes): four independent Monte Carlo runs (red, up to $10^{8}$ samples), tensor-product Gauss--Hermite quadrature with $n^{D}$ nodes (blue), and the exact cross-moment recursion (star), which reaches $1.5\times10^{-14}$, $1.2\times10^{-11}$, and $7.5\times10^{-8}$ with $256$, $384$, and $640$ univariate evaluations, respectively (the $D=5$ value is limited by the convergence of the $48^{5}$-node reference itself). The horizontal axis counts integrand evaluations: the TOUCKAN-based recursion additionally performs $\mathcal{O}(Q^{2}DP^{4})$ cross-moment arithmetic outside this count, and its full evaluation, including that arithmetic, completes in $1$--$3$\,ms per case. Middle row: convergence of the four trainable eigenvalues $E_a$ to the exact spectrum (dotted lines) from a common initialization below the spectrum. Bottom row: training convergence, showing the relative $L_2$ error of the eigenfunctions, the relative eigenvalue error averaged over the four states, and the off-diagonal Gram residual $\max|\widehat{G} - I|$. The dashed line marks the switch from first- to second-order optimization (S2). Together, these results show that deterministic Gram evaluation avoids both the Monte Carlo sampling floor and tensor-product growth while all four eigenpairs converge simultaneously.}
    \label{fig:schrodinger-convergence}
\end{figure}

Hard conservation remained compatible with accurate field reconstruction. The independent TOUCKAN achieved the lowest relative $L_2$ error, $1.2\times10^{-4}$, while the unconstrained MLP baseline achieved the lowest $L_\infty$ error, $5.2\times10^{-4}$. Among the constrained TOUCKANs, field accuracy followed the parameter--accuracy hierarchy observed in Section~\ref{sec:param-efficiency}. Although the TNN admits the same exact-mass construction, its relative $L_2$ error of $4.1\times10^{-3}$ lies more than an order of magnitude above that of the independent TOUCKAN at the same parameter count and optimization budget. Figure~\ref{fig:cahn-hilliard-solution} summarizes the field reconstruction, training histories, and mass conservation.

\subsection{Exact normalization: the Schr\"odinger eigenvalue problem}
\label{sec:schrodinger}

Deterministic moment propagation can also enforce normalization, extending its role from conservation laws to the admissibility of eigenstates. We consider the time-independent Schr\"odinger equation on $\mathbb{R}^{D}$,
\begin{align} \label{eq:schrodinger}
\mathcal{H}\psi
=
-\tfrac{1}{2}\Delta \psi + V(\mathbf{x})\,\psi
=
E\,\psi,
\qquad
V(\mathbf{x}) = \tfrac{1}{2}\,\mathbf{x}^{\top} A\, \mathbf{x},
\end{align}
We seek the lowest $N_s$ eigenpairs $(E_a,\psi_a)$ of a coupled harmonic oscillator with $A=R\,\mathrm{diag}(\omega_1^2,\dots,\omega_D^2)R^{\top}$. The dense rotation $R$ couples all input coordinates, so neither the potential nor its eigenfunctions are separable in the original variables. Undoing this rotation nevertheless provides closed-form reference eigenpairs, allowing precise evaluation of a genuinely multivariate problem. The complete construction is given in \ref{app:impl-schrodinger}.

A neural eigensolver must learn each eigenvalue jointly with its eigenfunction while excluding zero, repeated, and higher-energy solutions. This requires normalized and mutually orthogonal states in addition to a small equation residual. The formulation of Jin et al.~\cite{jin2022physics} enforces these integral constraints through sampled penalties and obtains states sequentially through an orthogonality loss, patience-based solution recognition, and symmetry switching. Its reported experiments are restricted to one-dimensional equations.

TOUCKAN represents all $N_s$ states with a single network in the hard-normalized form $\psi_a=f_a\sqrt{\rho}/\sqrt{G_{aa}}$ from \eqref{eq:tkan-born-density}. The Gaussian density $\rho$ captures the decay of the bound states, while the cross-moment recursion evaluates the complete Gram matrix $G_{ab}=\mathbb{E}_{\rho}[f_af_b]$ deterministically at every training step. Normalization therefore holds by construction, and orthogonality is imposed through the off-diagonal Gram entries without sampling noise. The Schr\"odinger-equation residual $r_a$ in \eqref{eq:app-schrodinger-residual} constrains each learned pair $(E_a,\psi_a)$. The trace $\operatorname{tr}(G^{-1}K)$ is the summed energy of the learned subspace, and by the Ky Fan principle its minimum is attained by the span of the lowest $N_s$ eigenstates~\cite{fan1949theorem}. All $N_s$ eigenpairs are optimized jointly from the first iteration, without an eigenvalue scan or sequential deflation. Sampling is used only to evaluate $r_a$ and the Hamiltonian matrix $K$ in \eqref{eq:app-schrodinger-K}, whereas every norm and overlap is evaluated deterministically. The derivation of the trace term and the complete loss and training procedure are given in \ref{app:impl-schrodinger}.

Table~\ref{tab:schrodinger} reports the lowest four eigenpairs in $D=2$, $3$, and $5$. Their mean relative eigenvalue errors are $1.7\times10^{-8}$, $1.4\times10^{-7}$, and $6.1\times10^{-5}$, respectively. For comparable one-dimensional eigenvalue problems, including the radial reduction of the hydrogen atom, Jin et al.~\cite{jin2022physics} reported errors of roughly one percent. Within each dimension, the eigenvalue errors are approximately the square of the eigenfunction errors, consistent with the quadratic accuracy of a variational formulation.

The loss and eigenpair errors increase together with dimension, whereas the parameter count and training time grow only mildly. The constraint errors follow a different pattern. Normalization is built into the representation, so the values in Table~\ref{tab:schrodinger} measure agreement with an independent tensor-product quadrature rather than optimization of a normalization penalty. Orthogonality also remains several orders below the eigenfunction error. The deterioration at $D=5$ therefore reflects the approximation and optimization achieved under the tested configuration rather than a failure of deterministic constraint evaluation.

\begin{table}[t]
\centering
\caption{Accuracy of the lowest four eigenpairs of the coupled oscillator in $D=2$, $3$, and $5$. Eigenvalue errors are reported for each state, and the eigenfunction row reports the worst case over the four states. Normalization is evaluated using independent tensor-product Gauss--Hermite rules with $200^2$, $120^3$, and $48^5$ nodes and reports the largest residual over the four states. Orthogonality is the maximum absolute off-diagonal entry of the normalized Gram matrix at the final training iteration, evaluated by the deterministic recursion at $M=128$. The $D=5$ normalization value is limited by the reference quadrature rather than the cross-moment recursion. Training times were measured on one NVIDIA H200 GPU.}
\label{tab:schrodinger}
\begin{tabular}{lccc}
\hline
 & $D=2$ & $D=3$ & $D=5$ \\
\hline
Trainable parameters & $10{,}963$ & $13{,}075$ & $17{,}299$ \\
Relative error $E_0$ & $2.2\times10^{-8}$ & $4.0\times10^{-7}$ & $5.6\times10^{-5}$ \\
Relative error $E_1$ & $2.3\times10^{-8}$ & $1.3\times10^{-8}$ & $2.6\times10^{-5}$ \\
Relative error $E_2$ & $7.7\times10^{-9}$ & $5.6\times10^{-8}$ & $2.8\times10^{-5}$ \\
Relative error $E_3$ & $1.4\times10^{-8}$ & $8.1\times10^{-8}$ & $1.3\times10^{-4}$ \\
Eigenfunction rel.\ $L_2$ error & $2.2\times10^{-4}$ & $2.5\times10^{-4}$ & $1.6\times10^{-2}$ \\
Normalization, $\max_a |\langle\psi_a|\psi_a\rangle - 1|$ & $2.1\times10^{-14}$ & $5.5\times10^{-12}$ & $2.3\times10^{-7}$ \\
Orthogonality, $\max_{a\neq b} |\widehat{G}_{ab}|$ & $2.7\times10^{-9}$ & $4.9\times10^{-8}$ & $5.2\times10^{-6}$ \\
Final training loss & $7.6\times10^{-12}$ & $2.7\times10^{-8}$ & $1.6\times10^{-3}$ \\
Training time & $909$\,s & $1307$\,s & $2191$\,s \\
\hline
\end{tabular}
\end{table}

Figure~\ref{fig:schrodinger-convergence} directly compares the cost of evaluating the Gram matrix. Monte Carlo remains near $10^{-4}$ error after $10^8$ samples, while tensor-product quadrature converges spectrally but grows as $n^D$. The cross-moment recursion instead reaches the reference accuracy using $128$ univariate nodes per dimension, with complete evaluations taking $1$--$3$\,ms. Deterministic Gram evaluation therefore removes the sampling floor from normalization and orthogonality without incurring tensor-product growth.

The same figure shows all four trainable eigenvalues converging to the exact spectrum after joint optimization from the first iteration, without an eigenvalue scan or sequential deflation. The second-order stage produces the final decrease in the eigenfunction, eigenvalue, and orthogonality errors. Figure~\ref{fig:schrodinger-states} confirms that the four three-dimensional eigenstates recover the corresponding modal structures. The two- and five-dimensional fields are shown in Figures~\ref{fig:schrodinger-states-2d} and~\ref{fig:schrodinger-states-5d} of \ref{app:impl-schrodinger}.

\begin{figure}
    \centering
    \includegraphics[width=0.99\linewidth]{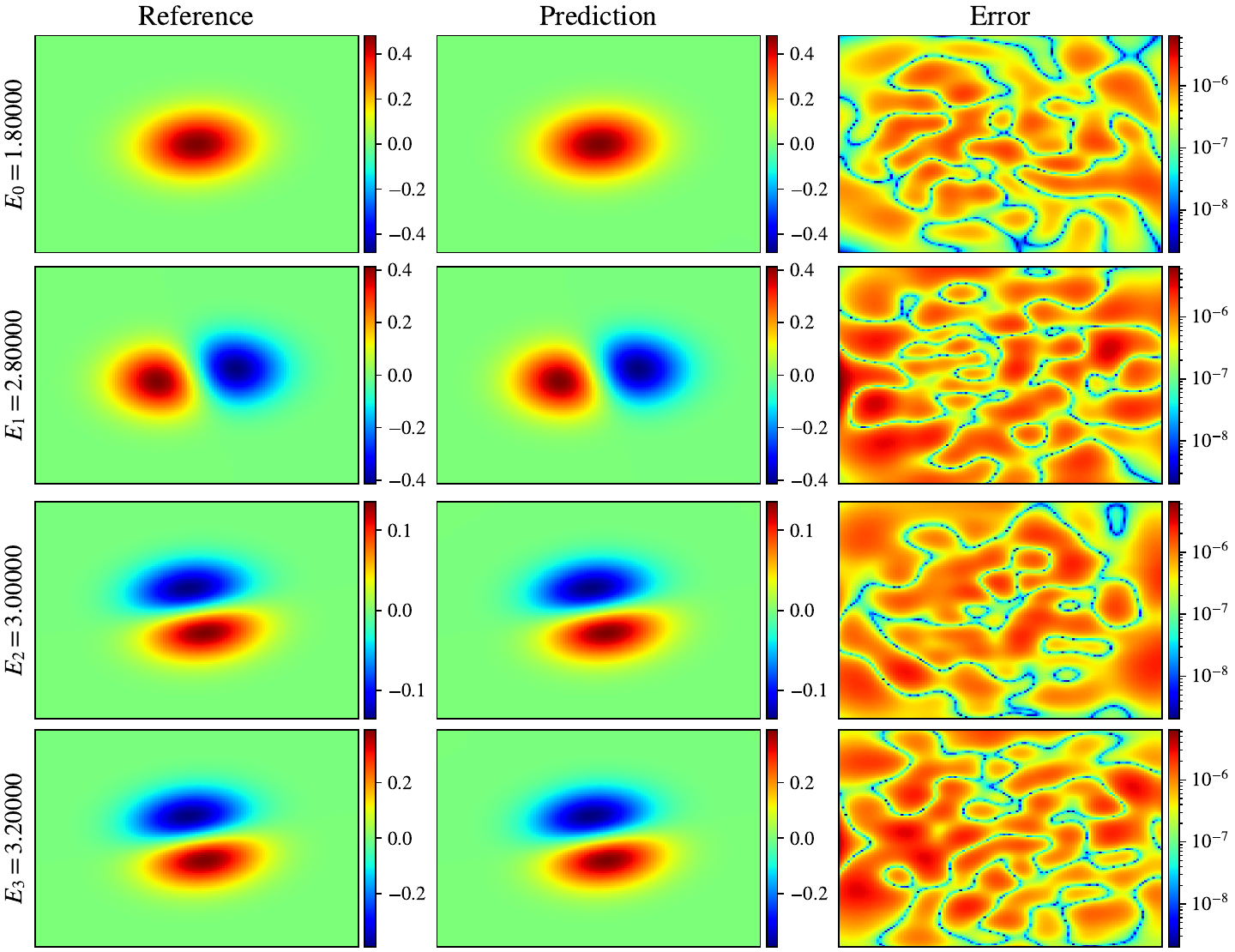}
    \caption{Lowest four eigenstates of the coupled three-dimensional oscillator on the $x_3 = 0$ plane: closed-form reference (left), learned surrogate (center), and absolute error on a logarithmic color scale (right). Each row is labeled by its learned eigenvalue $E_a$. On this slice, the $E_2$ and $E_3$ states have similar projected profiles, while the Gram residuals in Table~\ref{tab:schrodinger} quantify their full-dimensional distinction.}
    \label{fig:schrodinger-states}
\end{figure}

\newpage
\section{Summary}

In this work, we introduced TOUCKANs, a class of KAN architectures that enable tractable evaluation of statistical quantities through a decomposable polynomial structure. Finite-degree polynomial outer functions allow the multinomial theorem to rewrite the architecture as a hierarchical sum--product representation, connecting Kolmogorov--Arnold superposition with tractable probabilistic circuits. TOUCKANs instantiate this principle within the two-block KKAN architecture using either independent per-dimension networks or shared inner networks motivated by the Lorentz--Laczkovich theorems. The shared formulations specialize each dimension through full-head, diagonal, or rank-one couplings, reducing the parameter footprint as the coupling becomes more restrictive. The universality proof applies to the most restricted rank-one subclass and therefore to the complete TOUCKAN hierarchy (Theorem~\ref{thm:tkan-uat} and Remark~\ref{rem:touckan-hierarchy-uat}).

Under factorizing input measures, the resulting sum--product representation reduces expectations and higher moments to one-dimensional integrals over the network leaves. The binomial recursions assemble these quantities without constructing the fully expanded graph. Their algebra is exact in the evaluated leaf moments, while the continuum error is limited to the corresponding one-dimensional quadrature. Quantities usually estimated through sampling can therefore be evaluated deterministically at any stage of training or inference.

Across benchmarks with uniform and Gaussian inputs up to $D=50$, the deterministic mean and variance errors remained below the median fixed-budget Monte Carlo errors in every case. Once the leaf quadrature was resolved, the remaining discrepancy followed the fitted-surrogate error rather than the propagation procedure. The one exception is the variance at $D=50$, where the second-moment recursion reaches a floating-point roundoff floor before the surrogate error. The diagonal specialization provided the best tested parameter--accuracy balance for these high-dimensional experiments and kept second-order refinement computationally feasible~\cite{urban2025unveiling}. In the lower-dimensional physics-informed problems, the richer independent and full-head couplings instead attained the lowest field errors. The inner parameterization therefore controls the finite-budget accuracy--cost trade-off, while the moment recursions remain unchanged across the hierarchy.

These benchmarks validate deterministic moment evaluation within an already-trained surrogate. Repeated moments can then be computed without new true-function evaluations, including when only fixed observations or archived simulation data are available.

Deterministic moment propagation also allows integral quantities to be enforced during training rather than approximated through sampled constraints. For the Cahn--Hilliard equation, the exact-mass reprojection conserved mass to machine precision, while the independent TOUCKAN attained a relative $L_2$ error of $1.2\times10^{-4}$, more than two orders of magnitude below the PINN result reported on the same problem~\cite{mattey2022novel}. For the coupled Schr\"odinger problem, deterministic normalization and orthogonality, combined with the Ky Fan trace objective, enabled joint computation of the lowest four eigenpairs in $D=2$, $3$, and $5$. These results extend deterministic moment propagation from post-training inference to structural conservation and normalization during training.

The branch count grows linearly with $D$, and the moment recursions have the polynomial costs derived in Section~\ref{sec:method}. In practice, the tighter limitation is the coverage of the joint input space during training. Tensorized or separable architectures such as TNNs and SPINNs can form a Cartesian grid of $n^D$ joint locations from $n$ coordinate values along each axis~\cite{wang2022tensor,cho2024separable}. Thus, $100$ values per coordinate span $10^6$ joint locations in three dimensions. TOUCKAN training instead evaluates complete coordinate tuples, so $N$ collocation samples cover $N$ joint locations and matching the same three-dimensional grid requires $10^6$ pointwise samples. Although moment propagation factorizes the polynomial expansion, the implemented forward surrogate sums the coordinate branches before evaluating the outer polynomial and does not retain the axis-wise Cartesian factorization used during TNN or SPINN training. Training coverage can therefore become the practical bottleneck even when moment propagation remains tractable. The current experiments show that sparse pointwise samples suffice for the smooth targets tested through $D=50$, while targets with richer joint structure may require substantially larger training sets.

The present formulation assumes factorizing input measures, making correlated inputs a primary extension. Variable transformations, tractable conditioning, and marginalization could broaden the admissible distributions and support variance-based sensitivity analysis. Further work should address degenerate eigenspaces, architectures that scale to higher dimensions, and broader classes of decomposable neural networks. Together, these directions extend the connection between KAN-based scientific machine learning and tractable probabilistic circuits established here.

\section*{Data availability}
To support reproducibility, the source code for our implementation and the associated data will be released publicly upon acceptance of the manuscript.

\section*{Acknowledgements}
S.R. acknowledges support from the Austrian Science Fund (FWF) through the Erwin Schrödinger Fellowship [Grant DOI:  10.55776/J4774]. 
The authors acknowledge the support of the NIH grant R01AT012312, MURI/AFOSR FA9550-20-1-0358 project, the DOE-MMICS SEA-CROGS DE-SC0023191 award, and the ONR Vannevar Bush Faculty Fellowship (N00014-22-1-2795). L. Wang acknowledges the support from Singapore MOE AcRF Tier 2 Grant: MOE-T2EP20224-0012.

\bibliographystyle{elsarticle-num}
\bibliography{reference}
\newpage
\appendix

\section{Background}
\label{app:background}
\subsection{Kolmogorov--Arnold Networks}
\label{app:background-kan}

KANs are neural architectures motivated by the Kolmogorov--Arnold representation theorem~\cite{kolmogorov1963representation}. The theorem states that any continuous multivariate function on a hypercube can be represented through superpositions of continuous univariate functions. In contrast to conventional multilayer perceptrons, where nonlinear activation functions are associated with nodes, KANs place learnable univariate nonlinear functions on the edges of the computational graph~\cite{liu2025kan}.

In practical deep KAN architectures \cite{liu2025kan}, the representation is implemented layer-wise as
\begin{align} \label{eq:DeepKAN}
z_{n}^{(\ell+1)}
=
\sum_{{n'}=1}^{N_\ell}
\phi_{n,n'}^{(\ell)}
\!\big(
z_{n'}^{(\ell)}
\big),
\qquad
z^{(0)} = \mathbf{x},
\end{align}
where the functions $\phi_{n,n'}^{(\ell)} : \mathbb{R} \rightarrow \mathbb{R}$ are learnable univariate transformations associated with edges between layers. Here, $\ell=0,\dots,L$ indexes layers, and $n$ and $n'=1,\dots,N_\ell$ index neurons in layers $\ell+1$ and $\ell$, respectively. This stacked construction departs from the theorem that motivates it: the representation \eqref{eq:def-KAST} is a two-block structure, with a single level of inner functions and a single level of outer functions, whereas deep KANs compose many such layers into a deeply nested architecture. The departure also carries practical costs. The original KANs parameterize the edge functions with learnable B-splines, which are computationally expensive, and their performance degrades as the parameter count grows. Subsequent variants replace the splines with more efficient bases such as Chebyshev polynomials, whose recursive formulation restores stability in single precision \cite{shukla2024comprehensive}, or radial basis functions; broad reviews of these variants are given in \cite{noorizadegan2026practitioner,faroughi2026kolmogorov}, and necessary and sufficient conditions for the universality of KAN architectures are established in \cite{ismailov2026necessary}.

K\r{u}rkov\'a--Kolmogorov--Arnold networks (KKANs) \cite{toscano2025kurkovakans} were introduced to return to the two-block structure of the theorem. Exact Kolmogorov representations are unsuited to learning, since the inner functions of the constructive proofs are highly non-smooth and carry no adjustable parameters, although explicit constructions of approximate superpositions with $C^2$-smooth inner functions have recently been obtained \cite{song2025explicit}; following K\r{u}rkov\'a \cite{kurkova1992kolmogorov}, KKANs sacrifice exactness and approximate both levels of the superposition with smooth learnable components, using robust multilayer perceptrons as inner functions and linear combinations of basis functions as outer functions. The resulting architecture is a universal approximator regardless of the choice of basis. Our construction builds on KKANs; the structural constraints that make statistical inference tractable are introduced in Section~\ref{sec:tractable-architecture}. These structural constraints lead to tractable inference with KANs generally; and work well in practice with KKANs specifically.

\subsection{Sum--Product Networks and Tractable Inference}
\label{app:background-circuits}

Sum--Product Networks (SPNs), more generally referred to as probabilistic circuits or tractable circuits, are deep computational architectures that represent multivariate functions through recursive compositions of sums and products over structured computational graphs. Originally introduced for tractable probabilistic inference, SPNs represent probability distributions through hierarchical decompositions into mixtures (sum nodes) and factorizations (product nodes). More generally,  circuits provide a unifying framework for tractable probabilistic models \cite{choi2020Probabilistic,vergari2021Compositional} and structured function representations \cite{wang2024compositional,de2021compilation}.

Let $\mathcal{C} \equiv (\mathcal{G}, \mathbf{W})$ denote a circuit with computational graph \(\mathcal{G}\) and parameters \(\mathbf{W}\), representing a function $\mathcal{C}(\mathbf{x})$ over input variables $\mathbf{x} = \{x_1,\ldots,x_D\}$.
Here, \(\mathcal{G}\) is an acyclic directed graph with three types of nodes: input nodes, sum nodes, and product nodes. Let \(\mathrm{in}(c)\) denote the set of children of some node \(c \in \mathcal{G}\). An input node \(c\) represents a parameterized function
\begin{align}
g_c(\mathbf{x}_c)
\end{align}
over a subset of variables \(\mathbf{x}_c \subseteq \mathbf{x}\), referred to as its scope. Internal nodes recursively combine their children through weighted sums or products. A sum node computes
\begin{align}
\sum_{c' \in \mathrm{in}(c)}
w_{c,c'} f_{c'}(\mathbf{x}_{c'}),
\end{align}
whereas a product node computes
\begin{align}
\prod_{c' \in \mathrm{in}(c)}
f_{c'}(\mathbf{x}_{c'}).
\end{align}
The scope of an internal node is recursively defined as
\begin{align}
\mathbf{x}_c
=
\bigcup_{c' \in \mathrm{in}(c)}
\mathbf{x}_{c'}.
\end{align}
Probabilistic circuits can be interpreted as neural architectures with additive and multiplicative interactions arranged hierarchically through depth, where expressive nonlinearities are localized at the input functions while multiplicative structure governs variable interactions. It follows immediately from this composition that the expressiveness of SPNs multiplies with every product node, and hence scales exponentially with depth \cite{delalleau2011shallow}. 

Formally, decomposability requires
\begin{align}
\mathbf{x}_{c'} \cap \mathbf{x}_{c''} = \emptyset
\end{align}
for all distinct children \(c',c'' \in \mathrm{in}(c)\) of every product node \(c\). In many applications, one additionally assumes structured-decomposability, meaning that product nodes with identical scope decompose variables in the same manner throughout the network.

Often, circuits are additionally assumed to satisfy smoothness, meaning that all children of a sum node possess identical scope \cite{choi2020Probabilistic}. While smoothness is not required for tractable integration itself, it ensures that the circuit defines a valid probabilistic model with well-defined marginalization and conditioning operations.

In the next section, we show that polynomial outer nonlinearities induce decomposable multiplicative structure in KANs by transforming sums of univariate inner functions into recursively separable product expansions.

\section{Tractable Inference with Kolmogorov--Arnold Networks}
\label{app:tractability-details}

\subsection{Building intuition with shallow minimal-order networks}

For the quadratic expansion \eqref{eq:expansion-intuition}, assuming factorizing integration measures, it follows that
\begin{align}
    \int f(\mathbf{x}) \mathrm{d}V_{\mathbf{x}} = 
\sum_{q=1}^{2D+1}
\left(
\sum_{d=1}^{D}
\int\phi_{q,d}^2(x_d) \mathrm{d}x_d
+
2\sum_{d<d'}\left(\int
\phi_{q,d}(x_d) \mathrm{d}x_d \right) \left(\int \phi_{q,d'}(x_{d'}) \mathrm{d}x_{d'}\right)
\right)\;.
\end{align}
Consequently, the Kolmogorov--Arnold representation acquires a decomposable multiplicative structure analogous to that appearing in sum--product networks and probabilistic circuits, and the $D$-dimensional integral decomposes into $(2D+1)\big(D+\tfrac{D(D-1)}{2}\big)$ algebraic terms, assembled from $2D$ unique one-dimensional leaf moments per branch (the first and second moment of each leaf, with the cross terms reusing the first moments).

\subsection{Higher orders}
\label{app:higher-orders}

Consider an outer activation function as polynomial of order $P$, i.e. 
\begin{align} \label{eq:higher-order-activation}
    \Phi_q(t) := \sum_{p=0}^P c_p t^p
\end{align}
where $c_p$ are coefficients and the order and the summation index can depend on $q$. For now, consider only the leading order. By the multinomial theorem we have
\begin{align} \label{eq:multinomial-theorem}
\left(
\sum_{d=1}^{m}
z_{q,d}
\right)^P
=
\sum_{\substack{k_{q,d} \geq 0 \\ \sum_{d=1}^{m} k_{q,d}=P}}
\binom{P}{k_{q,1},\ldots,k_{q,m}}
\prod_{d=1}^{m}
z_{q,d}^{k_{q,d}}.
\end{align}
Identifying $z_d \equiv \phi_{q,d}(x_d)$, each term again factorizes into products over disjoint variables, yielding decomposable structure and tractable integration under factorizing input measures. 

If the outer activation is a quadratic polynomial, then we get products over pairs of univariate inner activation functions. If the outer activation is a cubic polynomial, then we get products over triples of univariate inner activation functions. More generally, we get sums over products $\prod_{d \in I}
\big(\phi_{q,d}(x_d)\big)^{k_{q,d}}$, where $I$ is a node-specific index set (analogous to the children) and the cardinality of $I$ is the order of the polynomial defining the outer activation function. The exponent $k_{q,d}$ accounts for duplicatives, e.g. $k_{q,d}=2$ for the squared terms in the expanded form \eqref{eq:expansion-intuition} (right hand side) and $k_{q,d}=1$ for the product of pairs for which $d'\neq d$.
If the input probability density function factorizes as
\begin{align}
\rho(\mathbf{x})
=
\prod_{d=1}^{D}
\rho(x_d),
\end{align}
then the expectations over these products decompose as
\begin{align} \label{eq:distribution}
\int
\prod_{d \in I} \big(
\phi_{q,d}  (x_d)\big)^{k_{q,d}}
\, \rho(\mathbf{x})
\, \mathrm{d}V_{\mathbf{x}}
=
\prod_{d \in I}
\int \left(
\phi_{q,d}(x_d)\right)^{k_{q,d}}\rho(x_d)\,\mathrm{d}x_d.
\end{align}
Thus far, we have only considered the leading order $P$. The process is repeated analogously for the trailing orders $P-1, P-2,\dots$, and distributivity of the integral with summation over the orders $p$ in \eqref{eq:higher-order-activation}, $\sum_p^P$, is trivial.

In our implementation based on Kurkova-KANs, the inner activation functions are univariate multilayer perceptrons themselves, hence the one-dimensional integrals are computed numerically. In this case, inference remains tractable but becomes near-exact, rather than exact, due to the numerical error of the one-dimensional integrals. We will see in the experiments that this is a good compromise trading inference tractability against expressivity of the architecture.
We note that these low-dimensional integrals, depending on the inference task, usually need to be computed only once. 

\subsection{Deep layers}
\label{app:tractability-deep}

This principle immediately extends also to multiple layers. Consider a nested application of a polynomial outer activation as in a deep KAN as in \eqref{eq:DeepKAN}. Define the outer activation function as in \eqref{eq:higher-order-activation}, however now the summation index $p$ additionally depends on the layer index $\ell$. We refrain from overloading the notation further via e.g. $p_{q,\ell}$ and instead demonstrate the principle. For simplicity, we first discuss the leading order $P$ at layer $\ell$, which we equip with a layer-index, i.e. $P_{\ell}$. Note that the order can depend on the layer and the edge index, but for simplicity we assume that all edges in a given order are a polynomial of the same order.
Then the leading order of layer $\ell+1$ in \eqref{eq:DeepKAN} becomes 
\begin{align}
\sum_{n_\ell=1}^{N_\ell}
\!\left(
z_{n_\ell}^{(\ell)}
\right)^{P_\ell},
\end{align}
and the leading order of layer $\ell+2$ becomes
\begin{align} \label{eq:trailing-orders}
\sum_{n_{\ell+1}=1}^{N_{\ell+1}}
\!\left(
\sum_{n_{\ell}=1}^{N_\ell}
\!\left(
z_{n_{\ell}}^{(\ell)}
\right)^{P_\ell}
\right)^{P_{\ell+1}} =
\sum_{n_{\ell+1}=1}^{N_{\ell+1}}
\sum_{n_{\ell}=1}^{N_\ell}
\!\left(
z_{n_{\ell}}^{(\ell)}
\right)^{P_\ell {P_{\ell+1}}} + \text{trailing orders}.
\end{align}
If we integrate this expression, by linearity of expectation and Fubini's theorem under mild integrability conditions, we can exchange the order of summation and integration,
\begin{align} \label{eq:trailing-orders-integrated}
\mathbb{E}\left[\sum_{n_{\ell+1}=1}^{N_{\ell+1}}
\!\left(
\sum_{n_{\ell}=1}^{N_\ell}
\!\left(
z_{n_{\ell}}^{(\ell)}
\right)^{P_\ell}
\right)^{P_{\ell+1}}\right] =
\sum_{n_{\ell+1}=1}^{N_{\ell+1}}
\sum_{n_{\ell}=1}^{N_\ell}
\!\mathbb{E}\left[\left(
z_{n_{\ell}}^{(\ell)}
\right)^{P_\ell{P_{\ell+1}}}\right] + \text{trailing orders}.
\end{align}
We can then apply again the multinomial theorem \eqref{eq:multinomial-theorem} to the leading order above and reduce the expression as discussed in \ref{app:higher-orders}. 
In addition, we can apply the multinomial theorem to all trailing orders in \eqref{eq:trailing-orders} since all orders have  polynomial form.

\subsection{Implementation}
\label{app:tractability-implementation}

Repeating the procedure across all orders and layers decomposes a TOUCKAN into low-dimensional components. We defer the complete formulation for arbitrary orders and layers to \ref{app:moments-proof}. In practice, the KAN computational graph and the repeated binomial convolutions of Section~\ref{sec:moment-recursion} perform this bookkeeping.

We illustrate this informally as follows.
Consider a definition of $f(\mathbf{x})$ via a composition operator $\mathcal{F}_{\mathcal{C}, K}$ of the set of leaves $\{\phi_{q,d}(x_d)\}$, where the operator is defined by the composition implied by the circuit $\mathcal{C}$ and $K$ the set of powers indicating the monomial orders after expansion: 
\begin{align}
f(\mathbf{x}) = \mathcal{F}_{\mathcal{C},K}\big[\{\phi_{q,d}(x_d)\}\big]
\end{align}
Under factorizing priors, we write
\begin{align} \label{eq:composition-operator-commutator}
    \mathbb{E}[f(\mathbf{x})] = \mathbb{E}\left[\mathcal{F}_{\mathcal{C},K}\big[\{\phi_{q,d}(x_d)\}\big]\right] \hat= \mathcal{F}_{\mathcal{C},K}\Big[\Big\{ \mathbb{E}\big[\phi_{q,d}^{k_{q,d}}(x_d)\big]\Big\}\Big].
\end{align}
In other words, the expectation of the composition of leaf functions can be written as the composition of the expectation of (powers of ) the leaf functions. 
Therefore, the circuit implementation replaces each leaf by the expectation of its required powers and then performs a forward pass through the circuit. Leaf replacement is a general property of probabilistic circuits~\cite{choi2020Probabilistic}, while TOUCKANs must additionally track the powers through the binomial recursion of Section~\ref{sec:moment-recursion}. We verify \eqref{eq:composition-operator-commutator} numerically in Section~\ref{sec:tractable-inference}.

\subsection{Computational complexity scaling}
\label{app:complexity-scaling}
For a shallow TOUCKAN with $Q$ outer branches and polynomial degree $P$, the multinomial expansion of each outer activation produces
\[
T(D,P)=\sum_{p=0}^{P}{D+p-1\choose p}={D+P\choose P}
\]
separable monomial terms. Hence the total algebraic cost scales as $Q{D+P\choose P}$, which is $\mathcal O(QD^P)$ for fixed $P$. In the classical KAN width $Q=2D+1$, this gives $\mathcal O(D^{P+1})$. The one-dimensional moment evaluations scale as $\mathcal O(QDPM)$ when $M$ quadrature points are used per leaf moment if distinct leaves are used per branch. For leaves with analytic expressions of the integral, we have $M=1$.
Therefore, direct quadrature with cost $\mathcal O(M^D)$ is replaced by
\[
\mathcal O\!\left(QDPM + Q{D+P\choose P}\right),
\]
which is polynomial in $D$ for fixed polynomial degree. We note that the scaling above applies to the explicitly expanded shallow TOUCKAN representation. In deep TOUCKANs, the relevant complexity measure becomes the size of the induced decomposable computational graph generated through recursive multinomial expansion across layers. Accordingly, tractability refers to polynomial complexity in the size of this expanded representation.

The computational complexity of the proposed moment-propagation procedure depends on the number of decomposed monomial terms induced by the polynomial outer activations. For a polynomial order $P$ and input dimension $D$, the multinomial expansion generates combinatorial sums over separable products, while each resulting integral factorizes into low-dimensional integrals over the individual variables. Most importantly, the inference procedure scales linearly with the number of quadrature points $M$.
In contrast to direct quadrature, whose complexity typically scales exponentially with dimension, the TOUCKAN decomposition replaces high-dimensional integration by compositions of one-dimensional moment evaluations and algebraic operations through the computational graph. Consequently, the dominant computational cost shifts from high-dimensional sampling to the evaluation of the induced low-dimensional moments and the combinatorial expansion itself. In practice, this yields substantial computational savings in high-dimensional uncertainty propagation problems compared to brute-force Monte Carlo or dense quadrature methods.


For a deep TOUCKAN, the relevant complexity is the size of the expanded decomposable computational graph. Let $T_\ell$ denote the number of separable monomial terms required to represent one node at layer $\ell$ after recursive multinomial expansion. For input dimension $D=N_0$, layer widths $N_\ell$, and polynomial degrees $P_\ell$, the first layer satisfies
\[
T_1={D+P_1\choose P_1}.
\]
For deeper layers, each node receives $N_{\ell-1}$ inputs, each of which is itself represented by $T_{\ell-1}$ separable terms. Therefore, a degree-$P_\ell$ polynomial expansion produces the upper bound
\[
T_\ell
\le
{N_{\ell-1}T_{\ell-1}+P_\ell\choose P_\ell},
\qquad \ell=2,\ldots,L.
\]
In the simple case of uniform width $N_\ell=N$ and uniform polynomial degree $P_\ell=P$, this implies
\[
T_L
=
\mathcal O\big(
N^{P+P^2+\cdots+P^{L-1}}
D^{P^L}
\big),
\]
for fixed $P$. Thus, exact moment propagation is polynomial in the input dimension for fixed depth and fixed polynomial degree, but the degree of this polynomial grows exponentially with depth. This quantifies the sense in which deep TOUCKANs are tractable with respect to the expanded decomposable representation; uncontrolled increases in both depth and polynomial order can lead to rapid combinatorial growth.

\section{General tractable moment propagation in TOUCKANs}
\label{app:moments-proof}

In this appendix, we formalize the tractable inference property of TOUCKANs for arbitrary depth and polynomial order.

\subsection{Expanded decomposable representation}

Consider a TOUCKAN represented through a computational graph
$\mathcal C$
with leaves
$\phi_{q,d}(x_d)$
and polynomial outer activations.
After recursive multinomial expansion of all polynomial activations,
the resulting function admits the representation
\begin{align}
f(\mathbf x)
=
\sum_{m=1}^{M}
c_m
\prod_{d=1}^{D}
\phi_{m,d}(x_d)^{k_{m,d}},
\label{eq:expanded-tkan}
\end{align}
where
$c_m \in \mathbb R$
are coefficients,
$k_{m,d}\in\mathbb N_0$
are nonnegative integer exponents,
and each factor depends only on a single variable $x_d$.

Equation \eqref{eq:expanded-tkan} is the fully expanded decomposable representation induced by the recursive multinomial expansion of the TOUCKAN computational graph.

\subsection{Moment propagation operator}

Let
$\rho(\mathbf x)$
be a factorizing probability density of the form
\begin{align}
\rho(\mathbf x)
=
\prod_{d=1}^{D}
\rho_d(x_d).
\label{eq:factorizing-density}
\end{align}
Define the leaf moment 
\begin{align}
\mu_{m,d}^{(k)}
=
\int
\phi_{m,d}(x_d)^k
\rho_d(x_d)
\, \mathrm d x_d.
\label{eq:leaf-moment}
\end{align}

The expectation of the TOUCKAN can then be written as
\begin{align}
\mathbb E[f(\mathbf x)]
=
\sum_{m=1}^{M}
c_m
\prod_{d=1}^{D}
\mu_{m,d}^{(k_{m,d})}.
\label{eq:moment-propagation}
\end{align}

\subsection{Main theorem}

\begin{theorem}[Tractable moment propagation in TOUCKANs]
\label{thm:tkan-tractability}
Consider a KAN with arbitrary depth and polynomial outer activations.
Assume:
\begin{enumerate}
    \item all outer activations are finite-degree polynomials,
    \item all leaf functions are integrable and their powers are integrable,
    \item the input density factorizes as in
    \eqref{eq:factorizing-density}.
\end{enumerate}

Then every statistical moment of the form
\begin{align}
\mathbb E[h(f(\mathbf x))]
=
\int
h(f(\mathbf x))
\rho(\mathbf x)
\, \mathrm dV_{\mathbf x},
\end{align}
For polynomial $h$, this integral reduces exactly to one-dimensional leaf moments of the form \eqref{eq:leaf-moment}. These moments are combined through a finite algebraic composition.
Consequently, the high-dimensional integral requires only low-dimensional moment evaluations followed by a forward pass through the expanded decomposable computational graph.
\end{theorem}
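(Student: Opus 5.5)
The plan is an induction on depth that keeps track of one invariant: every node of the network is a finite linear combination of decomposable monomials $\prod_{d=1}^{D}\phi_{m,d}(x_d)^{k_{m,d}}$ in the leaves. Let $\mathcal{S}$ denote the set of all such finite combinations. The first step is to show that $\mathcal{S}$ is a commutative algebra of functions: it is closed under addition and under multiplication. Closure under addition is immediate. For multiplication, the product of two monomials is again a monomial, because factors with the same coordinate index $d$ merge into a single univariate function of $x_d$ (for example $\phi^{k}\phi'^{k'}$), and factors with different indices keep disjoint scopes. To stay inside the form \eqref{eq:expanded-tkan}, I would regard such a merged product $\phi_{m,d}^{k}\phi_{m',d}^{k'}$ as a new univariate leaf with exponent one. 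The cleanest way is to enlarge the leaf set to all finite products of powers of the original univariate leaves on the same coordinate. This is exactly the kind of joint leaf moment $\nu_{d,q,q'}^{(k,k')}$ already used in Section~\ref{sec:variance-recursion}.

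The second step is the induction. At the base, the first layer node $z^{(1)}_n=\sum_{d}\phi_{n,d}(x_d)$ lies in $\mathcal{S}$, since each summand is a monomial with a single nonzero exponent. For the inductive step, if every $z^{(\ell)}_{n'}$ lies in $\mathcal{S}$, then $z^{(\ell+1)}_n=\sum_{n'}\Phi^{(\ell)}_{n,n'}(z^{(\ell)}_{n'})$, where each $\Phi$ is a finite-degree polynomial. Each power $(z^{(\ell)}_{n'})^{p}$ lies in $\mathcal{S}$ by the algebra property, applied through the multinomial theorem \eqref{eq:multinomial-theorem}. Linear combinations then keep $z^{(\ell+1)}_n$ in $\mathcal{S}$. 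This recovers the expansion \eqref{eq:expanded-tkan} for the output $f$, and also for the two-block architecture \eqref{eq:tkan-architecture} as the special case $L=2$. Because $h$ is a polynomial, the same algebra property gives $h(f)\in\mathcal{S}$ at once. So it suffices to prove the theorem for $h(z)=z$ applied to an arbitrary element of $\mathcal{S}$.

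The third step is the integration. For a single monomial, I would apply Fubini/Tonelli to the product density \eqref{eq:factorizing-density}. This gives $\int\prod_d g_d(x_d)\rho(\mathbf{x})\,\mathrm{d}V_{\mathbf{x}}=\prod_d\int g_d\rho_d\,\mathrm{d}x_d$, and linearity over the finite sum then yields \eqref{eq:moment-propagation}. The main obstacle is justifying this step with hypothesis~2 alone. Fubini requires $\prod_d g_d\in L^1(\rho)$, and for that it suffices that each univariate factor $g_d$ lies in $L^1(\rho_d)$, by Tonelli applied to $|g_d|$. The factors $g_d$, however, are products of powers of several leaves. I would therefore read hypothesis~2 as requiring that these finite products be integrable, or more simply that each leaf lie in $L^{r}(\rho_d)$ for every $r$ up to the maximal total degree. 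By H\"older's inequality, the latter condition implies that every merged factor is integrable. For the bounded $\tanh$ branches of TOUCKAN this condition holds trivially.

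To finish, I would note that the reduction is a finite algebraic composition. The number of monomials is bounded by the counts $T_\ell$ of \ref{app:complexity-scaling}. Evaluating the expectation then amounts to replacing each leaf power by its moment $\mu^{(k)}_{m,d}$, as in \eqref{eq:composition-operator-commutator}, followed by one pass through the expanded graph.
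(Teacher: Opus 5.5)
Your proposal is correct and follows the same route as the paper's proof. Both argue by induction over layers that polynomial activations, through the multinomial theorem, preserve a finite sum of decomposable monomials, then apply Fubini under the product density and linearity, and treat polynomial $h$ by the same closure. Your version is more careful in two places the paper glosses. First, you make explicit that same-coordinate products of distinct leaves must be absorbed as new univariate leaves (joint moments such as $\nu_{d,q,q'}^{(k,k')}$), which the paper's monomial-indexed notation $\phi_{m,d}$ handles only implicitly. Second, you justify the Fubini step from hypothesis~2 via Tonelli and H\"older.
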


\subsection{Proof}

\begin{proof}

We proceed by induction over the computational graph.

For the base case, the leaves depend only on single variables:
\begin{align}
g_d(x_d)
=
\phi_d(x_d).
\end{align}
Hence their moments are one-dimensional integrals by definition.

Now assume that all nodes up to layer $\ell$
admit decomposable representations of the form
\begin{align}
z^{(\ell)}
=
\sum_m
c_m
\prod_d
\phi_{m,d}(x_d)^{k_{m,d}}.
\end{align}

Consider a node at layer $\ell+1$ generated through a polynomial activation
\begin{align}
\Phi(t)
=
\sum_{p=0}^{P}
a_p t^p.
\end{align}
Substituting the inductive hypothesis yields powers of sums of decomposable monomials.
Applying the multinomial theorem recursively produces again a finite sum of products of univariate functions,
\begin{align}
z^{(\ell+1)}
=
\sum_{m'}
c_{m'}
\prod_d
\phi_{m',d}(x_d)^{k_{m',d}}.
\end{align}
Therefore decomposability is preserved through every layer.

Since the density factorizes,
Fubini's theorem implies
\begin{align}
\int
\prod_d
\phi_{m,d}(x_d)^{k_{m,d}}
\rho(\mathbf x)
\, \mathrm dV_{\mathbf x}
=
\prod_d
\int
\phi_{m,d}(x_d)^{k_{m,d}}
\rho_d(x_d)
\, \mathrm d x_d.
\end{align}
Thus every monomial term factorizes into one-dimensional moments.

Linearity of integration then yields
\begin{align}
\mathbb E[f(\mathbf x)]
=
\sum_m
c_m
\prod_d
\mu_{m,d}^{(k_{m,d})},
\end{align}
which proves the claim.

For a polynomial observable $h$, the composition $h\circ f$ is again a finite polynomial composition of the same leaves, so the preceding argument yields the general claim.

\end{proof}

\section{Universal approximation of TOUCKANs}
\label{app:universality}

This appendix proves universal approximation for the complete TOUCKAN hierarchy by establishing density of its most restricted rank-one subclass. In this subclass, every dimension evaluates the same $Q$ shared profiles and specializes them through scalars $\lambda_d$ that are shared across branches. The argument follows the two-block approximation strategy of K\r{u}rkov\'a~\cite{kurkova1992kolmogorov}, but is anchored at a superposition theorem of Laczkovich~\cite{laczkovich2021superposition} whose inner data have the shared-profile structure of the TOUCKAN inner block and carry the smallest sufficient branch count available for it.

\begin{theorem}[Laczkovich \cite{laczkovich2021superposition}, Theorem 1.1]
\label{thm:laczkovich}
Let $D\ge2$, let $m>(2+\sqrt{2})D$ be an integer, and let $\lambda^{*}_{1},\dots,\lambda^{*}_{D}$ be distinct positive numbers. Then there exist continuous functions $\varphi_1,\dots,\varphi_m\in C(\mathbb{R})$ with the following property: for every bounded $f \in C(\mathbb{R}^{D})$ there is a continuous function $g \in C(\mathbb{R})$ such that
\begin{equation}
\label{eq:app-laczkovich}
f(\mathbf{x}) = \sum_{q=1}^{m} g\Big( \sum_{d=1}^{D} \lambda^{*}_{d}\, \varphi_q(x_d) \Big)
\qquad \text{for every } \mathbf{x} \in \mathbb{R}^{D}.
\end{equation}
\end{theorem}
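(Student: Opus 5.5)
Theorem~\ref{thm:laczkovich} is quoted from~\cite{laczkovich2021superposition}, and the paper uses it as a black box. If one wanted to verify it, I would follow the Kahane--Hedberg Baire-category route to Kolmogorov superposition, adapted to a single outer function and shared inner functions. The central object is the Banach (or complete metric) space $\mathcal{X}$ of $m$-tuples $\Phi=(\varphi_1,\dots,\varphi_m)$ of continuous functions. One could take bounded increasing functions with the uniform metric, or work with a Fr\'echet space to handle $\mathbb{R}$. I would show that the set of tuples with the representation property is residual, in particular nonempty. Write $T_\Phi g(\mathbf{x})=\sum_{q=1}^{m} g\big(\sum_{d}\lambda^{*}_d\varphi_q(x_d)\big)$. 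The goal is a \emph{contraction lemma}: there is $\theta<1$ such that for generic $\Phi$, every bounded $f\in C(\mathbb{R}^D)$ admits $g\in C(\mathbb{R})$ with $\|g\|_\infty\le \|f\|_\infty/m'$ and $\|f-T_\Phi g\|_\infty\le\theta\|f\|_\infty$, for some fixed $m'$ comparable to $m$. Iterating on the residuals $f-T_\Phi g_1,\ f-T_\Phi(g_1+g_2),\dots$ and summing the geometric series $g=\sum_k g_k$ then gives uniform convergence, continuity of $g$, and the exact identity \eqref{eq:app-laczkovich}.

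The contraction lemma needs a combinatorial covering step. For each scale, partition $\mathbb{R}$ into families of intervals, one shifted family per branch $q$. The requirement is that every $x\in\mathbb{R}$ lies in the interior of a family-$q$ interval for all but a bounded number of indices $q$. The generic $\varphi_q$ is taken nearly constant on each of its intervals, with values chosen so that the images of the boxes $\sum_d\lambda^{*}_d\varphi_q(I_{q,d})$ are pairwise disjoint across different boxes. Distinctness of the $\lambda^{*}_d$, together with a linear-independence (rational-independence) choice of the plateau values, guarantees this separation. One then defines $g$ on each image interval as $f(\text{box center})/m$, extended continuously in between. At each $\mathbf{x}$, a box is good for branch $q$ only if all $D$ coordinates are good. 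With at most $r$ bad families per coordinate, at least $m-rD$ branches contribute roughly $f(\mathbf{x})/m$. The remaining branches contribute at most $\|g\|_\infty$ each. This yields $\theta\approx 1-\frac{m-rD}{m}+\frac{rD}{m}=\frac{2rD}{m}$, up to the error from uniform continuity on the boxes.

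The main obstacle is the quantitative counting, and it is where the threshold $m>(2+\sqrt2)D$ must come from. A naive count needs $m>2D\,r$ with $r\ge1$, and optimizing how the bad sets overlap across coordinates lowers the constant. I would first try a two-level covering, in which each coordinate is bad for a $\gamma$-fraction of branches and $g$ is rescaled by an optimized weight $c$ rather than $1/m$. This should give $\theta=\max\{|1-c(1-\gamma D)|+c\gamma D,\ \dots\}$ as a quadratic optimization, whose feasibility condition plausibly reduces to $m/D>2+\sqrt2$. Unbounded $\mathbb{R}^D$ also needs care. I would apply the contraction on an exhaustion by cubes $[-n,n]^D$, using uniform continuity of $f$ on each cube and boundedness of $f$ to keep $\|g_k\|$ summable. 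Genericity must hold simultaneously for all $n$, which follows because a countable intersection of residual sets is residual.
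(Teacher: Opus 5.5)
The paper gives no proof of this statement. It quotes the result from Laczkovich and uses it as a black box in the proof of Theorem~\ref{thm:tkan-uat}, so there is no in-paper argument to compare with. Judged on its own, your sketch first fails at the separation step. Suppose the coordinates indexed by a set $A\subseteq\{1,\dots,D\}$ fall into the same plateau of $\varphi_q$, with value $a_j$. Then $a_j$ enters $\sum_d\lambda^{*}_d\varphi_q(x_d)$ with coefficient $\sum_{d\in A}\lambda^{*}_d$. Any coincidence $\sum_{d\in A}\lambda^{*}_d=\sum_{d\in B}\lambda^{*}_d$ with $A\neq B$ therefore gives two different boxes the same image for \emph{every} choice of plateau values, and rational independence of those values cannot help.

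This is a defect of the statement as quoted, not of your construction. Take $D=3$ and $\lambda^{*}=(1,2,3)$. The inner sum equals $3\varphi_q(u)+3\varphi_q(v)$ at both $(u,u,v)$ and $(v,v,u)$, for every $q$ and every choice of profiles. Any representation then forces $f(u,u,v)=f(v,v,u)$, which fails for $f(\mathbf{x})=\arctan x_3$. Distinctness of the $\lambda^{*}_d$ is therefore not enough once $D\ge3$. The same swap argument shows that distinct subset sums of the $\lambda^{*}_d$ are necessary, and this is exactly the property your separation step uses. Rational independence, the usual hypothesis in Kahane--Lorentz-type versions, implies it, and for $D=2$ it reduces to distinctness. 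The downstream use in Theorem~\ref{thm:tkan-uat} is unaffected, because that proof only needs some admissible choice of $\lambda^{*}$.

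Your sketch also does not handle the non-compact part, which is where the substance of Laczkovich's theorem lies. Bounded increasing profiles are excluded outright. They have limits at $\pm\infty$, so every $T_\Phi g$ extends continuously to $[-\infty,\infty]^D$, whereas $f(\mathbf{x})=\sin x_1$ does not. Intersecting residual sets over the cubes $[-n,n]^D$ yields a $\Phi$ with a separate $g_n$ on each cube, not one $g$ on $\mathbb{R}$. A bounded continuous $f$ need not be uniformly continuous, so a single contraction step in your scheme must use boxes whose size shrinks toward infinity at an $f$-dependent rate. You must then keep the images of boxes from different regions and scales compatible under one $g$. You must also account for points near the junctions between scales, which is the natural source of additional bad branches.

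Relatedly, the shifted-family covering you describe allows $r=1$. Your own estimate $\theta\approx 2rD/m$ then already closes at $m=2D+1<(2+\sqrt2)D$, the compact Kolmogorov--Kahane count. Optimizing the overlaps can only lower that constant. The threshold $(2+\sqrt2)D$ can therefore arise only if the non-compact construction forces more bad branches per coordinate. Your sketch neither identifies nor quantifies this mechanism, so as written it does not establish the theorem at the stated $m$.
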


The constants $\lambda^{*}_{d}$ and the profiles $\varphi_q$ are universal. Only the outer function $g$ depends on $f$. We anchor the argument at Theorem~\ref{thm:laczkovich} because it carries the smallest sufficient branch count among the superposition theorems of \cite{laczkovich2021superposition}. Its constants are shared across branches, so the network constructed below lies in the rank-one subclass $\lambda_{d,q} = \lambda_d$ of the diagonal coupling \eqref{eq:diag-specialization}. Density of the full diagonal class then follows a fortiori. No monotonicity property is needed in the argument below. Theorem~1.2 of \cite{laczkovich2021superposition} additionally provides monotone profiles at the larger branch count discussed in Remark~\ref{rem:tkan-uat}. We now define the set of functions realizable by the TOUCKAN architecture.

\begin{definition}[TOUCKAN approximator set]
\label{def:tkan-class}
Let $\mathcal{A}_{W}$ denote a class of shared inner networks $\boldsymbol{\psi} = (\psi_1, \dots, \psi_Q) : [0,1] \to \mathbb{R}^{Q}$ (in our implementation, basis-embedded MLPs of size $W$) such that $\bigcup_{W} \mathcal{A}_{W}$ is dense in $C([0,1]; \mathbb{R}^{Q})$ under the uniform norm $\|\boldsymbol{\psi} - \boldsymbol{\varphi}\|_{\infty} = \max_{q \le Q} \sup_{x \in [0,1]} |\psi_q(x) - \varphi_q(x)|$. Multi-output perceptrons have this property \cite{cybenko1989approximation,hornik1989multilayer}, and the basis-embedded networks of our implementation inherit the property whenever the embedding is injective on $[0,1]$. For $Q, P, W \in \mathbb{N}$ define
\begin{equation}
\label{eq:app-tkan-class}
\mathbb{T}_{Q,P,W}
=
\Big\{
F(\mathbf{x}) = \sum_{q=1}^{Q} \sum_{p=0}^{P} C_{q,p}\, T_p(\xi_q),
\quad
\xi_q = \frac{1}{D} \sum_{d=1}^{D} \tanh\big( \lambda_{d,q}\, \psi_q(x_d) \big)
\Big\},
\end{equation}
where $\boldsymbol{\psi} \in \mathcal{A}_{W}$, $\lambda_{d,q} \in \mathbb{R}$, and $C_{q,p} \in \mathbb{R}$, exactly as in \eqref{eq:tkan-architecture}--\eqref{eq:diag-specialization}.
\end{definition}

\begin{theorem}[Universal approximation of TOUCKANs]
\label{thm:tkan-uat}
Let $D \ge 2$ and let $Q \ge m_{L}(D)$, where $m_{L}(D)$ is the smallest integer exceeding $(2+\sqrt{2})D$. Then $\bigcup_{P,W} \mathbb{T}_{Q,P,W}$ is dense in $C(E^{D})$ with $E^{D} = [0,1]^{D}$: for every $f \in C(E^{D})$ and every $\varepsilon > 0$ there exist $P, W \in \mathbb{N}$ and $F \in \mathbb{T}_{Q,P,W}$ such that
\begin{equation}
\| f - F \|_{\infty} = \sup_{\mathbf{x} \in E^{D}} | f(\mathbf{x}) - F(\mathbf{x}) | < \varepsilon.
\end{equation}
\end{theorem}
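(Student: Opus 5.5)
The plan is to take the exact superposition of Theorem~\ref{thm:laczkovich} and convert it into a member of the rank-one subclass of $\mathbb{T}_{Q,P,W}$ in three approximation steps: linearizing the $\tanh$, replacing the outer function by a polynomial, and replacing the profiles by networks. Throughout we take $\lambda_{d,q}=\lambda_d$. Since $Q\ge m_L(D)$, we use $m:=m_L(D)$ active branches and set $C_{q,p}=0$ for $q>m$, so it suffices to treat $Q=m$.

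\textbf{Exact representation.} Fix $f\in C(E^D)$ and $\varepsilon>0$. First extend $f$ (Tietze) to a bounded continuous function on $\mathbb{R}^D$. Fix distinct positive constants $\lambda^*_1,\dots,\lambda^*_D$ and obtain from Theorem~\ref{thm:laczkovich} profiles $\varphi_q$ and an outer $g$ with
\[
f(\mathbf{x})=\sum_{q} g\bigl(s_q(\mathbf{x})\bigr),\qquad s_q(\mathbf{x})=\sum_{d}\lambda^*_d\varphi_q(x_d).
\]
On $[0,1]$ every $\varphi_q$ is bounded by some $B$, so all $s_q$ lie in a compact interval $J$. On the slightly larger interval $J'=J+[-1,1]$, $g$ is uniformly continuous with some modulus $\omega$.

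\textbf{Linearizing the $\tanh$.} This is the step I expect to be the main obstacle: Laczkovich's inner argument is linear in $\lambda^*_d$, while TOUCKAN applies a $\tanh$ to each branch before averaging. I handle it by the small-argument regime. Set $\lambda_d=\eta\lambda^*_d$ with $\eta>0$ small and initially $\psi_q=\varphi_q$. Since $|\tanh u-u|\le |u|^3/3$, we get
\[
\frac{D}{\eta}\,\xi_q=s_q+r_q,\qquad |r_q|\le \tfrac13\,\eta^2 D\,\bigl(\max_d\lambda^*_d\bigr)^3B^3
\]
uniformly on $E^D$. Define $G_\eta(t)=g(Dt/\eta)$, extended continuously outside the relevant range. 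Then
\[
\Bigl|\sum_q G_\eta(\xi_q)-f\Bigr|\le m\,\omega\bigl(\sup|r_q|\bigr),
\]
and choosing $\eta$ small makes this at most $\varepsilon/3$. Note that $\xi_q\in(-1,1)$ automatically, and in fact $\xi_q$ lies in a compact subinterval $I_\eta\subset(-1,1)$ that shrinks like $\eta$.

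\textbf{Polynomial outer function.} With $\eta$ now fixed, $G_\eta$ is continuous on $[-1,1]$. By Weierstrass (equivalently, Chebyshev truncation) there is a polynomial $\Pi=\sum_{p\le P}C_pT_p$ with $\sup_{[-1,1]}|\Pi-G_\eta|<\varepsilon/(3m)$. We use the same coefficients $C_{q,p}=C_p$ for every $q$, which matches the single outer $g$ in Laczkovich's theorem. This gives
\[
\Bigl|\sum_q\Pi(\xi_q)-\sum_qG_\eta(\xi_q)\Bigr|<\varepsilon/3.
\]

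\textbf{Network profiles.} With $\eta$ and $\Pi$ fixed, $\Pi$ is Lipschitz on $[-1,1]$ with some constant $L_\Pi$. Because $\tanh$ is $1$-Lipschitz, replacing $\varphi_q$ by $\psi_q$ with $\boldsymbol{\psi}\in\mathcal{A}_W$ and $\|\boldsymbol{\psi}-\boldsymbol{\varphi}\|_\infty<\delta$ perturbs each $\xi_q$ by at most $\eta\max_d\lambda^*_d\,\delta$. The perturbed $\hat\xi_q$ stay in $(-1,1)$ by construction, since each term is a $\tanh$ averaged over $D$ terms. Hence the total error is at most $mL_\Pi\eta\max_d\lambda^*_d\,\delta$, which is below $\varepsilon/3$ for small $\delta$. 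Such a $\boldsymbol{\psi}$ exists for some $W$ by the density assumption in Definition~\ref{def:tkan-class}.

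\textbf{Conclusion and order of quantifiers.} Summing the three errors gives $\|f-F\|_\infty<\varepsilon$ with $F$ in the rank-one subclass of $\mathbb{T}_{Q,P,W}$. The order of choices matters and must be kept: first $\eta$ (from $\omega$), then $P$ and $\Pi$ (from $\eta$), then $\delta$ and $W$ (from $L_\Pi$). The alternative order fails because $L_\Pi$ may blow up as $\Pi$ approximates the increasingly steep $G_\eta$. Membership in the rank-one subclass immediately gives density of the diagonal, full-head and independent classes, which contain it.
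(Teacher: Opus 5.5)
Your proof is correct and follows the same skeleton as the paper's proof. Both use the Tietze extension, Laczkovich's Theorem~1.1 with $m=m_L(D)$, rank-one scalars proportional to $\lambda^*_d$ with a small scale factor, the cubic remainder of $\tanh$, a rescaled outer function approximated on $[-1,1]$ in the Chebyshev basis, and zero outer coefficients on the unused channels. The paper's $\tau$, $\Lambda$, $A$ correspond to your $\eta$, $\max_d\lambda^*_d$, $B$.

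The one real difference is where the network error $\|\boldsymbol{\psi}-\boldsymbol{\varphi}\|_\infty<\delta$ is absorbed. You linearize with the exact profiles. You then push the perturbation through the Lipschitz constant $L_\Pi$ of the already chosen polynomial, which forces the order $\eta\to\Pi\to\delta$. The paper instead folds $\delta$ into the linearization estimate for the network's own combination variable, $|\xi_q-\tfrac{\tau}{D}s_q|\le\tfrac13\tau^3\Lambda^3(A+1)^3+\tau\Lambda\delta$. After rescaling by $D/\tau$, the network contributes $D\Lambda\delta$, which is independent of $\tau$ and is controlled by the modulus $\omega$ of $g$. The uniform approximation of $\tilde g(t)=g(Dt/\tau)$ on $[-1,1]$ is then applied directly at that $\xi_q$.

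In the paper's arrangement, the tolerance on the inner network never depends on how steep the polynomial is. So $\delta$ (hence $W$) and $P$ can be chosen independently once $\tau$ is fixed. Your remark that the alternative order fails is therefore specific to your Lipschitz bookkeeping, not intrinsic to the problem.

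Your version does give an explicit perturbation bound linear in $\delta$. Also, since $|\tanh u-u|\le|u|^3/3$ holds for every real $u$, it needs no analogue of the paper's condition $\tau\le1/(\Lambda(A+1))$, which the paper uses to keep the branch arguments in $[-1,1]$.
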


\begin{remark}[Universality across the TOUCKAN hierarchy]
\label{rem:touckan-hierarchy-uat}
Let $\mathcal{T}_{\mathrm{rank}}$, $\mathcal{T}_{\mathrm{diag}}$, $\mathcal{T}_{\mathrm{full}}$, and $\mathcal{T}_{\mathrm{ind}}$ denote the rank-one, diagonal, full-head, and independent TOUCKAN classes, respectively. Their parameterizations give the nested function classes
\begin{equation}
\mathcal{T}_{\mathrm{rank}}
\subseteq
\mathcal{T}_{\mathrm{diag}}
\subseteq
\mathcal{T}_{\mathrm{full}}
\subseteq
\mathcal{T}_{\mathrm{ind}}.
\end{equation}
The first inclusion follows by imposing $\lambda_{d,q}=\lambda_d$. The diagonal construction is recovered from the full-head formulation by restricting each head to a scaled shared map, while an independent inner network can reproduce any shared-trunk construction. The proof of Theorem~\ref{thm:tkan-uat} establishes density of $\mathcal{T}_{\mathrm{rank}}$ in $C(E^D)$. Every class containing it is therefore also dense in $C(E^D)$. Thus, all four TOUCKAN parameterizations are universal approximators, and their differences concern finite-parameter accuracy and efficiency rather than asymptotic expressivity.
\end{remark}

\begin{proof}
Let $f \in C(E^{D})$ and $\varepsilon > 0$. By the Tietze extension theorem, $f$ extends to a bounded continuous function on $\mathbb{R}^{D}$ with the same supremum norm, so Theorem~\ref{thm:laczkovich} applies with $m = m_{L}(D) \le Q$: there are positive constants $\lambda^{*}_{d}$, continuous profiles $\varphi_q \in C(\mathbb{R})$, and $g \in C(\mathbb{R})$ such that, writing
\begin{equation}
s_q(\mathbf{x}) := \sum_{d=1}^{D} \lambda^{*}_{d}\, \varphi_q(x_d),
\end{equation}
we have $f(\mathbf{x}) = \sum_{q=1}^{m} g(s_q(\mathbf{x}))$ on $E^{D}$.

Step 1 (compact ranges). Each $\varphi_q$ is continuous, hence bounded on $[0,1]$: set $A := \max_{q} \max_{x \in [0,1]} |\varphi_q(x)|$ and $\Lambda := \max_{d} \lambda^{*}_{d}$, so that $s_q(\mathbf{x}) \in [-B, B]$ with $B := D \Lambda A$. The outer function $g$ is uniformly continuous on $[-B-1, B+1]$; let $\omega$ denote its modulus of continuity there.

Step 2 (inner approximation). Define the padded target $\boldsymbol{\varphi} := (\varphi_1, \dots, \varphi_m, 0, \dots, 0) \in C([0,1]; \mathbb{R}^{Q})$, whose last $Q - m$ components vanish (these channels receive zero outer coefficients in Step 5). By the density assumption on $\mathcal{A}_W$ there exists $\boldsymbol{\psi} \in \mathcal{A}_{W}$ with $\|\boldsymbol{\psi} - \boldsymbol{\varphi}\|_{\infty} < \delta$ for a $\delta \in (0,1]$ fixed below; in particular $\max_{q \le m} \max_{x \in [0,1]} | \psi_q(x) - \varphi_q(x) | < \delta$ and $|\psi_q| \le A + 1$ on $[0,1]$.

Step 3 (scalars and $\tanh$ linearization). For a scale $\tau \in \big(0, \tfrac{1}{\Lambda(A+1)}\big]$ fixed below, set the trainable scalars to $\lambda_{d,q} := \tau \lambda^{*}_{d}$, independent of $q$, so the constructed network lies in the rank-one subclass $\lambda_{d,q} = \lambda_d$. Each branch argument $u = \lambda_{d,q} \psi_q(x_d)$ then satisfies $|u| \le \tau \Lambda (A+1) \le 1$, and since $|\tanh(u) - u| \le |u|^{3}/3$,
\begin{equation}
\Big| \xi_q - \frac{\tau}{D}\sum_{d=1}^{D} \lambda^{*}_{d} \psi_q(x_d) \Big| \le \frac{\big(\tau \Lambda (A+1)\big)^{3}}{3},
\qquad
\Big| \frac{\tau}{D}\sum_{d=1}^{D} \lambda^{*}_{d} \psi_q(x_d) - \frac{\tau}{D} s_q(\mathbf{x}) \Big| \le \tau \Lambda \delta,
\end{equation}
so that $\big| \xi_q - \tfrac{\tau}{D} s_q(\mathbf{x}) \big| \le \rho := \tfrac{1}{3}\tau^{3} \Lambda^{3} (A+1)^{3} + \tau \Lambda \delta$. Note that $\xi_q \in (-1,1)$ always (the $\tanh$ bounds each branch and the average preserves the bound), and $\tfrac{\tau}{D} s_q(\mathbf{x}) \in [-\tau \Lambda A, \tau \Lambda A] \subset (-1,1)$.

Step 4 (rescaled outer function). Since $g \in C(\mathbb{R})$, define $\tilde{g}(t) := g(D t / \tau)$ for $t \in [-1,1]$, so that $g(s_q(\mathbf{x})) = \tilde{g}\big( \tfrac{\tau}{D} s_q(\mathbf{x}) \big)$. Provided $D\rho/\tau \le 1$, Step 3 gives $| D \xi_q / \tau - s_q(\mathbf{x}) | \le D\rho/\tau \le 1$; since $s_q(\mathbf{x}) \in [-B, B]$, both arguments lie in $[-B-1, B+1]$, where the modulus $\omega$ of Step 1 applies, and therefore
\begin{equation}
\big| \tilde{g}(\xi_q) - g(s_q(\mathbf{x})) \big|
\le \omega\Big( \frac{D \rho}{\tau} \Big)
= \omega\Big( \frac{D \Lambda^{3} (A+1)^{3}}{3}\, \tau^{2} + D \Lambda \delta \Big).
\end{equation}
The argument of $\omega$ tends to zero as $\tau \to 0$ and $\delta \to 0$. Indeed, the factor $\tau$ multiplying $\delta$ in $\rho$ cancels the division by $\tau$. We may therefore first fix $\tau$ and then $\delta$ so that $D\rho/\tau \le 1$ and the right-hand side is below $\varepsilon / (3m)$.

Step 5 (outer approximation). By the Weierstrass approximation theorem there is a polynomial of some degree $P$, written in the Chebyshev basis as $\Phi_q(t) = \sum_{p=0}^{P} C_{q,p} T_p(t)$, with $\max_{t \in [-1,1]} |\Phi_q(t) - \tilde{g}(t)| < \varepsilon/(3m)$; for the unused branches $m < q \le Q$ set $C_{q,p} = 0$.

Step 6 (assembly). The resulting $F \in \mathbb{T}_{Q,P,W}$ satisfies, for every $\mathbf{x} \in E^{D}$,
\begin{equation}
| f(\mathbf{x}) - F(\mathbf{x}) |
\le \sum_{q=1}^{m} \Big[ \big| g(s_q(\mathbf{x})) - \tilde{g}(\xi_q) \big| + \big| \tilde{g}(\xi_q) - \Phi_q(\xi_q) \big| \Big]
< m \Big( \frac{\varepsilon}{3m} + \frac{\varepsilon}{3m} \Big) < \varepsilon,
\end{equation}
where the second term uses $\xi_q \in (-1,1)$. This completes the proof.
\end{proof}

\begin{remark}
\label{rem:tkan-uat}
The branch count required by Theorem~\ref{thm:tkan-uat} is $Q \ge m_{L}(D)$, the smallest integer exceeding $(2+\sqrt{2})D$ and asymptotically about $3.41\,D$. This count is inherited from Theorem~\ref{thm:laczkovich}, and Laczkovich notes that it is probably not sharp~\cite{laczkovich2021superposition}. Our K\r{u}rkov\'a-variant TOUCKAN architecture replaces the single outer function $g$ in \eqref{eq:app-laczkovich} with one Chebyshev expansion $\Phi_q$ per branch. Because this architecture contains the common-outer-function class as a special case, the density statement holds a fortiori. The branch counts used in the experiments were selected for each problem and are reported in \ref{app:implementation}; these finite-budget choices do not test the sharpness of the sufficient bound. Because the scalars constructed in Step 3 are shared across branches, the proof establishes density of the rank-one subclass ($\lambda_{d,q} = \lambda_d$). Universality therefore holds throughout the complete hierarchy, and the preference for the diagonal form in Section~\ref{sec:param-efficiency} rests on finite-budget accuracy rather than asymptotic expressivity. Finally, the same argument applies verbatim to Theorem~1.2 of \cite{laczkovich2021superposition}. Its steps are the Tietze extension, padded vector approximation, small-$\tau$ linearization, and Chebyshev outer approximation. In that theorem, the constants $\lambda^{*}_{d,q}$ carry one value per dimension--branch pair and directly mirror the diagonal coupling \eqref{eq:diag-specialization}, at the larger branch count $m > (2+\sqrt{2})(2D-1)$, asymptotically about $6.83\,D$.
\end{remark}

\begin{remark}
\label{rem:tkan-uat-2} The universal approximation result is inherited from the underlying Kolmogorov--Arnold representation theorems and therefore applies to continuous functions. This setting is appropriate for the scientific machine learning applications considered here, where the quantities of interest are typically continuous or smooth fields. It is more restrictive than general machine learning settings that may contain discontinuous decision boundaries. This limitation is a general aspect of Kolmogorov--Arnold representations and is not unique to our KAN variants.
\end{remark}








\section{Moment-propagation algorithms}
\label{app:algorithms}

This appendix derives, in full detail, the exact evaluation and propagation of the first and second moments of a TOUCKAN surrogate. Exactness refers to the propagation: every recursion below is algebraically exact in the leaf moments, which are themselves one-dimensional integrals evaluated by $M$-node Gaussian quadrature. Thus, all computed moments are exact for the quadrature-induced discrete measure and approximate the continuum moments with spectral accuracy in $M$ (cf.\ the error decomposition \eqref{eq:tkan-error-decomposition}). \ref{app:alg-mean} derives the first moment and concludes with Algorithm~\ref{alg:tkan-expectation}. \ref{app:alg-variance} derives the second moment and the variance and concludes with Algorithm~\ref{alg:tkan-variance}. Throughout, the input follows the factorizing density $\rho(\mathbf{x}) = \prod_{d=1}^{D}\rho_d(x_d)$, so the coordinates $x_1,\dots,x_D$ are mutually independent, and the surrogate is the TOUCKAN
\begin{align} \label{eq:app-tkan-setting}
f(\mathbf{x})
=
\sum_{q=1}^{Q} \Phi_q(\xi_q),
\qquad
\Phi_q(\xi)
=
\sum_{p=0}^{P} C_{q,p}\, T_p(\xi),
\qquad
\xi_q
=
\sum_{d=1}^{D} \widetilde{\Psi}_{d,q}(x_d),
\end{align}
where $\widetilde{\Psi}_{d,q} = \Psi_{d,q}/D$ are the normalized branches and $T_p(\xi) = \sum_{j=0}^{p} a_{p,j}\,\xi^{j}$ is the Chebyshev polynomial of degree $p$ in monomial form, with fixed integer coefficients $a_{p,j}$.

\subsection{Exact first moment}
\label{app:alg-mean}

\paragraph{Reduction to scalar moments} The outer blocks are polynomial in the combination variables, so the first moment of the network reduces to scalar moments of each $\xi_q$. Applying linearity of expectation to \eqref{eq:app-tkan-setting} and substituting the monomial form of $T_p$,
\begin{align} \label{eq:app-mean-reduction}
\mathbb{E}\big[f(\mathbf{x})\big]
=
\sum_{q=1}^{Q} \mathbb{E}\big[\Phi_q(\xi_q)\big]
=
\sum_{q=1}^{Q}\sum_{p=0}^{P} C_{q,p}\,
\mathbb{E}\big[T_p(\xi_q)\big]
=
\sum_{q=1}^{Q}\sum_{p=0}^{P} C_{q,p}
\sum_{j=0}^{p} a_{p,j}\,
\mathbb{E}\big[\xi_q^{\,j}\big],
\end{align}
The remaining task is therefore to compute the moments $m_{q,p} = \mathbb{E}[\xi_q^{\,p}]$ for $p = 0,\dots,P$.

\paragraph{Branch isolation} The computation of these moments requires the individual branches. In practice, however, all branches are computed at the same time: the inner block is a single fused map $\boldsymbol{\Psi}\colon (x_1,\dots,x_D) \mapsto \boldsymbol{\xi} \in \mathbb{R}^{Q}$, and no individual branch is evaluated separately. Because each branch depends only on its own coordinate, the contribution of dimension $d$ can nevertheless be recovered by evaluating the fused map with every other slot held at a fixed reference point, chosen as the origin. Define
\begin{align} \label{eq:tkan-branch-isolation}
\boldsymbol{\varphi}_d(x)
=
\boldsymbol{\Psi}(0,\dots,0,\underbrace{x}_{\text{slot } d},0,\dots,0),
\qquad
\mathbf{K}
=
\boldsymbol{\Psi}(0,\dots,0),
\end{align}
both vectors in $\mathbb{R}^{Q}$. The evaluation $\boldsymbol{\varphi}_d(x)$ contains the branch of dimension $d$ together with the constant contribution of the $D-1$ zeroed branches. Summing over $d$ therefore counts this constant $D$ times instead of once, which yields the structural identity
\begin{align} \label{eq:tkan-isolation-identity}
\boldsymbol{\xi}(\mathbf{x})
=
\sum_{d=1}^{D}\big[\boldsymbol{\varphi}_d(x_d) - \mathbf{K}\big] + \mathbf{K}.
\end{align}
The identity follows from the additive structure of the combination layer alone and involves no approximation. In particular, it holds for any inner block whose branches are univariate, and therefore for all variants used in this paper.

\paragraph{Centering} The recursion is not run on the raw evaluations $\boldsymbol{\varphi}_d$, because they share the common offset $\mathbf{K}$ and this offset destroys the computation in high dimension. The failure is quantified at the end of this subsection. Instead, define the centered branches
\begin{align} \label{eq:tkan-centered-branch}
\boldsymbol{\delta}_d(x)
=
\boldsymbol{\varphi}_d(x) - \mathbf{K},
\qquad
\boldsymbol{\delta}_d(0) = \mathbf{0}.
\end{align}
Componentwise, $\delta_{d,q}(x) = \widetilde{\Psi}_{d,q}(x) - \widetilde{\Psi}_{d,q}(0)$, so each centered branch is the branch itself with its value at the origin removed. With this notation the isolation identity \eqref{eq:tkan-isolation-identity} becomes
\begin{align} \label{eq:tkan-centered-sum}
\boldsymbol{\xi}(\mathbf{x})
=
\mathbf{S}(\mathbf{x}) + \mathbf{K},
\qquad
\mathbf{S}(\mathbf{x})
=
\sum_{d=1}^{D}\boldsymbol{\delta}_d(x_d).
\end{align}
The remaining derivation follows this decomposition. The moments of the centered sum $S_q$ are computed first, and the deterministic shift by $K_q$ is reintroduced at the end.

\paragraph{Centered leaf moments} For each dimension and each branch, define the centered leaf moments
\begin{align} \label{eq:app-centered-leaf}
\widehat{\mu}_{d,q}^{(k)}
=
\mathbb{E}\big[\delta_{d,q}(x_d)^{k}\big]
=
\int \delta_{d,q}(x_d)^{k}\,\rho_d(x_d)\,dx_d,
\qquad k = 0,\dots,P.
\end{align}
Each $\widehat{\mu}_{d,q}^{(k)}$ is a one-dimensional integral. It is evaluated by a Gaussian quadrature rule with $M$ nodes whose weight function matches the marginal $\rho_d$, namely Gauss--Legendre for uniform inputs and Gauss--Hermite for Gaussian inputs. This quadrature is the only discretization in the entire procedure; everything downstream is algebraically exact in the computed leaf moments and is evaluated in double precision. The integrand follows from \eqref{eq:tkan-branch-isolation} and \eqref{eq:tkan-centered-branch}: the fused inner block is evaluated at the $M$ nodes of each dimension, and the baseline $\mathbf{K}$ is subtracted. This stage costs $DM$ evaluations of the inner block and $\mathcal{O}(QDPM)$ operations for the moment sums.

\paragraph{Binomial recursion} The moments of $S_q$ are assembled one dimension at a time. Define the centered partial sums
\begin{align} \label{eq:app-partial-sums}
S_{j,q} = \sum_{d=1}^{j}\delta_{d,q}(x_d),
\qquad
S_{D,q} = S_q.
\end{align}
The partial sum $S_{j-1,q}$ depends only on $(x_1,\dots,x_{j-1})$, while the next branch $\delta_{j,q}(x_j)$ depends only on $x_j$. Under a factorizing density, the two are therefore independent random variables. Raising the sum to the power $p$ with the binomial theorem and using this independence,
\begin{align} \label{eq:app-binomial-derivation}
\mathbb{E}\big[S_{j,q}^{\,p}\big]
&=
\mathbb{E}\Big[\big(S_{j-1,q} + \delta_{j,q}(x_j)\big)^{p}\Big]
=
\sum_{k=0}^{p}\binom{p}{k}\,
\mathbb{E}\big[S_{j-1,q}^{\,p-k}\,\delta_{j,q}^{\,k}(x_j)\big]
\nonumber \\
&=
\sum_{k=0}^{p}\binom{p}{k}\,
\mathbb{E}\big[S_{j-1,q}^{\,p-k}\big]\,
\widehat{\mu}_{j,q}^{(k)}.
\end{align}
The first line uses the binomial theorem and linearity of expectation. The second line uses independence: the expectation of a product of independent variables is the product of their expectations. The result is a dynamic program over the input dimensions. It is initialized at the first dimension, $\mathbb{E}[S_{1,q}^{\,p}] = \widehat{\mu}_{1,q}^{(p)}$, and swept forward for $j = 2,\dots,D$, carrying only the $P+1$ moments $s_{q,p} = \mathbb{E}[S_{j,q}^{\,p}]$ of the current partial sum. The combinatorial growth of a direct multinomial expansion is thereby replaced by repeated binomial convolutions, at a total cost of $\mathcal{O}(QDP^{2})$.

\paragraph{$K$-shift and outer contraction} Because $\mathbf{K}$ is a deterministic constant, the moments of $\xi_q = S_q + K_q$ follow from the centered moments through one further binomial expansion,
\begin{align} \label{eq:tkan-k-shift}
m_{q,p}
=
\mathbb{E}\big[\xi_q^{\,p}\big]
=
\sum_{j=0}^{p}\binom{p}{j}\, K_q^{\,p-j}\;
S_{q,j}.
\end{align}
Substituting the shifted moments into the reduction \eqref{eq:app-mean-reduction} yields the exact first moment,
\begin{align} \label{eq:app-exact-mean}
\mathbb{E}\big[f(\mathbf{x})\big]
=
\sum_{q=1}^{Q}\sum_{p=0}^{P} C_{q,p}
\sum_{j=0}^{p} a_{p,j}\, m_{q,j}.
\end{align}
The shift and the contraction add $\mathcal{O}(QP^{2})$ operations, independent of $D$.

\paragraph{Numerical stability} We evaluate the recursion in a centered form, which keeps every intermediate quantity bounded: each centered increment entering the dynamic programs is bounded by $2/D$, since $|\widetilde{\Psi}_{d,q}| \le 1/D$, and the partial sums remain bounded by $2$. The common offset $\mathbf{K}$ re-enters only through the well-conditioned shift \eqref{eq:tkan-k-shift} and its bivariate analogue \eqref{eq:tkan-k-shift-cross}. This centered evaluation reproduces the exact variance to a relative error of order $10^{-8}$ at every dimension tested through $D = 50$ (Table~\ref{tab:tractable-inference-errors}: $1.1\times10^{-8}$, $9.3\times10^{-9}$, $3.8\times10^{-8}$, $1.7\times10^{-8}$ at $D = 5, 12, 25, 50$). All moment evaluations are performed in double precision. The complete procedure is summarized in Algorithm~\ref{alg:tkan-expectation}.

\begin{algorithm}[!ht]
\caption{Exact first moment of a TOUCKAN surrogate.}
\label{alg:tkan-expectation}
\begin{algorithmic}[1]
\Require fused inner block $\boldsymbol{\Psi}$; outer coefficients $C_{q,p}$; Chebyshev monomial coefficients $a_{p,j}$; quadrature rules $\{(x_d^{(i)}, w_d^{(i)})\}_{i=1}^{M}$ matched to the marginals $\rho_d$
\Ensure $\mathbb{E}[f(\mathbf{x})]$
\State $\mathbf{K} \gets \boldsymbol{\Psi}(0,\dots,0)$ \Comment{baseline \eqref{eq:tkan-branch-isolation}}
\For{$d = 1,\dots,D$;\; $i = 1,\dots,M$} \Comment{isolated, centered branches \eqref{eq:tkan-centered-branch}}
    \State $\boldsymbol{\delta}_d^{(i)} \gets \boldsymbol{\Psi}\big(0,\dots,0,x_d^{(i)},0,\dots,0\big) - \mathbf{K}$ \Comment{node in slot $d$}
\EndFor
\For{$d = 1,\dots,D$;\; $q = 1,\dots,Q$;\; $k = 0,\dots,P$} \Comment{centered leaf moments \eqref{eq:app-centered-leaf}}
    \State $\widehat{\mu}_{d,q}^{(k)} \gets \sum_{i=1}^{M} w_d^{(i)}\, \big(\delta_{d,q}^{(i)}\big)^{k}$ \Comment{1D quadrature, $\mathcal{O}(QDPM)$ total}
\EndFor
\For{$q = 1,\dots,Q$} \Comment{binomial recursion \eqref{eq:app-binomial-derivation}, $\mathcal{O}(QDP^{2})$ total}
    \State $s_{q,p} \gets \widehat{\mu}_{1,q}^{(p)}$ \textbf{for} $p = 0,\dots,P$ \Comment{initialize with the first dimension}
    \For{$d = 2,\dots,D$}
        \State $s_{q,p} \gets \sum_{k=0}^{p}\binom{p}{k}\, s_{q,p-k}\,\widehat{\mu}_{d,q}^{(k)}$ \textbf{for} $p = P,\dots,0$ \Comment{in place, descending in $p$}
    \EndFor
\EndFor
\State $m_{q,p} \gets \sum_{j=0}^{p}\binom{p}{j}\, K_q^{\,p-j}\, s_{q,j}$ \textbf{for all} $q, p$ \Comment{$K$-shift \eqref{eq:tkan-k-shift}}
\State \Return $\displaystyle \sum_{q=1}^{Q}\sum_{p=0}^{P} C_{q,p}\sum_{j=0}^{p} a_{p,j}\, m_{q,j}$ \Comment{outer contraction \eqref{eq:app-exact-mean}}
\end{algorithmic}
\end{algorithm}

\subsection{Exact second moment and variance}
\label{app:alg-variance}

The variance requires the second moment, since $\mathrm{Var}[f] = \mathbb{E}[f^{2}] - (\mathbb{E}[f])^{2}$ with the first moment computed in \ref{app:alg-mean}. The derivation repeats that of the first moment for pairs of combination variables. Branch isolation and centering carry over unchanged, so the derivation below starts from the centered branches $\boldsymbol{\delta}_d$ of \eqref{eq:tkan-centered-branch} and from the baseline $\mathbf{K}$ of \eqref{eq:tkan-branch-isolation}.

\paragraph{Monomial collapse} For the second moment, the outer blocks are first collapsed into a single polynomial per branch. Substituting the monomial form of $T_p$ into the outer expansion of \eqref{eq:app-tkan-setting} and collecting powers gives
\begin{align} \label{eq:app-monomial-collapse}
\Phi_q(\xi_q)
=
\sum_{j=0}^{P} \beta_{q,j}\, \xi_q^{\,j},
\qquad
\beta_{q,j}
=
\sum_{p=j}^{P} C_{q,p}\, a_{p,j}.
\end{align}
The full surrogate is then an ordinary polynomial in the combination variables, $f(\mathbf{x}) = \sum_{q}\sum_{j}\beta_{q,j}\,\xi_q^{\,j}$.

\paragraph{Reduction to cross-moments} Squaring this polynomial and taking the expectation termwise gives
\begin{align} \label{eq:app-second-moment-reduction}
\mathbb{E}\big[f^{2}(\mathbf{x})\big]
=
\sum_{q,q'=1}^{Q}
\sum_{j,j'=0}^{P}
\beta_{q,j}\,\beta_{q',j'}\;
m_{q,q'}^{(j,j')},
\qquad
m_{q,q'}^{(j,j')}
=
\mathbb{E}\big[\xi_q^{\,j}\,\xi_{q'}^{\,j'}\big].
\end{align}
The cross-moments $m_{q,q'}^{(j,j')}$ do not factor into the moments of \ref{app:alg-mean}. Although the coordinates are independent, the combination variables $\xi_q$ and $\xi_{q'}$ are built from the same coordinates and are therefore dependent random variables. The recursion must consequently track pairs.

\paragraph{Joint centered leaf moments} For each dimension and each pair of branches, define
\begin{align} \label{eq:app-joint-centered-leaf}
\widehat{\nu}_{d,q,q'}^{(k,k')}
=
\mathbb{E}\big[\delta_{d,q}(x_d)^{k}\,\delta_{d,q'}(x_d)^{k'}\big],
\qquad k, k' = 0,\dots,P.
\end{align}
Both factors depend on the same scalar coordinate, so each $\widehat{\nu}_{d,q,q'}^{(k,k')}$ remains a one-dimensional integral. It is evaluated with the same measure-matched quadrature and the same centered branch evaluations as the marginal leaf moments \eqref{eq:app-centered-leaf}, so no new inner-block evaluations are needed. The first moments are recovered as the special case $\widehat{\nu}_{d,q,q'}^{(k,0)} = \widehat{\mu}_{d,q}^{(k)}$, and the symmetry under $(q,k) \leftrightarrow (q',k')$ restricts the computation to the pairs $q \le q'$. The cost is $\mathcal{O}(Q^{2}DP^{2}M)$.

\paragraph{Bivariate recursion} Fix a pair $(q,q')$ and recall the centered partial sums \eqref{eq:app-partial-sums}. The pair $\big(S_{j-1,q},\, S_{j-1,q'}\big)$ depends only on $(x_1,\dots,x_{j-1})$, while the branch pair $\big(\delta_{j,q}(x_j),\, \delta_{j,q'}(x_j)\big)$ depends only on $x_j$. Under a factorizing density, the two groups are therefore independent of each other, even though the variables within each group are dependent. This is the reason the recursion must carry joint quantities rather than only univariate ones. Expanding both powers with the binomial theorem and factorizing the expectation across the two groups,
\begin{align} \label{eq:app-cross-derivation}
\mathbb{E}\big[S_{j,q}^{\,p}\, S_{j,q'}^{\,p'}\big]
&=
\mathbb{E}\Big[\big(S_{j-1,q} + \delta_{j,q}(x_j)\big)^{p}\,
\big(S_{j-1,q'} + \delta_{j,q'}(x_j)\big)^{p'}\Big]
\nonumber \\
&=
\sum_{a=0}^{p}\sum_{b=0}^{p'}
\binom{p}{a}\binom{p'}{b}\,
\mathbb{E}\big[S_{j-1,q}^{\,p-a}\, S_{j-1,q'}^{\,p'-b}\big]\,
\widehat{\nu}_{j,q,q'}^{(a,b)},
\end{align}
Setting $p' = 0$ recovers the marginal recursion \eqref{eq:app-binomial-derivation}. The recursion is initialized with $\mathbb{E}[S_{1,q}^{\,p} S_{1,q'}^{\,p'}] = \widehat{\nu}_{1,q,q'}^{(p,p')}$ and swept forward for $j = 2,\dots,D$, carrying the $(P+1)^{2}$ joint moments $c_{q,q'}^{(p,p')} = \mathbb{E}[S_{j,q}^{\,p} S_{j,q'}^{\,p'}]$ of the current partial-sum pair, at a total cost of $\mathcal{O}(Q^{2}DP^{4})$.

\paragraph{Bivariate $K$-shift} Because $K_q$ and $K_{q'}$ are deterministic constants, the cross-moments of $(\xi_q, \xi_{q'}) = (S_q + K_q,\, S_{q'} + K_{q'})$ follow from the centered joint moments by expanding both binomials,
\begin{align} \label{eq:tkan-k-shift-cross}
m_{q,q'}^{(p,p')}
=
\mathbb{E}\big[\xi_q^{\,p}\,\xi_{q'}^{\,p'}\big]
=
\sum_{a=0}^{p}\sum_{b=0}^{p'}
\binom{p}{a}\binom{p'}{b}\,
K_q^{\,p-a}\, K_{q'}^{\,p'-b}\;
c_{q,q'}^{(a,b)}.
\end{align}
The shift adds $\mathcal{O}(Q^{2}P^{4})$ operations, independent of $D$.

\paragraph{Assembly} Substituting the shifted cross-moments into the reduction \eqref{eq:app-second-moment-reduction} gives the second moment. Subtracting the squared first moment \eqref{eq:app-exact-mean} gives the variance,
\begin{align} \label{eq:app-exact-variance}
\mathrm{Var}\big[f(\mathbf{x})\big]
=
\sum_{q,q'=1}^{Q}
\sum_{j,j'=0}^{P}
\beta_{q,j}\,\beta_{q',j'}\,
m_{q,q'}^{(j,j')}
-
\Big(\mathbb{E}\big[f(\mathbf{x})\big]\Big)^{2}.
\end{align}
The numerical-stability considerations of \ref{app:alg-mean} apply unchanged. The complete procedure is summarized in Algorithm~\ref{alg:tkan-variance}.

\begin{algorithm}[!ht]
\caption{Exact variance of a TOUCKAN surrogate.}
\label{alg:tkan-variance}
\begin{algorithmic}[1]
\Require fused inner block $\boldsymbol{\Psi}$; outer coefficients $C_{q,p}$; Chebyshev monomial coefficients $a_{p,j}$; quadrature rules $\{(x_d^{(i)}, w_d^{(i)})\}_{i=1}^{M}$ matched to the marginals $\rho_d$; baseline $\mathbf{K}$, centered branch evaluations $\boldsymbol{\delta}_d^{(i)}$, and mean $\mathbb{E}[f]$ from Algorithm~\ref{alg:tkan-expectation}
\Ensure $\mathrm{Var}[f(\mathbf{x})]$
\State $\beta_{q,j} \gets \sum_{p=j}^{P} C_{q,p}\, a_{p,j}$ \textbf{for all} $q, j$ \Comment{monomial collapse \eqref{eq:app-monomial-collapse}}
\For{$d = 1,\dots,D$;\; pairs $q \le q'$;\; $k, k' = 0,\dots,P$} \Comment{joint centered leaf moments \eqref{eq:app-joint-centered-leaf}}
    \State $\widehat{\nu}_{d,q,q'}^{(k,k')} \gets \sum_{i=1}^{M} w_d^{(i)}\, \big(\delta_{d,q}^{(i)}\big)^{k}\, \big(\delta_{d,q'}^{(i)}\big)^{k'}$ \Comment{1D quadrature, $\mathcal{O}(Q^{2}DP^{2}M)$ total}
\EndFor
\For{pairs $q \le q'$} \Comment{bivariate recursion \eqref{eq:app-cross-derivation}, $\mathcal{O}(Q^{2}DP^{4})$ total}
    \State $c_{q,q'}^{(p,p')} \gets \widehat{\nu}_{1,q,q'}^{(p,p')}$ \textbf{for} $p, p' = 0,\dots,P$
    \For{$d = 2,\dots,D$}
        \State $c_{q,q'}^{(p,p')} \gets \sum_{a=0}^{p}\sum_{b=0}^{p'}\binom{p}{a}\binom{p'}{b}\, c_{q,q'}^{(p-a,\,p'-b)}\,\widehat{\nu}_{d,q,q'}^{(a,b)}$ \textbf{for} $p, p' = P,\dots,0$ \Comment{in place, descending}
    \EndFor
    \State $m_{q,q'}^{(p,p')} \gets \sum_{a=0}^{p}\sum_{b=0}^{p'}\binom{p}{a}\binom{p'}{b}\, K_q^{\,p-a}\, K_{q'}^{\,p'-b}\, c_{q,q'}^{(a,b)}$ \Comment{$K$-shift \eqref{eq:tkan-k-shift-cross}}
    \State $m_{q',q}^{(p',p)} \gets m_{q,q'}^{(p,p')}$ \Comment{remaining pairs by symmetry}
\EndFor
\State $\mathbb{E}[f^{2}] \gets \sum_{q,q'=1}^{Q}\sum_{j,j'=0}^{P} \beta_{q,j}\,\beta_{q',j'}\, m_{q,q'}^{(j,j')}$ \Comment{assemble \eqref{eq:app-second-moment-reduction}}
\State \Return $\mathbb{E}[f^{2}] - \big(\mathbb{E}[f]\big)^{2}$ \Comment{exact variance \eqref{eq:app-exact-variance}}
\end{algorithmic}
\end{algorithm}

\FloatBarrier

\section{Sensitivity analysis}
\label{app:sensitivity}

Every expectation reported in this work is assembled from one-dimensional leaf moments evaluated by $M$-node Gaussian quadrature, so the order $M$ is the single accuracy parameter of the inference stage (cf.\ the error decomposition \eqref{eq:tkan-error-decomposition}). This appendix measures the sensitivity of the computed moments to $M$ for one representative of each problem family: the L\"uthen benchmark at $D=50$ (uniform inputs, Gauss--Legendre) and the Schr\"odinger Gram matrix at $D=5$ (Gaussian inputs, probabilists' Gauss--Hermite). We reuse the trained checkpoints of Section~\ref{sec:experiments} without any retraining. Errors are measured against true references rather than against self-convergence: the exact closed-form mean and variance of the L\"uthen function, and an independent $48^{5}$-node tensor-product Gauss--Hermite integration of the trained surrogate (roughly $2.5\times10^{8}$ nodes) for the Schr\"odinger Gram matrix.

\begin{figure}[htbp]
    \centering
    \includegraphics[width=0.99\linewidth]{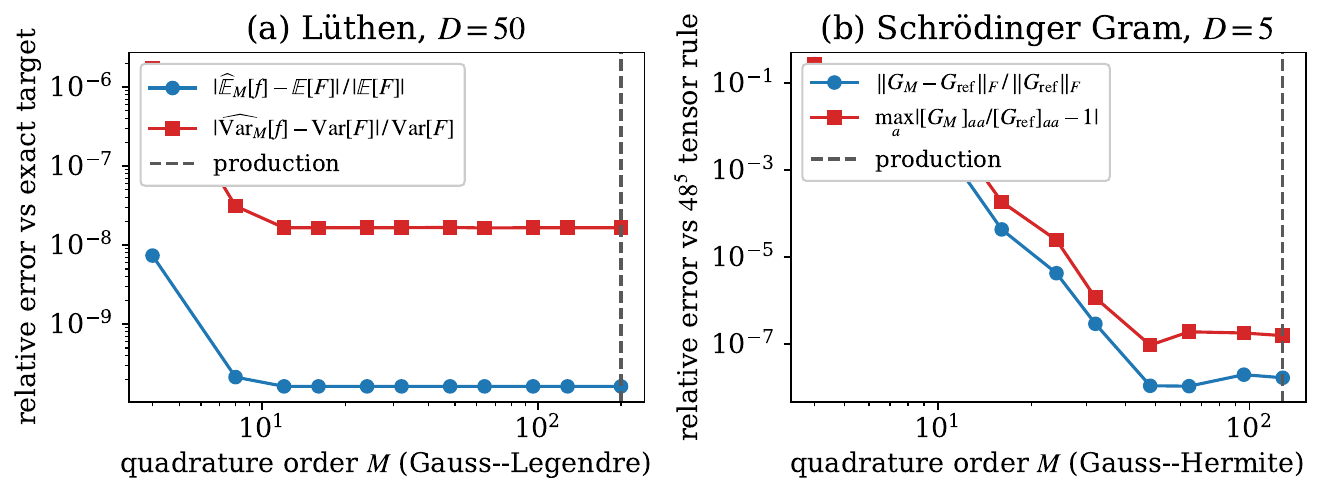}
    \caption{Sensitivity of the computed moments to the one-dimensional quadrature order $M$. (a) L\"uthen function at $D=50$: relative error of the recursion mean and variance against the exact closed-form references. (b) Schr\"odinger Gram matrix at $D=5$: relative error against an independent $48^{5}$ tensor-product Gauss--Hermite integration of the same surrogate. Dashed vertical lines mark the production orders ($M=200$ and $M=128$) used for the examples in the main paper.}
    \label{fig:quad-sensitivity}
\end{figure}

As shown in Fig.~\ref{fig:quad-sensitivity}, all quantities governed by quadrature truncation converge spectrally, as predicted. At $D=50$ the L\"uthen mean and variance errors decay rapidly and saturate from $M\approx12$ onward at $1.63\times10^{-10}$ and $1.66\times10^{-8}$ (the values reported in Table~\ref{tab:tractable-inference-errors}). Beyond this point, the first (training) term of \eqref{eq:tkan-error-decomposition} dominates and additional nodes have no effect. The Schr\"odinger Gram error decays from $2.7\times10^{-1}$ at $M=4$ to a floor near $10^{-8}$ for $M\ge48$. The production order $M=128$ carries a true relative error of $1.7\times10^{-8}$ in the Frobenius norm. The production orders of these representative cases therefore sit well inside the saturated regime, where the leaf-moment quadrature contributes negligibly to the reported errors.

\paragraph{Conservation under quadrature refinement} The same refinement analysis applies to the conserved mass of the Cahn--Hilliard experiment (Section~\ref{sec:cahn-hilliard}). There, the constraint is imposed by the reprojection \eqref{eq:tkan-mass-projection} against a $512$-node midpoint rule, so the machine-precision drift in Table~\ref{tab:cahn-hilliard} is, by itself, the discrete identity evaluated against the enforcing functional. To probe the continuum quantity independently, we re-evaluate the trained constrained solutions on midpoint rules from $N=8$ up to $N=8192$ nodes (Figure~\ref{fig:ch-continuum-mass}a). Every rule other than the $N=512$ production rule is independent of the enforcement, so the convergence trend across them validates the continuum value independently of the enforcing functional. Because the spatial branch enters only through the periodic Fourier embedding, the integrand is smooth and periodic and the midpoint rule converges spectrally, but non-monotonically: the error at $N$ nodes is the sum of the integrand's Fourier modes that alias onto the mean ($k = N, 2N, \dots$), so it oscillates with $N$ as different modes fall in and out of resonance. The worst-case drift falls from $1.0\times10^{-2}$ at $N=8$ to the double-precision floor (${\sim}5\times10^{-15}$) by $N\approx128$, beyond which the integrand carries no resolvable spectral content and refining further moves the result only at the level of floating-point rounding. The production rule $N=512$ thus sits far inside the converged regime. We confirm the mechanism quantitatively: at a fixed time the measured midpoint error matches the sum of the aliased Fourier modes, evaluated from an independent FFT of the field, to within $1\%$ at every $N$. The moment-recursion mass and the direct nodal sum agree to ${\sim}10^{-16}$ throughout. The $512$-node enforcement therefore coincides with the continuum mass to machine precision: mass conservation is a property of the solution, not an artifact of the functional used to impose it.

\begin{figure}[htbp]
    \centering
    \includegraphics[width=0.99\linewidth]{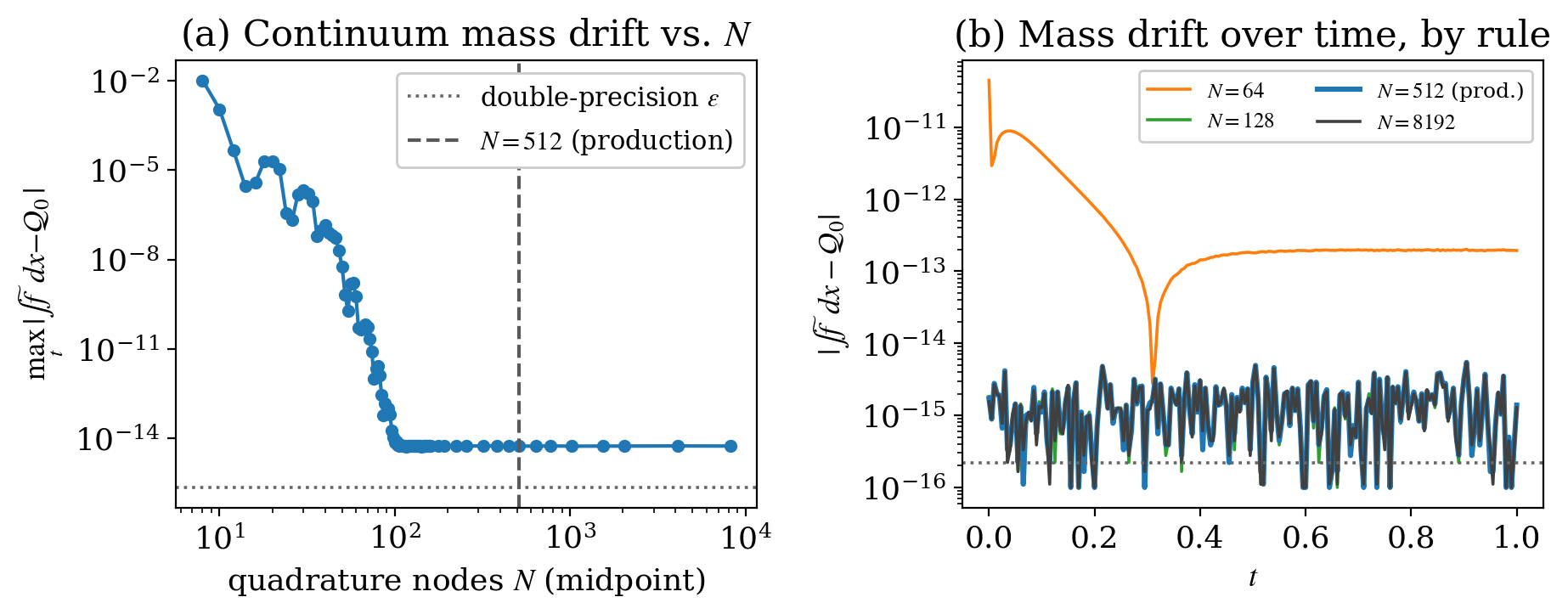}
    \caption{Continuum mass conservation of the Cahn--Hilliard surrogate under quadrature refinement, independent of the $512$-node rule used for enforcement. (a) Worst-case mass drift $\max_t |\int \widetilde f\, dx - \mathcal{Q}_0|$ of the constrained TOUCKAN as the evaluation grid is refined from $N=8$ to $N=8192$ midpoint nodes. The decay is spectral but non-monotone, the signature of periodic-quadrature aliasing: the error at $N$ nodes is the sum of the integrand's Fourier modes with $k = N, 2N, \dots$. It reaches a converged plateau near $5\times10^{-15}$ by $N\approx128$, approximately one order above machine epsilon (dotted). The production rule $N=512$ (dashed) lies far inside the converged regime. (b) Mass drift $|\int \widetilde f\, dx - \mathcal{Q}_0|$ over time $t$, evaluated on four independent midpoint rules ($N=64, 128, 512, 8192$). The under-resolved $N=64$ rule stays above the floor, while $N=128$, the production rule $N=512$, and $N=8192$ coincide at the double-precision floor at every time instance, so conservation holds pointwise in $t$ and independently of the evaluation rule.}
    \label{fig:ch-continuum-mass}
\end{figure}

\FloatBarrier

\section{Randomized quasi-Monte Carlo comparison}
\label{app:rqmc-comparison}

The analytical benchmarks also permit direct integration by randomized quasi-Monte Carlo (RQMC), which can converge much faster than Monte Carlo on smooth integrands~\cite{caflisch1998monte,dick2013high}. We use Sobol' sequences~\cite{sobol1967distribution} with Owen's nested scrambling~\cite{owen1997scrambled}, as implemented in SciPy~\cite{virtanen2020scipy}, and $200$ independent scramblings for each case. Each run reports the sample mean and sample variance. Gaussian inputs are obtained through the standard normal quantile function. Table~\ref{tab:rqmc-comparison} reports results for complete digital nets with $N=2^{18}=262{,}144$ points alongside the TOUCKAN and MC results. Table~\ref{tab:rqmc-matched} compares this budget with $N=2\times10^{5}$, matching the number of evaluations in each MC run. The latter truncates the digital net and loses its balance, increasing the median errors by up to two orders of magnitude on the smooth L\"uthen function.

\begin{table}[H]
\centering
\caption{Relative errors against the exact mean and variance for three benchmarks across $D\in\{5,12,25,50\}$. The surrogate column reports the test relative $L_2$ error. TOUCKAN evaluates moments of the trained surrogate by deterministic propagation. MC and RQMC instead evaluate the true integrand and report the median error over $200$ independent replications. Each MC run uses $N=2\times10^{5}$ pseudo-random samples, while each RQMC run uses a complete Owen-scrambled Sobol' net with $N=2^{18}=262{,}144$ points. Thus, the TOUCKAN errors include the discrepancy between the surrogate and the target, whereas the sampling errors arise from direct integration of the target. TOUCKAN has lower errors than the median MC estimate in every case and than the median RQMC estimate in most cases. The variance entries marked with $\ddagger$ are limited by floating-point roundoff in the second-moment recursion rather than by the surrogate.}
\label{tab:rqmc-comparison}
\medskip 
\resizebox{\linewidth}{!}{%
\begin{tabular}{llc|ccc|ccc}
\hline
 & & Surrogate & \multicolumn{3}{c}{$\mathbb{E}[f]$ rel.\ error} & \multicolumn{3}{c}{$\mathrm{Var}[f]$ rel.\ error} \\
Benchmark & $D$ & rel.\ $L_2$ & TOUCKAN & MC (median) & RQMC (median) & TOUCKAN & MC (median) & RQMC (median) \\
\hline
L\"uthen  & 5  & $1.0\times10^{-6}$ & $5.64\times10^{-10}$ & $1.34\times10^{-3}$ & $2.51\times10^{-9}$  & $1.13\times10^{-8}$ & $2.10\times10^{-3}$ & $3.81\times10^{-6}$ \\
          & 12 & $2.7\times10^{-6}$ & $1.30\times10^{-9}$  & $3.07\times10^{-4}$ & $8.46\times10^{-10}$ & $9.27\times10^{-9}$ & $2.09\times10^{-3}$ & $3.81\times10^{-6}$ \\
          & 25 & $4.5\times10^{-6}$ & $1.77\times10^{-9}$  & $2.53\times10^{-4}$ & $8.28\times10^{-10}$ & $3.81\times10^{-8}$ & $2.35\times10^{-3}$ & $3.81\times10^{-6}$ \\
          & 50 & $8.9\times10^{-6}$ & $1.63\times10^{-10}$ & $1.14\times10^{-4}$ & $5.46\times10^{-10}$ & $1.66\times10^{-8}$ & $1.92\times10^{-3}$ & $3.97\times10^{-6}$ \\
\hline
Bratley   & 5  & $2.9\times10^{-4}$ & $7.22\times10^{-7}$  & $1.03\times10^{-3}$ & $9.65\times10^{-10}$ & $1.80\times10^{-6}$ & $1.74\times10^{-3}$ & $3.80\times10^{-6}$ \\
          & 12 & $1.7\times10^{-3}$ & $1.55\times10^{-7}$  & $1.04\times10^{-3}$ & $1.91\times10^{-6}$  & $1.45\times10^{-5}$ & $1.94\times10^{-3}$ & $9.96\times10^{-6}$ \\
          & 25 & $4.3\times10^{-3}$ & $3.54\times10^{-7}$  & $9.83\times10^{-4}$ & $1.74\times10^{-6}$  & $9.11\times10^{-6}$ & $1.75\times10^{-3}$ & $1.04\times10^{-5}$ \\
          & 50 & $1.1\times10^{-2}$ & $1.31\times10^{-5}$  & $1.08\times10^{-3}$ & $2.02\times10^{-6}$  & $4.59\times10^{-4}\,^{\ddagger}$ & $1.77\times10^{-3}$ & $8.24\times10^{-6}$ \\
\hline
Oakley & 5  & $4.7\times10^{-5}$ & $2.68\times10^{-7}$ & $2.33\times10^{-3}$ & $2.55\times10^{-6}$ & $1.90\times10^{-7}$ & $2.25\times10^{-3}$ & $1.79\times10^{-5}$ \\
          & 12 & $1.4\times10^{-4}$ & $1.35\times10^{-7}$  & $1.45\times10^{-3}$ & $2.04\times10^{-6}$ & $3.45\times10^{-7}$ & $2.17\times10^{-3}$ & $7.09\times10^{-5}$ \\
          & 25 & $2.3\times10^{-4}$ & $4.71\times10^{-7}$  & $9.24\times10^{-4}$ & $2.31\times10^{-6}$ & $1.02\times10^{-6}$ & $2.13\times10^{-3}$ & $1.45\times10^{-4}$ \\
          & 50 & $1.1\times10^{-3}$ & $8.50\times10^{-7}$  & $7.99\times10^{-4}$ & $2.65\times10^{-6}$ & $2.69\times10^{-4}\,^{\ddagger}$ & $1.86\times10^{-3}$ & $2.67\times10^{-4}$ \\
\hline
\end{tabular}%
}
\end{table}

RQMC substantially reduces the sampling errors relative to MC on these benchmarks. At $N=2^{18}$, its median mean errors are of order $10^{-9}$ or smaller for L\"uthen and $10^{-6}$ or smaller for Bratley and Oakley--O'Hagan, while its median variance errors range from $4\times10^{-6}$ to $3\times10^{-4}$. TOUCKAN has lower errors in most cases. The exceptions are the L\"uthen means at $D=12$ and $25$, the Bratley means at $D=5$ and $50$ and variances at $D=12$ and $50$, and the Oakley--O'Hagan variance at $D=50$. The L\"uthen means are already accurate to $10^{-9}$ with both methods, while the two Oakley--O'Hagan variance errors are both approximately $2.7\times10^{-4}$. At the matched budget, TOUCKAN has lower errors in $22$ of the $24$ cases, with both Bratley moments at $D=50$ as the exceptions.

\begin{table}[H]
\centering
\caption{Median relative errors of RQMC over $200$ scramblings at the matched budget $N=2\times10^{5}$ and at $N=2^{18}$, for the true integrands of Table~\ref{tab:rqmc-comparison}.}
\label{tab:rqmc-matched}
\medskip
\begin{tabular}{llcc|cc}
\hline
 & & \multicolumn{2}{c|}{$\mathbb{E}[f]$ rel.\ error} & \multicolumn{2}{c}{$\mathrm{Var}[f]$ rel.\ error} \\
Benchmark & $D$ & $N=2\times10^{5}$ & $N=2^{18}$ & $N=2\times10^{5}$ & $N=2^{18}$ \\
\hline
L\"uthen & 5  & $5.31\times10^{-7}$ & $2.51\times10^{-9}$  & $6.44\times10^{-6}$ & $3.81\times10^{-6}$ \\
         & 12 & $1.74\times10^{-7}$ & $8.46\times10^{-10}$ & $2.56\times10^{-5}$ & $3.81\times10^{-6}$ \\
         & 25 & $7.79\times10^{-8}$ & $8.28\times10^{-10}$ & $3.30\times10^{-5}$ & $3.81\times10^{-6}$ \\
         & 50 & $4.62\times10^{-8}$ & $5.46\times10^{-10}$ & $5.26\times10^{-5}$ & $3.97\times10^{-6}$ \\
\hline
Bratley  & 5  & $2.22\times10^{-6}$ & $9.65\times10^{-10}$ & $1.33\times10^{-5}$ & $3.80\times10^{-6}$ \\
         & 12 & $4.39\times10^{-6}$ & $1.91\times10^{-6}$  & $2.09\times10^{-5}$ & $9.96\times10^{-6}$ \\
         & 25 & $4.65\times10^{-6}$ & $1.74\times10^{-6}$  & $2.28\times10^{-5}$ & $1.04\times10^{-5}$ \\
         & 50 & $4.44\times10^{-6}$ & $2.02\times10^{-6}$  & $2.33\times10^{-5}$ & $8.24\times10^{-6}$ \\
\hline
Oakley   & 5  & $7.10\times10^{-6}$ & $2.55\times10^{-6}$  & $4.61\times10^{-5}$ & $1.79\times10^{-5}$ \\
         & 12 & $5.91\times10^{-6}$ & $2.04\times10^{-6}$  & $9.69\times10^{-5}$ & $7.09\times10^{-5}$ \\
         & 25 & $6.45\times10^{-6}$ & $2.31\times10^{-6}$  & $2.21\times10^{-4}$ & $1.45\times10^{-4}$ \\
         & 50 & $7.93\times10^{-6}$ & $2.65\times10^{-6}$  & $4.42\times10^{-4}$ & $2.67\times10^{-4}$ \\
\hline
\end{tabular}
\end{table}

These results concern different computational settings. MC and RQMC evaluate the true integrand, whereas TOUCKAN evaluates moments of an already-trained surrogate. Direct sampling is useful when new function evaluations are affordable. When only fixed observations or expensive simulation data are available, repeated surrogate moments and integral constraints can instead be evaluated without new calls to the original model. The analytical benchmarks have inexpensive integrands, so the measured sampling times do not represent the cost of generating data from an expensive simulation. At the same sample count, one RQMC replicate costs $0.83$--$0.98$ times one MC replicate for the uniform benchmarks and approximately $1.55$ times for the Gaussian benchmark, where the quantile transform dominates. As in the main timing comparison, surrogate training is excluded, and TOUCKAN times refer to warm evaluations with one-time compilation excluded. These timings therefore describe moment evaluation after surrogate construction rather than the total cost of solving a new integration problem.

\FloatBarrier

\section{Problem descriptions}
\label{app:problems}

This appendix defines the three benchmark problems used in our experiments. All three share the structural properties that make exact reference moments available: the input coordinates are independent, and the function is built from sums and products of univariate terms, so every moment factorizes into one-dimensional integrals that are available in closed form or computable to machine precision by one-dimensional Gaussian quadrature. The reference means and variances below are therefore exact, not Monte Carlo estimates.

\subsection{L\"uthen function (uniform inputs)}
\label{app:luthen}

The first problem is the high-dimensional polynomial-chaos benchmark of L\"uthen et al.\ \cite{luthen2021Sparse},
\begin{align}
    f(\mathbf{x})
    ={}&
    3
    - \frac{5}{D}\sum_{d=1}^{D} d\,x_d
    + \frac{1}{D}\sum_{d=1}^{D} d\,x_d^3
    + \frac{1}{3D}\sum_{d=1}^{D} d \ln\!\left(x_d^2 + x_d^4\right)
    \nonumber\\
    &+ \mathbf{1}_{D \ge 5}\left(x_1 x_2^2 + x_2 x_4 - x_3 x_5\right) + \mathbf{1}_{D \ge 51}\, x_{51} + \mathbf{1}_{D \ge 54}\, x_{50}\,x_{54}^2,
    \label{eq:app-luthen}
\end{align}
with inputs
\begin{align}
    x_d \sim \mathcal{U}(1,2) \quad \forall\, d \neq 20,
    \qquad
    x_{20} \sim \mathcal{U}(1,3).
    \label{eq:app-luthen-inputs}
\end{align}
The interaction terms use fixed indices, so the indicators $\mathbf{1}_{D \ge 5}$, $\mathbf{1}_{D \ge 51}$, and $\mathbf{1}_{D \ge 54}$ define the function at every dimension: the first three interaction terms are present in all experiments reported here ($D \ge 5$), while the terms $x_{51}$ and $x_{50}x_{54}^2$ contribute only once $D \ge 51$ and $D \ge 54$, respectively. The same indicators apply to the reference moments below. By independence and linearity, the expectation separates term by term,
\begin{align}
    \mathbb{E}[f]
    ={}&
    3
    - \frac{5}{D}\sum_{d=1}^{D} d\,\mathbb{E}[x_d]
    + \frac{1}{D}\sum_{d=1}^{D} d\,\mathbb{E}[x_d^3]
    + \frac{1}{3D}\sum_{d=1}^{D} d\,\mathbb{E}\!\left[\ln\!\left(x_d^2 + x_d^4\right)\right]
    \nonumber\\
    &+ \mathbf{1}_{D \ge 5}\left(\mathbb{E}[x_1]\,\mathbb{E}[x_2^2] + \mathbb{E}[x_2]\,\mathbb{E}[x_4] - \mathbb{E}[x_3]\,\mathbb{E}[x_5]\right) + \mathbf{1}_{D \ge 51}\,\mathbb{E}[x_{51}] + \mathbf{1}_{D \ge 54}\,\mathbb{E}[x_{50}]\,\mathbb{E}[x_{54}^2],
    \label{eq:app-luthen-mean}
\end{align}
where every factor is a one-dimensional integral over $\mathcal{U}(1,2)$ (or $\mathcal{U}(1,3)$ for $x_{20}$) available in closed form, including the logarithmic term. The exact variance follows in the same way: expanding $f^2$ and applying independence reduces $\mathbb{E}[f^2]$ to a finite combination of products of one-dimensional moments, and $\mathrm{Var}[f] = \mathbb{E}[f^2] - (\mathbb{E}[f])^2$. In practice we evaluate all one-dimensional integrals to machine precision with $200$-node Gauss--Legendre quadrature. Table~\ref{tab:luthen-references} lists the resulting reference moments for every dimension reported in Section~\ref{sec:experiments}.

\begin{table}[h]
\centering
\caption{Exact reference moments of the L\"uthen function \eqref{eq:app-luthen} for every reported input dimension, in physical units.}
\label{tab:luthen-references}
\begin{tabular}{lccccc}
\hline
$D$ & $5$ & $12$ & $20$ & $25$ & $50$ \\
\hline
$\mathbb{E}[f]$ & $-2.80818$ & $-13.66773$ & $-22.02031$ & $-30.58880$ & $-70.99621$ \\
$\mathrm{Var}[f]$ & $6.39955$ & $7.55664$ & $34.18982$ & $26.33687$ & $20.07893$ \\
\hline
\end{tabular}
\end{table}

\subsection{Bratley function (uniform inputs)}
\label{app:bratley}

The second problem is the alternating sum of nested products of Bratley et al.,
\begin{align}
    f(\mathbf{x})
    =
    \sum_{d=1}^{D} (-1)^{d} \prod_{j=1}^{d} x_j,
    \qquad
    \mathbf{x} \sim \mathcal{U}[0,1]^{D}.
    \label{eq:app-bratley}
\end{align}
Since the coordinates are independent with $\mathbb{E}[x_j] = \tfrac{1}{2}$, each nested product has mean $\mathbb{E}\big[\prod_{j=1}^{d}x_j\big] = 2^{-d}$, and the exact mean is the geometric sum
\begin{align}
    \mathbb{E}[f]
    =
    \sum_{d=1}^{D}\left(-\tfrac{1}{2}\right)^{d}
    =
    -\frac{1}{3}\left(1 - \left(-\tfrac{1}{2}\right)^{D}\right)
    \;\xrightarrow{\;D\to\infty\;}\;
    -\frac{1}{3}.
    \label{eq:app-bratley-mean}
\end{align}
For the second moment, the product of two nested products squares the shared coordinates and leaves the remaining ones linear; with $\mathbb{E}[x_j^2] = \tfrac{1}{3}$,
\begin{align}
    \mathbb{E}\!\left[\,\prod_{j=1}^{d} x_j \prod_{j=1}^{d'} x_j\right]
    =
    \left(\tfrac{1}{3}\right)^{\min(d,d')}
    \left(\tfrac{1}{2}\right)^{|d-d'|},
    \label{eq:app-bratley-cross}
\end{align}
so the exact variance is the finite double sum
\begin{align}
    \mathrm{Var}[f]
    =
    \sum_{d=1}^{D}\sum_{d'=1}^{D}
    (-1)^{d+d'}
    \left(\tfrac{1}{3}\right)^{\min(d,d')}
    \left(\tfrac{1}{2}\right)^{|d-d'|}
    -
    \big(\mathbb{E}[f]\big)^{2}.
    \label{eq:app-bratley-var}
\end{align}

\subsection{Oakley--O'Hagan function (Gaussian inputs)}
\label{app:oakley}

The third problem is an Oakley--O'Hagan-type function with standard Gaussian inputs,
\begin{align}
    f(\mathbf{x})
    =
    \sum_{d=1}^{D}
    \left(
    a_{1,d}\,x_d + a_{2,d}\sin x_d + a_{3,d}\cos x_d
    \right)
    +
    \mathbf{x}^{\top} M\, \mathbf{x},
    \qquad
    \mathbf{x} \sim \mathcal{N}(0, I_D).
    \label{eq:app-oakley}
\end{align}
The coefficients are drawn once from a fixed seed (NumPy \texttt{RandomState(7)}) and then held fixed. Their realized values will be reproducible from the code repository released upon acceptance: $a_{1,d}, a_{2,d}, a_{3,d} \sim \mathcal{U}(0.5, 1.5)/\sqrt{D}$, and $M \in \mathbb{R}^{D\times D}$ is symmetric with diagonal entries $M_{dd} \sim \mathcal{U}(-0.5, 0.5)/\sqrt{D}$ and off-diagonal entries $M_{dd'} = S_{dd'}/2$, where $S_{dd'} \sim \mathcal{U}(-1.2/D,\, 1.2/D)$. This scaling keeps the mean and the variance of $f$ of order one at every dimension $D$. Grouping the univariate terms as $g_d(x) = a_{1,d}x + a_{2,d}\sin x + a_{3,d}\cos x + M_{dd}\,x^2$, the standard Gaussian identities $\mathbb{E}[\sin X] = 0$, $\mathbb{E}[\cos X] = \mathbb{E}[X \sin X] = e^{-1/2}$, $\mathbb{E}[X\cos X] = \mathbb{E}[X^2 \sin X] = \mathbb{E}[X^2\cos X] = 0$, $\mathbb{E}[X^2]=1$, and $\mathbb{E}[X^4]=3$ give the exact mean
\begin{align}
    \mathbb{E}[f]
    =
    \sum_{d=1}^{D}
    \left(
    a_{3,d}\,e^{-1/2} + M_{dd}
    \right).
    \label{eq:app-oakley-mean}
\end{align}
The ANOVA decomposition of \eqref{eq:app-oakley} truncates exactly at order two: the main effects are the centered $g_d$, and the only interactions are the pair terms $S_{dd'}\,x_d x_{d'}$ with $S_{dd'} = 2M_{dd'}$ for $d<d'$. The exact variance is therefore
\begin{align}
    \mathrm{Var}[f]
    =
    \sum_{d=1}^{D}
    \left(
    \mathbb{E}[g_d^2] - \mathbb{E}[g_d]^2
    \right)
    +
    \sum_{d<d'}
    \left(2M_{dd'}\right)^2,
    \label{eq:app-oakley-var}
\end{align}
with $\mathbb{E}[g_d] = a_{3,d}\,e^{-1/2} + M_{dd}$ and
\begin{align}
    \mathbb{E}[g_d^2]
    =
    a_{1,d}^2
    + \tfrac{1}{2}\,a_{2,d}^2\left(1 - e^{-2}\right)
    + \tfrac{1}{2}\,a_{3,d}^2\left(1 + e^{-2}\right)
    + 3M_{dd}^2
    + 2\,a_{1,d}\,a_{2,d}\,e^{-1/2}.
    \label{eq:app-oakley-g2}
\end{align}

\section{Implementation details}
\label{app:implementation}

This appendix records the architecture and training configuration of the experiments in Section~\ref{sec:experiments}.

\paragraph{Reproducibility conventions} All experiments run in double precision. Here, one iteration denotes one optimizer update computed on one batch of the stated size; all training totals below are iteration counts. Reported moment-evaluation timings are warm times of the compiled kernels, and the one-time JIT compilation is reported separately where relevant. The quadrature orders of the moment recursions are: $200$-node Gauss--Legendre for the uniform-input benchmarks (L\"uthen and Bratley, used both for the propagated moments and for the reference moments), $128$-node Gauss--Hermite for the Gaussian-input benchmark (Oakley--O'Hagan) and for the Schr\"odinger Gram matrix, and a $512$-node midpoint rule for the periodic Cahn--Hilliard integrand. All Gauss--Hermite rules are the probabilists' rule, with weight function $e^{-x^{2}/2}$ and weights normalized by $1/\sqrt{2\pi}$, so that each rule integrates against the standard normal density. Random seeds: five seeds ($111$, $222$, $333$, $444$, $555$) for the parameter-efficiency study (Section~\ref{sec:param-efficiency}) and $111$ for the remaining function-approximation benchmarks (Section~\ref{sec:tractable-inference}), $9999$ for the Cahn--Hilliard and Schr\"odinger experiments, and NumPy \texttt{RandomState(7)} for the fixed problem data (the Oakley coefficients and the oscillator rotation), whose realized values will be reproducible from the code repository released upon acceptance.

\paragraph{Error metrics} Throughout the experiments, the accuracy of a scalar quantity $s$ (an eigenvalue, a moment, a conserved integral) is reported as the relative error against its reference $s^{\mathrm{ref}}$,
\begin{align} \label{eq:rel-error}
\varepsilon(s)
=
\frac{|s - s^{\mathrm{ref}}|}{|s^{\mathrm{ref}}|},
\end{align}
and the accuracy of a field $u$ as the relative $L_2$ error against its reference $u^{\mathrm{ref}}$,
\begin{align} \label{eq:rel-l2}
\varepsilon_{L_2}(u)
=
\frac{\lVert u - u^{\mathrm{ref}} \rVert_{2}}{\lVert u^{\mathrm{ref}} \rVert_{2}},
\end{align}
with the norms evaluated on the test grid or test set specified in each experiment.

\subsection{Parameter-efficiency analysis}
\label{app:impl-luthen-eff}\label{app:impl-luthen}

All four inner-block variants of Section~\ref{sec:param-efficiency} share the same backbone. The shared trunk is $h(x) = \mathrm{MLP}(\mathrm{emb}(x))$, where $\mathrm{emb}$ is a Chebyshev feature embedding of degree $15$ and the MLP has three hidden layers of width $H = 20$ with $\tanh$ activations. The number of branches is $Q = 2D + 1 = 41$, and the outer block evaluates per-branch Chebyshev expansions of degree $P = 15$, for $Q(P+1) = 656$ outer parameters, identical across all variants. The variants differ only in how the per-dimension branches are obtained from the trunk: independent per-dimension networks ($\Psi_{d,q} = \tanh(\mathrm{MLP}_d(\mathrm{emb}_d(x_d)))_q$, about $2{,}057$ parameters per dimension); a full linear head per dimension ($\Psi_{d} = \tanh(W_d\, h(x_d) + b_d)$ with $W_d \in \mathbb{R}^{41 \times 20}$, $861$ parameters per dimension); the diagonal specialization \eqref{eq:diag-specialization} ($41$ scalars per dimension, applied to the shared head $\boldsymbol{\psi}(x) = W h(x) + b \in \mathbb{R}^{41}$); and the rank-one Lorentz form ($\lambda_{d,q} = \lambda_d$, one scalar per dimension). In every case the inner block remains a sum of univariate branches, and the branch decomposition used by the moment recursion was verified against the forward pass to $\leq 2\times10^{-13}$ on all trained checkpoints.

Training uses $N_C = 200{,}000$ Latin hypercube samples in batches of $20{,}000$ and $10{,}000$ iterations of the SOAP optimizer \cite{vyas2024soap} in double precision. Each variant is trained from five random seeds ($111$, $222$, $333$, $444$, $555$), and the error columns of Table~\ref{tab:param-efficiency} report the mean $\pm$ standard deviation across them. The inner block uses an initial learning rate of $10^{-3}$ with exponential decay by a factor $0.9$ every $5{,}000$ steps; the outer block is updated by a separate SOAP instance with the same schedule scaled by $0.2$. Per-sample variational residual-based attention (vRBA) weights \cite{toscano2026variational} are updated as $\lambda_i \leftarrow \gamma \lambda_i + \eta\, r_i / \max_j r_j$ with $\eta = 0.01$ and $\gamma = 0.999$. The relative $L_2$ test errors in Table~\ref{tab:param-efficiency} are computed on $20{,}000$ independent Latin hypercube points, and the analytical moments of each trained surrogate are cross-checked against a $2\times10^{6}$-sample Monte Carlo estimate of the same network, agreeing to within one Monte Carlo standard error.

\subsection{High-dimensional benchmarks}
\label{app:impl-highd}

The twelve surrogates of Section~\ref{sec:tractable-inference} span three benchmarks and $D\in\{5,12,25,50\}$. Each uses the diagonal inner block \eqref{eq:diag-specialization} with a shared trunk of three hidden layers of width $H=\max\{20,\lfloor Q/2\rfloor\}$ and $\tanh$ activations, followed by a linear head producing $Q$ branch features; the outer block evaluates per-branch Chebyshev expansions of degree $P=15$. The L\"uthen models use $Q=2D+1$. The Oakley--O'Hagan models use the same rule except at $D=5$, where $Q=12$, while the Bratley models use $Q=\lfloor2.5D\rfloor$. At $D=50$, these choices give $Q=101$ ($H=50$) for L\"uthen and Oakley--O'Hagan and $Q=125$ ($H=62$) for Bratley, with $17{,}783$, $18{,}482$, and $25{,}007$ trainable parameters, respectively. For the uniform problems the trunk input is a Chebyshev feature embedding of degree $15$ on inputs affinely normalized to $[0,1]$; for the Gaussian problem, whose inputs are unbounded, the embedding is replaced by random Fourier features whose frequencies are drawn from $\mathcal{N}(0, \sigma^{2})$ with bandwidth $\sigma=1$. Targets are standardized with the exact reference moments of \ref{app:luthen}--\ref{app:oakley}, and for the Gaussian case the training inputs are drawn by mapping a Latin hypercube sample through the standard normal quantile function.

Training proceeds in three stages in double precision (seed $111$), on $N_C = 200{,}000$ Latin hypercube samples with batch size $20{,}000$; the full procedure is summarized in Algorithm~\ref{alg:highd-training}. Stage one runs $5{,}000$ iterations of Adam with initial learning rate $10^{-3}$, exponentially decayed by a factor $0.9$ every $5{,}000$ steps, with per-sample vRBA weights \cite{toscano2026variational} (quadratic potential, $\eta = 0.01$, $\gamma = 0.999$). Stage two polishes all parameters with the self-scaled Broyden (SSBroyden) quasi-Newton method with backtracking line search \cite{urban2025unveiling,jnini2026curvature} for $10{,}000$ iterations, executed in chunks of $100$ iterations with the dense inverse-Hessian approximation carried across chunks and a fresh batch drawn per chunk. Stage three refits the last linear layer (the outer Chebyshev coefficients) by ridge-regularized least squares with regularization parameter $\lambda = 10^{-7}$ on the full training set, which is an exact linear solve since the outer block is linear in its coefficients. Relative $L_2$ test errors are computed on $20{,}000$ independent Latin hypercube points. Across all twelve checkpoints, the branch decomposition used by the moment recursion agrees with the forward pass to $\leq 4\times10^{-10}$, and the analytical moments at the production quadrature orders agree with $2\times10^{6}$-sample Monte Carlo estimates of the same networks to within one Monte Carlo standard error. The warm moment evaluation takes $0.3$--$0.6$\,s per model. One-time JIT compilation takes $1$--$38$\,s and is excluded from the reported warm timings.

\paragraph{Sampling baselines} The Monte Carlo columns of Table~\ref{tab:tractable-inference-errors} use $200$ independent runs of the true integrand with $N=2\times10^{5}$ pseudo-random samples each. Each run reports the sample mean and the unbiased sample variance.

\begin{algorithm}[!ht]
\caption{Three-stage training for the high-dimensional benchmarks.}
\label{alg:highd-training}
\begin{algorithmic}[1]
\Require training set $\{(\mathbf{x}_i, u_i)\}_{i=1}^{N_C}$ (Latin hypercube, standardized targets); TOUCKAN parameters $\theta = (\theta_{\mathrm{inner}}, C)$ with outer Chebyshev coefficients $C$; batch size $B$; Adam iterations $N_1$; SSBroyden iterations $N_2$ in chunks of $c$; ridge parameter $\lambda_{\mathrm{ridge}}$
\Ensure trained surrogate $f_\theta$, ready for the moment evaluation of Algorithms~\ref{alg:tkan-expectation} and~\ref{alg:tkan-variance}
\State \textbf{Stage 1 — Adam warm-up with vRBA \cite{toscano2026variational}:}
\For{$e = 1,\dots,N_1$}
    \State draw a batch of size $B$; compute residuals $r_i = |f_\theta(\mathbf{x}_i) - u_i|$
    \State $\lambda_i \gets \gamma \lambda_i + \eta\, r_i / \max_j r_j$ \Comment{vRBA weights, quadratic potential}
    \State update $\theta$ with Adam on $\frac{1}{B}\sum_i \big((\lambda_i + 0.1)\, r_i\big)^2$ \Comment{lr $10^{-3}$, decay $0.9$ per $5{,}000$ steps}
\EndFor
\State \textbf{Stage 2 — SSBroyden polish \cite{jnini2026curvature}:} $\mathbf{w} \gets \mathrm{vec}(\theta)$,\; $H \gets I$
\For{$k = 1,\dots,N_2/c$}
    \State draw a fresh batch of size $B$
    \State run $c$ SSBroyden iterations with backtracking line search on the plain MSE, starting from $(\mathbf{w}, H)$
    \State carry the dense inverse-Hessian approximation: $H \gets \frac{1}{2}(H + H^{\top})$, reset $H \gets I$ if not positive definite
\EndFor
\State \textbf{Stage 3 — outer-layer refit:} with $\theta_{\mathrm{inner}}$ frozen, assemble the outer Chebyshev features $\Phi_{i,(q,p)} = T_p\big(\xi_q(\mathbf{x}_i)\big)$ on the full training set
\State $C \gets \arg\min_{C} \|\Phi\, C - \mathbf{u}\|_2^2 + \lambda_{\mathrm{ridge}} \|C\|_2^2$ \Comment{exact linear solve}
\end{algorithmic}
\end{algorithm}

\subsection{Cahn--Hilliard equation}
\label{app:impl-ch}

The reference solution of Section~\ref{sec:cahn-hilliard} is computed with a Fourier-spectral exponential time-differencing (ETDRK4) solver on $512$ spatial points with internal step $\Delta t = 10^{-5}$, stored as $201$ snapshots on a uniform time grid; its own mass drift is below $10^{-16}$. The conserved value $\mathcal{Q}_0 \approx -0.28209479$ is the grid quadrature of the $t = 0$ snapshot. Training uses $N_C = 200{,}000$ Latin hypercube collocation points in $[0,1] \times [-1,1]$ with batch size $20{,}000$, and the $512$ reference points of the $t = 0$ slice as initial-condition points. Periodicity in $x$ is enforced exactly by a periodic Fourier-feature embedding of the spatial input, so the loss has only two terms: the mean-squared residual of \eqref{eq:cahn-hilliard} in conservative form, with the fourth derivative obtained by nesting forward-mode directional derivatives, and the initial-condition mean-squared error. The residual term carries per-point vRBA weights \cite{toscano2026variational} (exponential potential; $\eta = 0.01$, $\gamma = 0.999$ in the first-order stage, $\gamma = 0.9$ in the quasi-Newton stage), which also define the sampling distribution of the batches. The initial-condition term carries a global weight set adaptively by a gradient-norm ratio in the first-order stage and fixed at $100$ in the quasi-Newton stage. All runs use double precision and seed $9999$.

All five models are parameter-matched at approximately $15$k. The MLP and the TOUCKANs use $\tanh$ activations, and the TNN uses sine activations. The independent TOUCKAN assigns one inner network per input dimension: the spatial branch wraps its input in the periodic embedding ($k_{\max} = 15$), the time branch acts on the raw coordinate, and each branch is an MLP with four hidden layers of width $40$ followed by a head producing $Q = 40$ bounded channels, for $14{,}840$ parameters. The full-head and diagonal variants share one embedding and one trunk between the two input dimensions, as in \ref{app:impl-luthen-eff}. The shared embedding spans the full period in $x$, which preserves the hard periodicity. The time interval $t \in [0,1]$ covers at most half of that period. On this interval the embedding is smooth and one-to-one. For the time coordinate it therefore acts as an ordinary, non-periodic feature expansion. The full-head model uses a trunk of six hidden layers of width $40$ with one linear head per dimension producing $Q = 64$ channels ($15{,}328$ parameters). The diagonal model uses a trunk of six hidden layers of width $44$, a single shared head, and one trainable scalar per dimension and channel ($14{,}912$ parameters). All three TOUCKANs use outer Chebyshev expansions of degree $P = 9$. The MLP baseline applies the periodic embedding ($k_{\max} = 5$) to the spatial coordinate, concatenates it with $t$, and processes the result with five hidden layers of width $59$ ($14{,}928$ parameters). The TNN~\cite{wang2022tensor} represents the field as the separable sum $f(t,x) = \sum_{j=1}^{R} c_j\,\varphi_j(t)\,\psi_j(x)$ of rank $R = 24$ with trainable coefficients $c_j$. Two factor networks with three hidden layers of width $53$ produce the temporal factors $\varphi_j$ and the spatial factors $\psi_j$, and the spatial input passes through the same periodic embedding ($k_{\max} = 5$), which keeps the periodicity exact. Both networks use orthogonal weight initialization with biases at $0.5$, and each factor is normalized by its root-mean-square value on its one-dimensional quadrature grid (the $512$-node spatial rule and a $256$-node midpoint rule in time), following~\cite{wang2022tensor}. The model has $14{,}753$ parameters.

The exact-mass reprojection implements Section~\ref{sec:exact-mass} with $w \equiv 1$ and $g \equiv 1$. At each loss evaluation, the spatial branch is evaluated at $512$ midpoint quadrature nodes. The branch depends on $x$ only through the periodic embedding, so the integrands $\widetilde{\Psi}^{\,k}$ are smooth and periodic and the midpoint rule converges spectrally. The weighted moments then follow from the binomial recursion, with the time branch contributing known powers, and the contraction with the outer Chebyshev coefficients through the monomial change of basis yields $\mathcal{Q}[f](t)$ in closed form \eqref{eq:tkan-conserved-closed-form}. The output is the projection \eqref{eq:tkan-mass-projection} with $\mathcal{Q}_w[g] = |\Omega| = 2$. The correction is constant in $x$, so it is invisible to the spatial derivatives in the residual and enters only through the time derivative. Its cost is one evaluation of the spatial branch at the $512$ quadrature nodes per loss call. The constraint is thus enforced exactly for the $512$-node quadrature functional. Since the integrand is smooth and periodic, the continuum mass is conserved to the spectral accuracy of the midpoint rule, which we verify directly by refining the integration grid (~\ref{app:sensitivity}, Figure~\ref{fig:ch-continuum-mass}).

The same reprojection carries over to the TNN because its representation is already separable. The mass functional factorizes as $\mathcal{Q}[f](t) = \sum_{j} c_j\,\varphi_j(t) \int_{\Omega} \psi_j(x)\,dx$, where the one-dimensional factor integrals are evaluated on the same $512$-node midpoint rule, and the output is projected exactly as in \eqref{eq:tkan-mass-projection}. As in the TOUCKAN case, the cost is one evaluation of the spatial factor network at the quadrature nodes per loss call.

The constrained TOUCKANs are trained by fitting first and projecting second. Each architecture is first trained without the constraint: $6{,}250$ iterations of the Lion optimizer \cite{chen2023symbolic} (initial learning rate $10^{-3}$, exponential decay by $0.9$ every $5{,}000$ steps, outer coefficients at $20\%$ of the base rate), followed by $20{,}000$ SSBroyden iterations \cite{jnini2026curvature} with backtracking line search, executed in chunks of $100$ with the inverse-Hessian approximation carried across chunks and a fresh vRBA-sampled batch per chunk. The exact-mass reprojection is then applied, the converged parameters are loaded unchanged (the constrained and unconstrained models share the same parameter tree), and the model is polished for a further $20{,}000$ SSBroyden iterations under the constraint. The order matters: training the constrained model from a random initialization stalls, because the projection makes the output invariant to the spatially constant mode of the raw network, and this null space blocks the escape path that the unconstrained optimization uses early in training. The warm start instead places the model on the constraint manifold near the solution, and the second-order stage only polishes. The constrained TNN follows the same fit-then-project schedule as the TOUCKANs. The MLP baseline, which carries no constraint, is trained from scratch with the same pipeline at $6{,}250$ first-order and $40{,}000$ SSBroyden iterations, so every model in Table~\ref{tab:cahn-hilliard} receives the same total second-order budget. Errors are computed on the exact $512 \times 201$ reference grid without interpolation, and the mass drift is the maximum over the snapshots of the grid-quadrature deviation from $\mathcal{Q}_0$.

For reference, the unconstrained twins at the warm-start point ($6{,}250$ Lion and $20{,}000$ SSBroyden iterations, before the exact-mass reprojection is applied) reach relative $L_2$ errors of $6.6\times10^{-4}$ (independent), $3.3\times10^{-3}$ (full-head), and $2.8\times10^{-2}$ (diagonal), with mass drifts of $6.4\times10^{-4}$, $3.6\times10^{-3}$, and $3.2\times10^{-2}$, respectively. The TNN twin reaches $4.7\times10^{-3}$ with a mass drift of $5.2\times10^{-3}$. In each case the unconstrained drift sits at the level of the field error, consistent with the soft-conservation behavior of the MLP baseline, while the constrained continuation reduces the drift to machine precision (Table~\ref{tab:cahn-hilliard}). Because the constrained polish also continues the optimization, these paired values do not isolate the effect of the reprojection on field accuracy. An equal-budget unconstrained continuation is left for future work.

\subsection{Schr\"odinger eigenvalue problem}
\label{app:impl-schrodinger}

The test problem is chosen to be genuinely multivariate and yet exactly solvable, so that errors can be measured precisely. It is the coupled harmonic oscillator: the potential in \eqref{eq:schrodinger} is $V(\mathbf{x}) = \tfrac{1}{2}\mathbf{x}^{\top}A\mathbf{x}$, and specifying the problem means specifying the matrix $A$. We build it as $A = R\,\mathrm{diag}(\omega_1^{2},\dots,\omega_D^{2})\,R^{\top}$ from two ingredients. The frequencies $\omega_i = 1 + 0.2\,(i-1)$, so $\boldsymbol{\omega} = (1.0, 1.2)$ in $D = 2$ and $(1.0, 1.2, 1.4)$ in $D = 3$, set the stiffness of the $D$ normal modes. The frequencies are commensurate, so the full spectrum $E(\mathbf{n})$ contains degeneracies at higher energies. The four target eigenvalues, however, are non-degenerate in each dimension considered ($1.1$, $2.1$, $2.3$, $3.1$ for $D = 2$; $1.8$, $2.8$, $3.0$, $3.2$ for $D = 3$; $3.5$, $4.5$, $4.7$, $4.9$ for $D = 5$). Hence the target eigenfunctions are unique up to sign and per-state errors are well defined. The rotation $R \in \mathbb{R}^{D\times D}$ is a dense orthogonal matrix, the QR factor of a standard normal matrix drawn with NumPy \texttt{RandomState(7)}. This rotation mixes all coordinates, so the potential contains every cross term $x_d x_{d'}$ and neither the potential nor the eigenfunctions separate in the input variables. The surrogate therefore cannot exploit any one-dimensional structure of the problem.

The exact reference is obtained by undoing the rotation. In the rotated coordinates $\mathbf{y} = R^{\top}\mathbf{x}$, the Hamiltonian splits into $D$ independent one-dimensional oscillators with frequencies $\omega_i$, whose solutions are classical. The eigenfunctions are products of one-dimensional Hermite functions of the rotated coordinates, indexed by occupation numbers $\mathbf{n} = (n_1,\dots,n_D) \in \mathbb{N}_0^{D}$, and the eigenvalues are $E(\mathbf{n}) = \sum_{i=1}^{D}\omega_i\,(n_i + \tfrac{1}{2})$; sorting them gives the lowest $N_s$ eigenpairs. These closed forms are used only to evaluate errors. Training never sees them.

\begin{algorithm}[!ht]
\caption{Hard-normalized eigensolver training for the Schr\"odinger problem.}
\label{alg:schrodinger-training}
\begin{algorithmic}[1]
\Require collocation set $\{\mathbf{x}_i\}_{i=1}^{N_C}$ drawn from $\rho$; TOUCKAN parameters $\theta$ with $N_s$ outer coefficient columns; eigenvalues $E_1,\dots,E_{N_s}$ initialized below the spectrum; weights $\lambda_B$, $w_R$; batch size $B$; first-order iterations $N_1$; SSBroyden iterations $N_2$ in chunks of $c$
\Ensure eigenpairs $(E_a, \psi_a)$ with $\psi_a = f_a\,\chi/\sqrt{G_{aa}}$, normalized identically in $\theta$
\Function{Loss}{$\theta, E$; $\mathcal{B}, \mathcal{B}'$} \Comment{$\mathcal{B}$: residual batch; $\mathcal{B}'$: uniform batch for $K$}
    \State $\big(f_1(\mathbf{x}),\dots,f_{N_s}(\mathbf{x})\big) \gets \mathrm{TOUCKAN}_{\theta}(\mathbf{x})$,\; $\forall\,\mathbf{x}$ in both batches \Comment{one network \eqref{eq:tkan-architecture}, $N_s$ output columns}
    \State $G \gets \mathbb{E}_{\rho}[f_a f_b]$ from $\theta$, by the cross-moment recursion \eqref{eq:tkan-cross-recursion} \Comment{deterministic Gram matrix, differentiable}
    \State $W(\mathbf{x}) \gets \tfrac{D}{4} - \tfrac{1}{8}|\mathbf{x}|^{2} + \tfrac{1}{2}\,\mathbf{x}^{\top}A\,\mathbf{x}$ \Comment{envelope-transformed potential}
    \State $r_a \gets -\tfrac{1}{2}\Delta f_a + \tfrac{1}{2}\,\mathbf{x}\cdot\nabla f_a + W f_a - E_a f_a$,\; $\forall\,\mathbf{x}$ in both batches \Comment{residual Eq. \eqref{eq:app-schrodinger-residual}}
    \State $\lambda_{i,a} \gets \gamma\lambda_{i,a} + \eta\, |r_a(\mathbf{x}_i)| / \max_j |r_a(\mathbf{x}_j)|$ \Comment{vRBA weights}
    \State $\mathcal{L}_{\mathrm{res}} \gets \frac{1}{|\mathcal{B}|}\sum_{\mathbf{x}_i \in \mathcal{B}}\sum_{a} \big(\lambda_{i,a}\, r_a(\mathbf{x}_i)/\sqrt{G_{aa}}\big)^{2}$ \Comment{Monte Carlo}
    \State $\widehat{G}_{ab} \gets G_{ab}/\sqrt{G_{aa}G_{bb}}$
    \State $\mathcal{L}_{\mathrm{orth}} \gets \sum_{a \neq b} \widehat{G}_{ab}^{\,2}$ \Comment{exact, free of Monte Carlo noise}
    \State $K_{ab} \gets \frac{1}{|\mathcal{B}'|}\sum_{\mathbf{x}_i \in \mathcal{B}'} f_a(\mathbf{x}_i)\big(r_b(\mathbf{x}_i) + E_b f_b(\mathbf{x}_i)\big)$ \Comment{energy matrix Eq. \eqref{eq:app-schrodinger-K}, Monte Carlo}
    \State $K \gets \frac{1}{2}(K + K^{\top})$
    \State \Return $\mathcal{L}_{\mathrm{res}} + \lambda_B\, \mathcal{L}_{\mathrm{orth}} + w_R \operatorname{tr}(G^{-1}K)$ \Comment{Ky Fan selection}
\EndFunction
\State \textbf{Stage 1 — first-order, all $N_s$ states simultaneously:}
\For{$e = 1,\dots,N_1$} \Comment{one gradient step per iteration}
    \State $\mathcal{B} \gets$ $B$ pool points, sampled with probability $\propto$ their vRBA weights \Comment{focuses high-residual regions}
    \State $\mathcal{B}' \gets$ $B'$ pool points, sampled uniformly \Comment{unbiased, for $K$}
    \State $g \gets \nabla_{(\theta, E)}\, \Call{Loss}{\theta, E; \mathcal{B}, \mathcal{B}'}$
    \State $(\theta, E) \gets \mathrm{SOAP}\big((\theta, E),\, g\big)$ \cite{vyas2024soap} \Comment{lr $10^{-3}$, decay $0.9$ per $5{,}000$ steps}
\EndFor
\State \textbf{Stage 2 — SSBroyden polish \cite{jnini2026curvature}:}
\State $\mathbf{w} \gets \mathrm{vec}(\theta, E)$,\; $H \gets I$
\For{$k = 1,\dots,N_2/c$}
    \State draw fresh $\mathcal{B}$ and $\mathcal{B}'$ as in Stage 1
    \State run $c$ SSBroyden iterations with backtracking line search on \Call{Loss}{$\mathbf{w}$; $\mathcal{B}, \mathcal{B}'$}, from $(\mathbf{w}, H)$
    \State $H \gets \frac{1}{2}(H + H^{\top})$; reset $H \gets I$ if not positive definite
    \State keep the lowest-residual $\mathbf{w}$ seen so far
\EndFor
\end{algorithmic}
\end{algorithm}

Solving \eqref{eq:schrodinger} with a network means finding $N_s$ functions $\psi_1,\dots,\psi_{N_s}$ and $N_s$ numbers $E_1,\dots,E_{N_s}$ that meet four requirements at once. (i) Each pair must satisfy the equation, $\mathcal{H}\psi_a = E_a\psi_a$. (ii) Each state must be normalized, $\int \psi_a^{2}\,d\mathbf{x} = 1$. Without this requirement, $\psi \equiv 0$ solves (i) for every $E$. (iii) The states must be mutually orthogonal, $\int \psi_a\psi_b\,d\mathbf{x} = 0$. Without this requirement, all $N_s$ outputs can converge to the same state. (iv) The family must consist of the lowest $N_s$ eigenpairs. Without this requirement, any $N_s$ distinct eigenstates satisfy (i)--(iii). The method assigns one ingredient to each requirement: a residual loss for (i), hard normalization for (ii), a deterministically evaluated orthogonality penalty for (iii), and an energy-minimization term for (iv). The eigenvalues $E_a$ are trainable scalars throughout.

The representation starts from the known decay of the solutions. Bound states of a confining potential decay rapidly at infinity. Each state is therefore written as a product $f_a\,\chi$, where $f_a$ is one output of a single TOUCKAN and $\chi(\mathbf{x}) = (2\pi)^{-D/4}\,e^{-|\mathbf{x}|^{2}/4}$ is a fixed Gaussian envelope. The envelope plays two roles. First, it carries the decay exactly, so no boundary loss and no domain truncation are needed. Second, its square is the standard normal density, $\chi^{2} = \rho$, so the inner product of any two states becomes a Gaussian expectation of a product of network outputs, $\int (f_a\chi)(f_b\chi)\,d\mathbf{x} = \mathbb{E}_{\rho}[f_a f_b]$. This is precisely the quantity that the cross-moment recursion of Section~\ref{sec:variance-recursion} computes in closed form.

This observation makes requirement (ii) free. Collect the expectations into the Gram matrix $G \in \mathbb{R}^{N_s \times N_s}$ with entries $G_{ab} = \mathbb{E}_{\rho}[f_a f_b]$. The diagonal entry $G_{aa}$ is the squared norm of the product $f_a\chi$. Each state is normalized by dividing by it, $\psi_a = f_a\,\chi/\sqrt{G_{aa}}$, which is the hard-normalized form \eqref{eq:tkan-born-density}. The identity $\int \psi_a^{2}\,d\mathbf{x} = 1$ then holds for every value of the network parameters, at every stage of training; normalization never appears in the loss. The construction also removes the trivial solution: the loss below is invariant under rescaling $f_a \mapsto c\,f_a$, so shrinking a state toward zero changes nothing and is never rewarded.

Requirement (i) becomes a residual loss, posed for $f_a$ rather than for $\psi_a$. Substituting $\psi_a = f_a\,\chi/\sqrt{G_{aa}}$ into \eqref{eq:schrodinger} and dividing by $\chi$ gives the residual
\begin{align} \label{eq:app-schrodinger-residual}
r_a
=
-\tfrac{1}{2}\Delta f_a
+ \tfrac{1}{2}\,\mathbf{x}\cdot\nabla f_a
+ W(\mathbf{x})\, f_a
- E_a f_a,
\qquad
W(\mathbf{x})
=
\tfrac{D}{4} - \tfrac{|\mathbf{x}|^{2}}{8} + \tfrac{1}{2}\,\mathbf{x}^{\top} A\,\mathbf{x},
\end{align}
so that $r_a \chi = (\mathcal{H} - E_a)(f_a \chi)$. The derivatives of $\chi$ are absorbed analytically into the drift and potential terms of \eqref{eq:app-schrodinger-residual}. No derivative ever acts on $\chi$ numerically. The collocation set is drawn from $\rho$ itself, and $\mathbb{E}_{\rho}\big[(r_a/\sqrt{G_{aa}})^{2}\big] = \int (\mathcal{H}\psi_a - E_a\psi_a)^{2}\, d\mathbf{x}$. The residual batches, however, are resampled from this set with probabilities proportional to the vRBA weights, and the summands additionally carry those weights (Eq.  \eqref{eq:app-schrodinger-loss}), with no inverse-probability correction. The optimized objective is therefore an adaptively reweighted $\rho$-residual rather than an unbiased estimate of this integral, but the two have the same minimizers, since both vanish exactly when $r_a \equiv 0$. The residual is also the only term in which the trainable eigenvalues appear, so it is what pins each $E_a$ to its state. Monte Carlo enters the formulation only through such Hamiltonian expectations: this residual, and the numerator of the selection term below. Every norm and overlap is deterministic and sampling-free.

The full Gram matrix is deterministic and differentiable at every training step, exact up to the one-dimensional quadrature error of the joint leaf moments. The states share one inner block, so $f_a$ and $f_b$ are polynomials in the same combination variables $\xi_q$, and the product $f_a f_b$ expands in exactly the cross-moments $m_{q,q'}^{(j,j')} = \mathbb{E}[\xi_q^{\,j}\xi_{q'}^{\,j'}]$ of Section~\ref{sec:variance-recursion}. Each Gram matrix element is therefore the bilinear analogue of the second-moment reduction \eqref{eq:tkan-second-moment-reduction}, $G_{ab} = \sum_{q,q'}\sum_{j,j'} \beta^{(a)}_{q,j}\beta^{(b)}_{q',j'}\, m_{q,q'}^{(j,j')}$, where $\beta^{(a)}$ are the monomial-form outer coefficients of state $a$. The cross-moments are computed once per step: joint leaf moments on a $128$-node Gauss--Hermite rule per dimension, then the bivariate binomial recursion \eqref{eq:tkan-cross-recursion} across dimensions. All $N_s^{2}$ matrix elements reuse the same cross-moments and differ only in the coefficient contraction. No sampling is involved at any point, and the result is differentiated through like any other layer of the network.

In summary, the loss combines three terms,
\begin{align} \label{eq:app-schrodinger-loss}
\mathcal{L}
=
\frac{1}{B}\sum_{i=1}^{B}\sum_{a=1}^{N_s}
\Big(\lambda_{i,a}\, \frac{r_a(\mathbf{x}_i)}{\sqrt{G_{aa}}}\Big)^{2}
\;+\;
\lambda_B \sum_{a \neq b} \widehat{G}_{ab}^{\,2}
\;+\;
w_R\, \operatorname{tr}\big(G^{-1}K\big),
\qquad
\widehat{G}_{ab} = \frac{G_{ab}}{\sqrt{G_{aa}G_{bb}}}.
\end{align}
The three terms implement requirements (i), (iii), and (iv) in order; requirement (ii) is built into the representation and needs no term. The first term is the eigen-residual of (i), with per-point vRBA weights $\lambda_{i,a}$ \cite{toscano2026variational}. The second term enforces the orthogonality of (iii). It is computed entirely from the deterministically evaluated Gram: the matrix $G$ delivered by the recursion is normalized to $\widehat{G}$, and the squared off-diagonal entries are summed. No collocation points enter this term, so it carries no sampling noise, and the diagonal of $\widehat{G}$ is one by construction, so it contains no normalization component. Its weight is fixed at $\lambda_B = 100$ in both training stages.

The trace term implements requirement (iv): it measures the total energy of the family, and minimizing it selects the lowest states. The matrix $K$ holds the energy matrix elements of the unnormalized states,
\begin{align} \label{eq:app-schrodinger-K}
K_{ab}
=
\int (f_a\chi)\, \mathcal{H}\,(f_b\chi)\, d\mathbf{x}
=
\mathbb{E}_{\rho}\big[f_a\, \mathcal{H}^{\chi} f_b\big],
\qquad
\mathcal{H}^{\chi} f = r + E f,
\end{align}
where $\mathcal{H}^{\chi} f := \chi^{-1}\,\mathcal{H}(\chi f)$ is the envelope-transformed Hamiltonian. By \eqref{eq:app-schrodinger-residual}, it evaluates as $\mathcal{H}^{\chi} f = r + E f$. Both ingredients, the residual and the state, are already available. For a single state, $K_{aa}/G_{aa}$ is its energy, the Rayleigh quotient. For the family of states, $\operatorname{tr}(G^{-1}K)$ is the natural generalization: it equals the sum of the energies of the $N_s$ states after orthonormalization, and it depends only on the subspace they span. By the Ky Fan principle \cite{fan1949theorem}, its minimum over all $N_s$-dimensional subspaces is the sum of the $N_s$ lowest eigenvalues, attained exactly when the span is the lowest eigenspace. This is requirement (iv). Three properties make the term safe. Copying an already occupied state does not lower the trace, because the trace sees only the span. Collapse of the family onto fewer than $N_s$ directions drives $G$ toward singularity and raises the trace, so collapse acts as a barrier rather than a shortcut. Indeed, no collapse or singular solve was observed in any reported run. 
The identity $\mathcal{H}^{\chi} f_a = r_a + E_a f_a$ cancels the trainable eigenvalues inside the trace, so the term steers only the network functions and leaves the $E_a$ to the residual. The formulation assumes $G \succ 0$, in particular $G_{aa} > 0$ in the normalization; this held in every reported run. At the trained parameters, the Gram matrix is well conditioned in every dimension, with smallest eigenvalue bounded away from zero and condition number of order $10^{1}$--$10^{2}$. The off-diagonal Gram residual that training monitors falls below $10^{-5}$ in every case (Figure~\ref{fig:schrodinger-convergence}, bottom row), so the family stays near-orthogonal and far from collapse. Table~\ref{tab:gram-conditioning} reports these quantities. The term is implemented as a direct dense solve, $\operatorname{tr}\big(G^{-1}K\big) = \operatorname{tr}\big(\mathrm{solve}(G, K)\big)$, with no added jitter or regularization. The numerator $K$ is the second and last Monte Carlo estimate, computed on its own uniformly drawn batch and symmetrized. This estimate is unbiased for the exact energy couplings under the stated uniform sampling. The denominator $G$ is deterministic, exact up to the one-dimensional quadrature error. Because the exact (expected) trace is stationary precisely at eigenspaces, the trace term introduces no bias at convergence for the exact objective and is never annealed. The weight for the trace term is $w_R = 0.1$ throughout.

\begin{table}[H]
\centering
\caption{Conditioning of the Gram matrix $G$ at the trained parameters for the Schr\"odinger eigenstate runs ($N_s = 4$ states, evaluated by the exact recursion at the production order $M = 128$). The condition number $\kappa(G) = \lambda_{\max}/\lambda_{\min}$ stays at order $10^{1}$--$10^{2}$ with $\lambda_{\min}(G) > 0$ at the trained parameters, so the direct solve $\mathrm{solve}(G, K)$ is stable without regularization. Successful completion of every run indicates no observed singular solve, but these final-parameter values do not quantify the conditioning during optimization. The last column reports the off-diagonal Gram residual at the final training iteration (Figure~\ref{fig:schrodinger-convergence}).}
\label{tab:gram-conditioning}
\begin{tabular}{ccccc}
\hline
$D$ & $\lambda_{\min}(G)$ & $\lambda_{\max}(G)$ & $\kappa(G)$ & off-diag.\ residual \\
\hline
2 & $3.97\times10^{-4}$ & $8.96\times10^{-2}$ & $2.26\times10^{2}$ & $2.73\times10^{-9}$ \\
3 & $4.58\times10^{-5}$ & $1.42\times10^{-3}$ & $3.11\times10^{1}$ & $4.88\times10^{-8}$ \\
5 & $6.99\times10^{-9}$ & $2.23\times10^{-7}$ & $3.20\times10^{1}$ & $5.22\times10^{-6}$ \\
\hline
\end{tabular}
\end{table}

One network represents all $N_s$ states. The inner block is the intermediate sharing form of Section~\ref{sec:tractable-architecture}: a shared random Fourier embedding ($15$ frequencies, bandwidth $\sigma = 1$, matched to the unbounded Gaussian inputs), a shared trunk of four hidden layers of width $32$ with $\tanh$ activations, and one full linear head per input dimension producing $Q = 64$ branch channels. Each branch ends in a $\tanh$ and the channels are averaged over dimensions, so every combination variable lies in $(-1,1)$. The outer block evaluates per-branch Chebyshev expansions of degree $P = 9$ with $N_s$ coefficient columns: the states share the same combination variables and differ only through their outer coefficients. Together with the $N_s$ trainable eigenvalues, this gives the parameter counts of Table~\ref{tab:schrodinger}.

Training is simultaneous and assumes nothing about the spectrum. All $N_s$ states are active from the first iteration; there is no sequential scan over eigenvalues and no scheduled sweep of trial energies. The $E_a$ are initialized at small distinct values below the bottom of the spectrum, for which a lower bound on the potential suffices, and rise to their converged values by tracking the Rayleigh quotients of their states through the residual term. The optimization follows the two-stage template of the other experiments, in double precision on $N_C = 200{,}000$ collocation points drawn by mapping a Latin hypercube sample through the standard normal quantile function, with batch size $20{,}000$. The full procedure is summarized in Algorithm~\ref{alg:schrodinger-training}. Stage one runs $20{,}000$ first-order iterations (SOAP \cite{vyas2024soap}; initial learning rate $10^{-3}$, decayed by $0.9$ every $5{,}000$ steps, outer coefficients at $20\%$ of the base rate), with vRBA weights (exponential potential, $\eta = 0.01$, $\gamma = 0.999$) that also define the batch sampling distribution. Stage two runs $20{,}000$ SSBroyden iterations \cite{jnini2026curvature} with backtracking line search, in chunks of $100$ with a fresh vRBA-sampled batch per chunk and the inverse-Hessian approximation carried across chunks. The parameters with the lowest mean residual over the second stage are kept. As in the other experiments, all reported results are single-run observations at the stated seed.

The errors of Table~\ref{tab:schrodinger} follow the definitions of Eq. \eqref{eq:rel-error} and Eq. \eqref{eq:rel-l2}, with the exact spectrum and eigenfunctions as references, the sign of each $\psi_a$ aligned to the reference (eigenfunctions are defined only up to sign), and the $L_2$ norms estimated on an independent set of $20{,}000$ test points drawn from $\rho$, separate from the collocation set used for training.

The formulation inherits the problem statement of Jin et al.\ \cite{jin2022physics} but replaces its constraint machinery. We retain a residual loss with trainable eigenvalues, normalization and orthogonality requirements, and decay at infinity. In their MLP-based method, normalization and orthogonality are evaluated through collocation sums, and states are obtained sequentially through a sampled orthogonality penalty with patience-based solution recognition and symmetry switching. The trial-energy drive inherited from their earlier formulation is disabled in the reported experiments, and infinity is replaced by a truncated domain. These mechanisms introduce sampling or truncation error, and the reported experiments solve one-dimensional equations. TOUCKAN instead handles the integral constraints without sampling: normalization is imposed by division, and orthogonality is enforced through the deterministic Gram matrix. State selection uses a single variational trace term, while the envelope carries the decay over all of $\mathbb{R}^{D}$. This formulation can therefore be trained on non-separable multivariate problems while learning all $N_s$ states simultaneously.

Figures~\ref{fig:schrodinger-states-2d} and \ref{fig:schrodinger-states-5d} complement Figure~\ref{fig:schrodinger-states} of the main text with the learned eigenstates of the $D = 2$ and $D = 5$ runs.

\begin{figure}[H]
    \centering
    \includegraphics[width=0.99\linewidth]{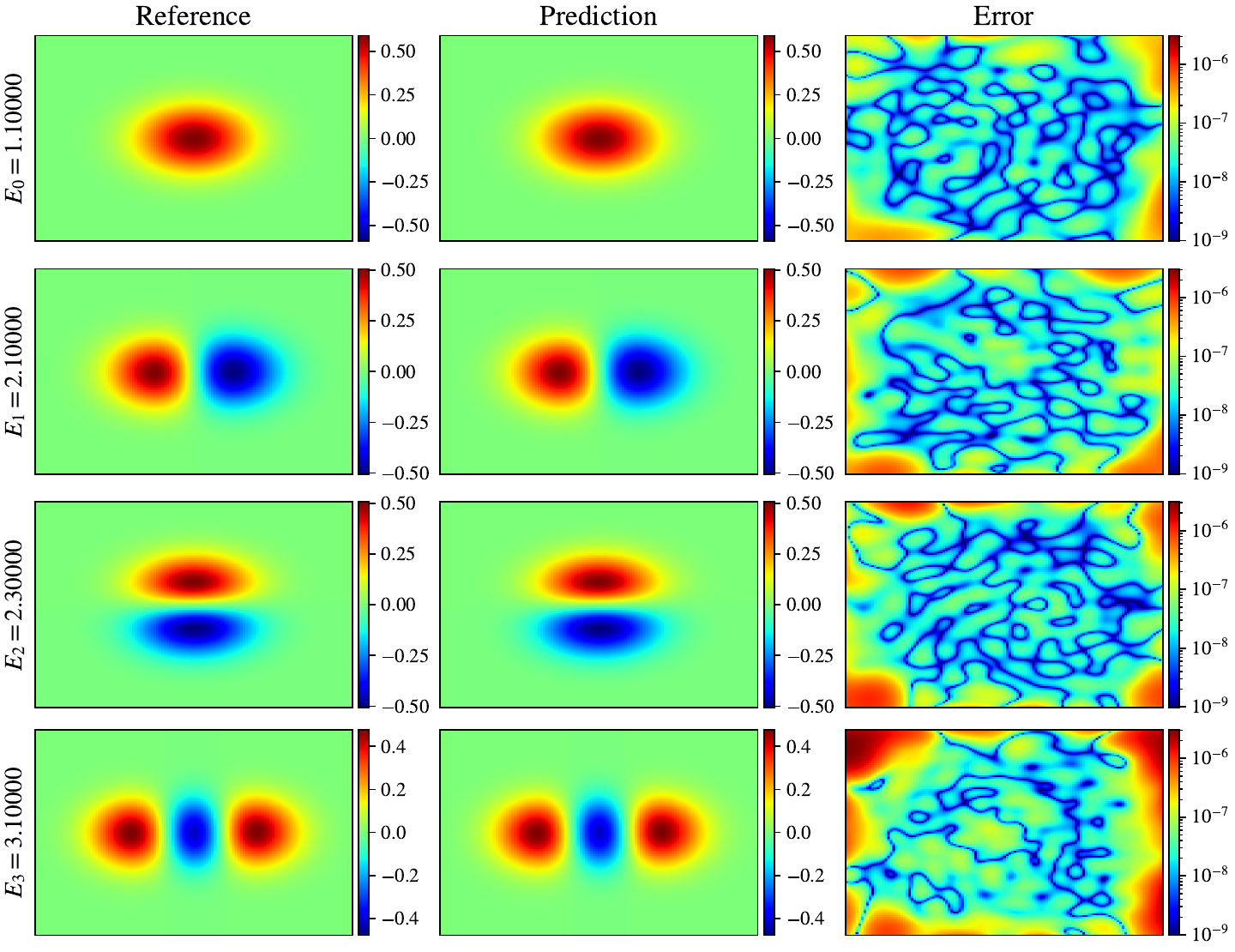}
    \caption{Lowest four eigenstates of the coupled two-dimensional oscillator: closed-form reference (left), learned surrogate (center), and absolute error on a logarithmic color scale (right). Each row is labeled by its learned eigenvalue $E_a$.}
    \label{fig:schrodinger-states-2d}
\end{figure}

\begin{figure}[H]
    \centering
    \includegraphics[width=0.99\linewidth]{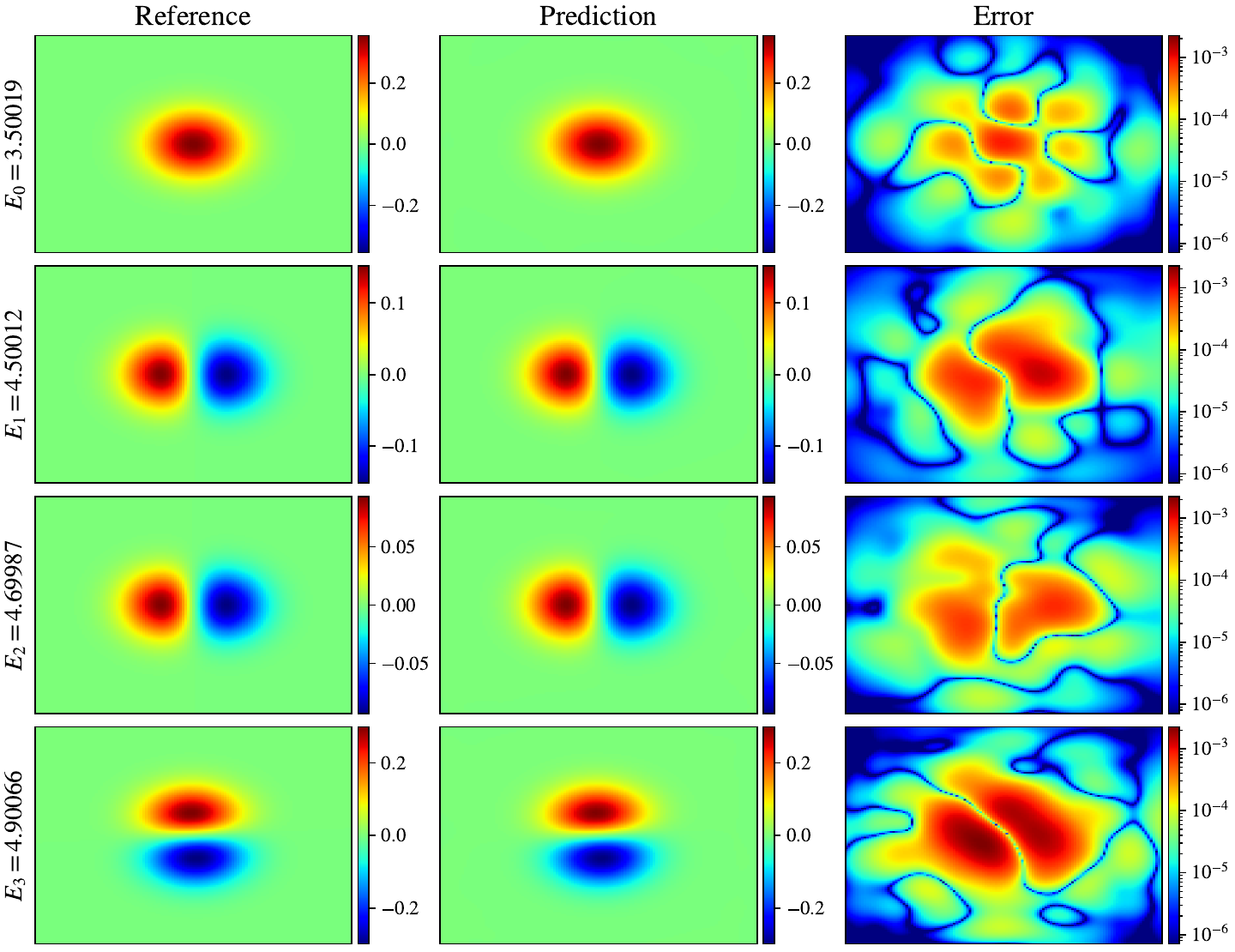}
    \caption{Lowest four eigenstates of the coupled five-dimensional oscillator on the $x_3 = x_4 = x_5 = 0$ plane: closed-form reference (left), learned surrogate (center), and absolute error on a logarithmic color scale (right). Each row is labeled by its learned eigenvalue $E_a$. On this slice, the $E_1$ and $E_2$ states have similar projected profiles, while the Gram residuals in Table~\ref{tab:schrodinger} quantify their full-dimensional distinction.}
    \label{fig:schrodinger-states-5d}
\end{figure}

\end{document}